\title{The Parameterized Complexity of Learning Monadic Second-Order Logic}
\author{Steffen {van Bergerem}}{Humboldt-Universität zu Berlin, Germany}{steffen.van.bergerem@informatik.hu-berlin.de}{https://orcid.org/0000-0002-5212-8992}{This work was funded by the Deutsche Forschungsgemeinschaft (DFG, German Research Foundation) -- project number 431183758 (gefördert durch die Deutsche Forschungsgemeinschaft (DFG) -- Projektnummer 431183758).}
\author{Martin Grohe}{RWTH Aachen University, Germany}{grohe@informatik.rwth-aachen.de}{https://orcid.org/0000-0002-0292-9142}{Funded by the European Union (ERC, SymSim, 101054974). Views and opinions expressed are however those of the author(s) only and do not necessarily reflect those of the European Union
or the European Research Council. Neither the European Union nor the granting authority can be held responsible for them.}
\author{Nina Runde}{RWTH Aachen University, Germany}{runde@lics.rwth-aachen.de}{https://orcid.org/0009-0000-4547-1023}{This work was funded by the Deutsche Forschungsgemeinschaft (DFG, German Research Foundation) -- project number 453349072 (gefördert durch die Deutsche Forschungsgemeinschaft (DFG) -- Projektnummer 453349072).}
\authorrunning{S. van Bergerem, M. Grohe, and N. Runde}
\keywords{monadic second-order definable concept learning,
agnostic probably approximately correct learning,
parameterized complexity,
clique-width,
fixed-parameter tractable,
Boolean classification,
supervised learning,
monadic second-order logic}
\tikzset{dropshadow/.style={drop shadow={opacity=.4, shadow xshift=.25ex, shadow yshift=-.25ex}},
  vertex/.style={draw, semithick, circle, inner sep=.8ex, fill=white, dropshadow},
  small vertex/.style={draw, semithick, circle, inner sep=.3ex, fill=white, dropshadow},
  positive example/.style={circle, inner sep=1.8ex, fill=ibm-indigo},
  negative example/.style={circle, inner sep=1.8ex, fill=ibm-orange},
  edge/.style={semithick},
}
\newcommand\problemDefPara[4]{\vspace{0.3cm}
\boxed{
\begin{minipage}{0.9\textwidth}
  \noindent \textsc{{#1}} \vspace{0.2cm}\\
    {\bfseries Instance}:    #2 \\
    {\bfseries Parameter}:   #3 \\
    {\bfseries Problem}:  #4
\end{minipage}
}
\vspace{0.3cm}
}
\newcommand\problemDefParaNoName[3]{\boxed{
\begin{minipage}{0.9\columnwidth}
  {\bfseries Instance}:  #1 \\
  {\bfseries Parameter}: #2 \\
  {\bfseries Problem}:   #3
\end{minipage}
}
}
\definecolor{ibm-ultramarine}{HTML}{648fff}
\definecolor{ibm-indigo}{HTML}{785ef0}
\definecolor{ibm-magenta}{HTML}{dc267f}
\definecolor{ibm-orange}{HTML}{fe6100}
\definecolor{ibm-gold}{HTML}{ffb000}
\newcommand{\problemFont}[1]{\ensuremath{\textup{\textsc{#1}}}\xspace}
\newcommand{\MSOLearn}{\problemFont{MSO-Consistent-Learn}}
\newcommand{\uMSOLearn}{\problemFont{1D-MSO-Consistent-Learn}}
\newcommand{\PacMSOLearn}{\problemFont{MSO-PAC-Learn}}
\newcommand{\MSOMc}{\problemFont{MSO-Mc}}
\newcommand{\FOMc}{\problemFont{FO-Mc}}
\newcommand{\ComplexityClassFont}[1]{\ensuremath{\textup{\textsf{#1}}}\xspace}
\newcommand{\FPT}{\ensuremath{\ComplexityClassFont{FPT}}}
\newcommand{\AWstar}{\ensuremath{\ComplexityClassFont{AW}[*]}}
\newcommand{\Wone}{\ensuremath{\ComplexityClassFont{W}[1]}}
\newcommand{\LogicFont}[1]{\ComplexityClassFont{#1}}
\newcommand{\FO}{\LogicFont{FO}}
\newcommand{\MSO}{\LogicFont{MSO}}
\newcommand{\Lor}{\bigvee}
\newcommand{\Land}{\bigwedge}
\newcommand{\type}[3]{\ensuremath{\textup{tp}^{#2}_{#1}(#3)}}
\renewcommand{\phi}{\varphi}
\renewcommand{\epsilon}{\varepsilon}
\newcommand{\Structure}[1]{\ensuremath{\mathcal{#1}}}
\newcommand{\T}{\Structure{T}}
\newcommand{\CC}{{\mathcal C}}
\newcommand{\CR}{{\mathcal R}}
\newcommand{\Ca}{\mathcal a}
\newcommand{\deff}{\coloneqq}
\newcommand{\abs}[1]{\left\lvert#1\right\rvert}
\DeclareMathOperator{\cw}{\ensuremath{\textsf{cw}}}
\DeclareMathOperator*{\qr}{qr}
\DeclareMathOperator*{\VC}{VC}
\newcommand{\Tp}{\mathsf{Tp}}
\newcommand{\tp}{\mathsf{tp}}
\newcommand{\btheta}{{\vec \theta}}
\renewcommand{\vec}[1]{\bar{#1}}
\newcommand{\set}[1]{\ensuremath{\{#1\}}}
\newcommand{\setc}[2]{\ensuremath{\set{#1 \mid #2}}}
\newcommand{\bigset}[1]{\ensuremath{\bigl\{ #1 \bigr\}}}
\newcommand{\bigsetc}[2]{\bigset{#1 \bigmid #2}}
\newcommand{\bigmid}{\mathrel{\big|}}
\newcommand{\NN}{\mathbb{N}}
\newcommand{\Nat}{{\mathbb N}}
\newcommand{\bigO}{\mathcal{O}}
\newcommand{\X}{\mathbb{X}}
\newcommand{\TS}{S}
\newcommand{\D}{\mathcal{D}}
\newcommand{\Hypo}{\mathcal{H}}
\DeclareMathOperator{\err}{err}
\newcommand{\ie}{i.\,e.}
\newcommand{\eg}{e.\,g.}
\begin{document}

\maketitle

\begin{abstract}
Within the model-theoretic framework for supervised learning
introduced by Grohe and Tur{\'{a}}n (TOCS 2004),
we study the parameterized complexity of learning concepts
definable in monadic second-order logic (\(\MSO\)).
We show that the problem of learning an \(\MSO\)-definable concept
from a training sequence of labeled examples is fixed-parameter tractable
on graphs of bounded clique-width, and that it is hard for the
parameterized complexity class para-NP on general graphs.

It turns out that an important distinction to be made is between
$1$-dimensional and higher-dimensional concepts,
where the instances of a $k$-dimensional concept are
$k$-tuples of vertices of a graph.
For the higher-dimensional case, we give a learning algorithm
that is fixed-parameter tractable in the size of the graph,
but not in the size of the training sequence, and we give a
hardness result showing that this is optimal.
By comparison, in the 1-dimensional case, we obtain
an algorithm that is fixed-parameter tractable in both.
\end{abstract}

\clearpage
\section{Introduction}

We study abstract machine-learning problems in a logical framework with a declarative view on learning, where the (logical) specification of concepts is separated from the choice of specific machine-learning models and algorithms (such as neural networks). Here we are concerned with the  computational complexity of learning problems in this logical learning framework, that is, the \emph{descriptive complexity of learning} \cite{van_bergerem_thesis_2023}.

Specifically, we consider Boolean classification problems that can be specified in monadic second-order logic (\(\MSO\)).
The input elements for the classification task come from a set \(\X\),
the \emph{instance space}.
A \emph{classifier} on \(\X\) is a function
\(c \colon \X \to \set{+,-}\).
Given a \emph{training sequence} \(\TS\) of labeled examples
\((x, \lambda) \in \X \times \set{+,-}\),
we want to find a classifier, called a \emph{hypothesis},
that explains the labels given in \(\TS\)
and that can also be used to predict the labels of elements
from \(\X\) not given as examples. In the logical setting, the instance space \(\X\) is a set of tuples from a (relational) structure,
called the \emph{background structure},
and classifiers are described by formulas of some logic, in our case \MSO, using parameters from the background structure.
This model-theoretic learning framework was introduced by Grohe and
Tur{\'{a}}n~\cite{grohe_learnability_2004} and further studied
in~\cite{grienenberger_learning_2019,grohe_learning_2017,grohe_learning_2017-1,
  van_bergerem_learning_2019,van_bergerem_parameterized_2022,
van_bergerem_learning_2021,van_bergerem_thesis_2023}.

We study these problems within the following well-known settings
from computational learning theory.
In the \emph{consistent-learning} model,
the examples are assumed to be generated using an unknown classifier,
the \emph{target concept}, from a known \emph{concept class}.
The task is to find a hypothesis that is consistent with the training sequence \(\TS\),
i.e.\ a function \(h \colon \X \to \set{+,-}\) such that
\(h(x) = \lambda\) for all \((x, \lambda) \in \TS\).
In Haussler's model of \emph{agnostic probably approximately correct (PAC) learning}~\cite{haussler_pac_1992},
a generalization of Valiant's \emph{PAC learning} model~\cite{valiant_theory_1984},
an (unknown) probability distribution \(\mathcal{D}\)
on \(\X \times \set{+,-}\) is assumed,
and training examples are drawn independently from this distribution.
The goal is to find a hypothesis that generalizes well,
i.e.\ one is interested in algorithms that
return with high probability a hypothesis with a small expected error
on new instances drawn from the same distribution.
For more background on PAC learning,
we refer to~\cite{kearns_introduction_1994,MohriRT18,shalev-shwartz_understanding_2014}.
In both settings, we require our algorithms to return
a hypothesis from a predefined \emph{hypothesis class}.

\paragraph*{Our Contributions}
In this paper, we study the parameterized complexity of the consistent-learning problem
\(\MSOLearn\) and the PAC-learning problem \(\PacMSOLearn\).
In both problems, we are given a graph \(G\) (the background structure)
and a sequence of labeled training examples of the form \((\vec{v},\lambda)\),
where \(\vec{v}\) is a \(k\)-tuple of vertices from \(G\) and \(\lambda \in \set{+, -}\).
The goal is to find a hypothesis of the form \(h_{\phi, \bar{w}}\)
for an \(\MSO\) formula \(\phi(\bar{x}; \bar{y})\) and a tuple \(\bar{w}\)
with \(h_{\phi, \bar{w}}(\bar{v}) \deff +\) if \(G \models \phi(\bar{v}; \bar{w})\)
and \(h_{\phi, \bar{w}}(\bar{v}) \deff -\) otherwise.
For \(\MSOLearn\), this hypothesis should be consistent with the given training examples.
For \(\PacMSOLearn\), the hypothesis should generalize well.
We restrict the complexity of allowed hypotheses by giving a bound \(q\) on the quantifier rank of \(\phi\)
and a bound \(\ell\) on the length of \(\bar{w}\).
Both \(q\) and \(\ell\) as well as the dimension \(k\) of the problem,
that is, the length of the tuples to classify,
are part of the parameterization of the problems.
A detailed description of \(\MSOLearn\) is given in \cref{sec:tract}.
The problem \(\PacMSOLearn\) is formally introduced in \cref{sec:pac}.

\begin{example}
\label{example_bipartite}
Assume we are given the graph $G$ depicted in \cref{fig_example},
the training sequence
${S = ((v_1,+),(v_3,+),(v_4,-),(v_5,-))}$, and
${k=1}$, ${\ell=1}$, $q=3$. Note that $k=1$ indicates that the instances are vertices of the input graph $G$. Furthermore, $\ell=1$ indicates that the specification may involve one vertex of the input graph as a parameter.
Finally, $q=3$ indicates that the formula specifying the hypothesis must have quantifier rank at most $3$.

Our choice of a hypothesis $h \colon V(G)\to\{+,-\}$ consistent with $S$ says that “there is a bipartite partition of the graph
such that all positive instances $x$ are on the same side as $v_2$ and all negative examples are on the other side.”
This hypothesis can be formally specified in $\MSO$ as  $h_{\phi,\bar{w}}$
for the \(\MSO\) formula
\(\phi(x;y) = \exists Z \bigl(\psi_{\text{bipartite}}(Z) \land Z(x) \land Z(y)\bigr)\)
and parameter setting \(\vec{w} = (v_2)\),
where \(\psi_{\text{bipartite}}(Z) = \forall z_1 \forall z_2
\Bigl(E(z_1,z_2) \rightarrow \neg \bigl(Z(z_1) {\leftrightarrow} Z(z_2)\bigr)\Bigr)\).
\end{example}

\begin{figure}
\centering
\begin{tikzpicture}
  \node[small vertex] (1) at (0, 1.15) {$v_1$};
\node[small vertex] (2) at (2, 1.15) {$v_2$};
\node[small vertex] (3) at (4, 1.15) {$v_3$};
\node[small vertex] (4) at (0, 0) {$v_4$};
\node[small vertex] (5) at (2, 0) {$v_5$};
\node[small vertex] (6) at (4, 0) {$v_6$};

\begin{pgfonlayer}{background}
  \node[below of=1, node distance=0, positive example] {};
  \node[below of=3, node distance=0, positive example] {};
  \node[below of=4, node distance=0, negative example] {};
  \node[below of=5, node distance=0, negative example] {};
\end{pgfonlayer}

\draw[edge] (3) to (6);
\draw[edge] (1) to (4);
\draw[edge] (1) to (5);
\draw[edge] (1) to (6);
\draw[edge] (2) to (5);
\draw[edge] (2) to (6);
 \end{tikzpicture}
\caption{Graph $G$ for \cref{example_bipartite}.
Positive examples are shown in \textcolor{ibm-indigo}{purple}, and negative examples are shown in \textcolor{ibm-orange}{orange}.}
\label{fig_example}
\end{figure}

For the \(1\)-dimensional case of \(\MSOLearn\), called \(\problemFont{1D-MSO-Consistent-}\) \(\problemFont{Learn}\),
\cite{grohe_learning_2017-1,grienenberger_learning_2019} gave algorithms that are sublinear
in the background structures after a linear-time pre-pro\-cessing stage
for the case that the background structure is a string or a tree.
This directly implies that \(\uMSOLearn\) can be solved in time $f(\ell,q) \cdot n$ for some function $f$, that is, in fixed-parameter linear time, if the background structure is a string or a tree.
Here $n$ is the size of the background structure and $\ell,q$ are the parameters of the learning problem described above.
We generalize the results to labeled graphs of bounded clique-width.
Graphs of clique-width $c$ can be described by a \emph{$c$-expression}, that is, an expression in a certain graph grammar that only uses $c$ labels (see \cref{sub:prelim:clique-width} for details). In our algorithmic results for graphs of bounded clique-width, we always assume that the graphs are given in the form of a $c$-expression. We treat $c$ as just another parameter of our algorithms. By
the results of Oum and Seymour \cite{oum_approximating_2006}, we can always compute
a \(2^{\bigO(c)}\)-expression for a graph of clique-width $c$ by a fixed-parameter tractable algorithm.

\begin{restatable}{theorem}{oneDConsistentLearning}
\label{thm:1d-tract-main}
    Let \(\mathcal{C}\) be a class of labeled graphs of bounded clique-width. Then
    \(\problemFont{1D-MSO-}\) \(\problemFont{Consistent-Learn}\)
    is fixed-parameter linear on \(\mathcal{C}\).
\end{restatable}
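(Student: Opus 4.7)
My plan is to reduce the search for a consistent hypothesis to a single MSO model-checking query on an enriched graph and then apply the tree-automaton form of Courcelle's theorem for clique-width, both to decide the query and to extract the witnesses in linear time. I would first preprocess the training sequence $S = ((v_i,\lambda_i))_{i\in[m]}$ in $O(m)$ time: if some vertex carries both labels, reject; otherwise mark the positive and negative example vertices with fresh unary predicates $P$ and $N$. From the given $c$-expression $\xi$ of $G$, refining each color according to its $P,N$-status produces, in linear time, a $c'$-expression $\xi'$ of $(G,P,N)$ with $c' \le 4c$ colors.

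The crux is the equivalence between consistency of $h_{\psi,\vec w}$ and a purely type-theoretic condition: a pair $(\psi,\vec w)$ with $\qr(\psi)\le q$ and $|\vec w|=\ell$ consistent with $S$ exists if and only if there is a tuple $\vec w \in V(G)^\ell$ such that no positive example vertex has the same MSO $q$-type over $(G,\vec w)$ as any negative example. Necessity is immediate since $\psi$ only sees the $q$-type of $(v,\vec w)$; sufficiency follows by taking $\psi$ to be the disjunction of the (finitely many) Hintikka formulas of the $q$-types realized by the positive examples. Because the set $\mathcal F_q$ of MSO formulas of quantifier rank at most $q$ in $\ell+1$ free variables (over our fixed signature) is finite up to logical equivalence, with size bounded by a function of $q$ and $\ell$, this condition is captured by the single MSO sentence
\[
\Phi \;\deff\; \exists y_1\cdots\exists y_\ell \; \forall x \forall x' \Bigl( \bigl(P(x) \land N(x')\bigr) \to \bigvee_{\phi \in \mathcal F_q} \bigl(\phi(x,\vec y) \oplus \phi(x',\vec y)\bigr) \Bigr),
\]
of quantifier rank at most $q+\ell+2$ and size depending only on $q$ and $\ell$.

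Now $\Phi$ corresponds to a finite bottom-up tree automaton $\mathcal A$ on $c'$-expressions with extra track symbols guessing the positions of the existentially quantified $y_j$'s. One bottom-up pass over $\xi'$ computes the DP table of $\mathcal A$ in time $f(q,\ell,c)\cdot(|\xi|+m)$, thereby deciding whether $(G,P,N)\models\Phi$. If the verdict is positive, a top-down trace-back through the DP table recovers an accepting annotation in linear additional time, from which the witness $\vec w = (w_1,\ldots,w_\ell)$ is read off. A final bottom-up pass over $\xi'$, augmented with singleton markers for the $w_j$'s, computes the $q$-type $\tau_i \deff \tp_q^G(v_i;\vec w)$ of every training vertex in time $f(q,\ell,c)\cdot(|\xi|+m)$; setting $\psi \deff \bigvee_{\lambda_i=+}\phi_{\tau_i}$, where $\phi_{\tau_i}$ is the Hintikka formula of $\tau_i$, gives a consistent hypothesis to output.

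The main obstacle is arranging the constructive aspect within the linear budget: producing $\vec w$ and $\psi$ explicitly without an outer enumeration over candidate parameter tuples, which would cost a factor $n^\ell$. This relies on the tree-automaton form of Courcelle's theorem, which supports witness tracing, together with the encoding of the entire learnability condition in a single MSO sentence of parameter-dependent size. The remaining ingredients---finiteness of $q$-types, Hintikka formulas, and the refinement of $c$-expressions by fresh unary predicates---are standard.
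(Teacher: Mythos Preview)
Your proposal is correct and takes a genuinely different, somewhat cleaner route than the paper.

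The paper proceeds by an outer iteration over all normalized formulas $\psi\in\MSO(\Lambda,q,1,\ell,0)$. For each fixed $\psi$ it encodes the training sequence into two fresh labels $P,N$ (same as your first step), rewrites ``$h_{\psi,\vec w}$ is consistent'' as a formula $\phi'(\vec y)$ of quantifier rank $q{+}1$, replaces the free individual variables by free set variables, and then spells out in detail a bottom-up dynamic program on the $c$-expression (Lemmas~3.4--3.7, essentially the Courcelle--Makowsky--Rotics / Feferman--Vaught machinery) that keeps, at each node and for each formula arising in the decomposition, a single representative parameter setting. The witness $\vec w$ is recovered from these representatives at the root.

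Your approach differs in two respects. First, your type-theoretic characterization---a consistent $(\psi,\vec w)$ exists iff some $\vec w$ separates the $q$-types of positive and negative example vertices---lets you dispense with the outer loop over candidate formulas $\psi$; you pack everything into a single sentence $\Phi$ and recover $\psi$ afterwards as a disjunction of Hintikka formulas. Second, you treat Courcelle's theorem with witness extraction as a black box rather than re-deriving the DP explicitly. Both differences are sound: the Hintikka-formula argument is standard, and linear-time witness extraction for existential MSO on $c$-expressions is a known consequence of the automaton formulation. The paper's approach is more self-contained and transparent about the DP; yours is more modular and makes the role of types explicit. Asymptotically the two are equivalent, since the paper's outer iteration is over a parameter-bounded set anyway.

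Two minor remarks: with the paper's definition of clique-width (vertices may carry several labels), adding $P,N$ yields a $(c{+}2)$-expression directly, so the colour-refinement to $4c$ is unnecessary here; and your claim that one bottom-up pass suffices to compute $\tp_q^G(v_i;\vec w)$ for all $v_i$ simultaneously is fine, but it is worth noting that this is the standard ``compute the $1$-type of every vertex'' routine (one bottom-up plus one top-down sweep over the expression) rather than a single bottom-up sweep.
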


Since graphs of bounded tree-width also have bounded clique-width,
our result directly implies fixed-parameter linearity on graph classes of bounded tree-width.

Our proof for \cref{thm:1d-tract-main} relies on the model-checking techniques due to Courcelle, Makowsky,
and Rotics for graph classes of bounded clique-width \cite{courcelle_linear_1999}.
To make use of them, we encode the training examples into the graph as new labels.
While this construction works for \(k=1\),
it fails for higher dimensions if there are too many examples to encode.

As far as we are aware, all previous results for learning $\MSO$ formulas are restricted to the one-dimensional case of the problem. We give the first results for $k>1$, presenting two different approaches that yield tractability results in higher dimensions.

As we discuss in \cref{sec:pac},
for the PAC-learning problem \(\PacMSOLearn\) in higher dimensions,
we can restrict the number of examples to consider to a constant.
In this way, we obtain fixed-parameter tractability results for learning \(\MSO\)-definable concepts in higher dimensions,
similar to results for first-order logic on nowhere dense classes~\cite{van_bergerem_parameterized_2022,van_bergerem_thesis_2023}.

\begin{restatable}{theorem}{PAC}
\label{Theorem:tractability-pac}
Let \(\mathcal{C}\) be a class of labeled graphs of bounded clique-width.
Then
\(\problemFont{MSO-PAC-}\) \(\problemFont{Learn}\)
is fixed-parameter linear on \(\mathcal{C}\).
\end{restatable}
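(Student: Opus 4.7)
The plan is to reduce agnostic PAC learning to empirical risk minimization on a sample whose size depends only on the parameters, the key new ingredient being a VC-dimension bound that is uniform across the class.

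First, on the class \(\mathcal{C}\) of labeled graphs of clique-width at most some fixed \(c\), I would establish that the hypothesis class
\[
\Hypo_{q,k,\ell}(G) \deff \bigsetc{\setc{\vec v \in V(G)^k}{G \models \phi(\vec v; \vec w)}}{\qr(\phi) \le q,\; \vec w \in V(G)^\ell}
\]
has VC dimension bounded by a function \(D(q,k,\ell,c)\) independent of \(|V(G)|\). The argument rests on the finite-index property underlying the Courcelle--Makowsky--Rotics machinery: on graphs of clique-width at most \(c\) there are only finitely many \(\MSO\)-types of quantifier rank \(q\) of \((k+\ell)\)-tuples, the number being bounded by a function of \(q, k, \ell, c\). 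Via a Feferman--Vaught-style recursion along the \(c\)-expression, two parameter tuples \(\vec w, \vec w'\) inducing the same type function \(\vec v \mapsto \tp_q(\vec v, \vec w)\) yield identical hypotheses. Thus the total number of distinct hypotheses in \(\Hypo_{q,k,\ell}(G)\), and a fortiori its VC dimension, is bounded uniformly in \(G\).

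Given this bound, Haussler's classical analysis of agnostic PAC learning (uniform convergence via Sauer--Shelah) guarantees that a training sample \(\TS\) of size \(s = \bigO((D(q,k,\ell,c) + \log(1/\delta))/\epsilon^2)\) suffices so that any empirical risk minimizer over \(\Hypo_{q,k,\ell}(G)\) has expected error within \(\epsilon\) of the optimum with probability at least \(1 - \delta\). I then draw such a sample. To solve the ERM problem on this constant-size sample, I would mark the at most \(sk\) vertices appearing in \(\TS\) as new unary predicates in the \(c\)-expression; this adds only a constant number of labels and keeps the clique-width bounded. Then I iterate over the finitely many (up to equivalence) \(\MSO\) formulas \(\phi(\vec x; \vec y)\) of quantifier rank at most \(q\) and over the at most \(2^s\) possible label patterns \(T \subseteq \TS\). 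For each pair \((\phi, T)\), the question of whether some \(\vec w \in V(G)^\ell\) realizes exactly the pattern \(T\) on \(\TS\) reduces to \(\MSO\) satisfiability on the augmented graph and is fixed-parameter linear by Courcelle--Makowsky--Rotics. I return a hypothesis \(h_{\phi, \vec w}\) corresponding to the realizable pattern with the fewest mislabeled examples. The total running time is \(f(\epsilon, \delta, q, k, \ell, c) \cdot |V(G)|\).

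The main obstacle is the uniform VC-dimension bound. A direct counting gives \(|\Hypo_{q,k,\ell}(G)| \le g(q,k,\ell) \cdot |V(G)|^\ell\) and thus only \(\VC \le \bigO(\ell \log |V(G)|)\), which would make the sample size grow with \(|V(G)|\) and destroy fixed-parameter tractability. The proof must instead exploit the finiteness of \((k+\ell)\)-types of quantifier rank \(q\) on \(\mathcal{C}\) together with the observation that parameter tuples inducing identical type functions over \(V(G)^k\) produce identical hypotheses, so that counting types rather than raw hypotheses yields the desired parameter-only bound.
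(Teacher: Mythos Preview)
Your overall architecture---draw a sample of size depending only on the parameters, then solve ERM by encoding the sampled examples as fresh unary labels and invoking the Courcelle--Makowsky--Rotics machinery---is exactly the route the paper takes. The iteration over the \(2^s\) subsets of the sample and the use of fixed-parameter-linear model checking on the augmented graph match the paper's proof essentially step for step.

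The genuine gap is your VC-dimension argument. You propose to bound \(\VC(\Hypo_{q,k,\ell}(G))\) by bounding the number of distinct hypotheses, arguing that parameter tuples inducing the same ``type function'' \(\vec v \mapsto \tp_q(\vec v,\vec w)\) give the same hypothesis. That implication is trivially true, but it does not bound the number of type functions: the codomain has constant size, but the domain \(V(G)^k\) does not, so the number of such functions can grow with \(|V(G)|\). Concretely, take \(\phi(x,y) = (x=y)\), \(k=\ell=1\), \(q=0\); then every vertex \(w\) yields a distinct hypothesis, so \(|\Hypo|=|V(G)|\). Your final paragraph acknowledges the difficulty but the proposed fix (``counting types rather than raw hypotheses'') is still the same flawed step: equality of the \(q\)-type of \(\vec w\) alone does \emph{not} force equality of hypotheses (any vertex-transitive graph is a counterexample), and the number of type \emph{functions} is not controlled by the number of \((k+\ell)\)-types.

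The paper does not attempt to prove this bound; it invokes it as a black box, citing \cite[Theorem~17]{grohe_learnability_2004} (Lemma~\ref{lem:gt} in the paper), which gives \(\VC(\phi,\CC)\le g(c,|\Lambda|,q,k,\ell)\) for every class \(\CC\) of \(\Lambda\)-graphs of clique-width at most \(c\). That result bounds the VC dimension \emph{directly}, not via a cardinality bound on the hypothesis class, and its proof is not the type-counting argument you sketch. Once you replace your first paragraph by a citation of this lemma, the remainder of your proposal is correct and coincides with the paper's proof.
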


In the second approach to higher-dimensional tractability, and as the main result of this paper,
we show in \cref{sec:high-dim-cons} that a consistent hypothesis can be learned on graphs of bounded clique-width
with a quadratic running time in terms of the size of the graph.

\begin{restatable}{theorem}{highDimTract}
\label{thm:tract-high-dim}
  There is a function $g \colon \NN^5 \to \NN$ such that,
  for a \(\Lambda\)-labeled graph $G$ of clique-width $\cw(G) \leq c$
  and a training sequence \(S\) of size $\abs S = m$,
  the problem
  \(\problemFont{MSO-Consistent-}\) \(\problemFont{Learn}\)
  can be solved in time
  \[\bigO \bigl((m+1)^{g(c, \abs{\Lambda}, q, k, \ell)}|V(G)|^2\bigr).\]
\end{restatable}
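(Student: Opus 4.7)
My plan is to reduce the search for a consistent hypothesis to a dynamic programming problem on the given $c$-expression of $G$. Up to logical equivalence there are only finitely many MSO formulas of quantifier rank $q$ in $k+\ell$ free variables over $\Lambda$-labeled graphs; equivalently, the set $T$ of complete $q$-types of $(k+\ell)$-tuples has bounded cardinality $N = N(q,k,\ell,\abs{\Lambda})$. A hypothesis $h_{\phi,\bar w}$ consistent with $S$ exists if and only if there are $\bar w \in V(G)^\ell$ and a subset $T^+ \subseteq T$ such that, for every training example $(\bar v_i,\lambda_i) \in S$, the type $\tp_q^G(\bar v_i,\bar w)$ lies in $T^+$ iff $\lambda_i = +$; given such data, $\phi$ can be chosen as the disjunction of the canonical type-defining formulas for $T^+$. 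I would therefore enumerate the constantly many $T^+\subseteq T$ and, for each, decide whether a suitable $\bar w$ exists.

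\textbf{Dynamic programming on the $c$-expression.} For a fixed $T^+$, I would run bottom-up dynamic programming on the $c$-expression. At each node $v$, with associated $c$-labeled subgraph $H_v$, a DP state records two pieces of information. The \emph{global} part fixes which positions of $\bar w$ have already been introduced inside $H_v$, the clique-width labels they currently carry, and the $q$-type of the resulting partial tuple inside $H_v$. The \emph{local} part records, for each \emph{example profile} $\pi$ -- specifying which positions of a $k$-tuple lie in $H_v$, the labels of those positions together with the labels of the placed $\bar w$-positions, and the $q$-type inside $H_v$ of the resulting partial $(k+\ell)$-tuple -- the numbers of positive and negative examples in $S$ that realise $\pi$. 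The number of admissible profiles is bounded by some $N' = N'(c,\abs{\Lambda},q,k,\ell)$, so each node has at most $(m+1)^{2N'}$ relevant states.

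\textbf{Composition and completion.} Transitions are driven by a Feferman--Vaught-style composition theorem for MSO on $c$-expressions: for each constructor (single-vertex introduction, relabeling, adding all edges between two label classes, and disjoint union) the $q$-type of a partially distinguished tuple in the output is a computable function of the clique-width labels and the $q$-types of the tuple's restrictions to the constituents. The unary constructors yield easy transitions; at a disjoint union, a parent state is obtained by summing, for each profile $\pi$, the products of the child counts over all compatible decompositions $\pi = \pi_1 \uplus \pi_2$. An accepting state at the root places all $\ell$ positions of $\bar w$ and realises the classification demanded by $T^+$; back-tracing through the DP tables recovers a concrete $\bar w$, and $\phi$ is the disjunction of the type formulas in $T^+$.

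\textbf{Main obstacle and running time.} The principal hurdle is avoiding the $N^m$ blow-up that would follow from tracking each example separately: the crucial observation is that, since the hypothesis must treat all examples uniformly, only the \emph{multiset} of example profiles matters, so recording one count per profile per $\pm$-label suffices, bringing the state space down to $(m+1)^{g(c,\abs{\Lambda},q,k,\ell)}$. A secondary technical point is to justify the composition theorem for \emph{partial} tuples whose positions may lie on either side of a disjoint-union node, which I expect to follow from the standard MSO composition arguments underlying the Courcelle--Makowsky--Rotics model-checking algorithm for graphs of bounded clique-width. Once these ingredients are in place, the $c$-expression has $\bigO(\abs{V(G)})$ operations, the per-node DP work is bounded by $(m+1)^{g(\ldots)}$, and a careful running-time analysis -- in particular for the disjoint-union transitions and for back-tracing a concrete witness $\bar w \in V(G)^\ell$ -- yields the stated $\bigO\bigl((m+1)^{g(c,\abs{\Lambda},q,k,\ell)}\abs{V(G)}^2\bigr)$ bound.
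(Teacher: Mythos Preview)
Your reduction to type separation and the overall bottom-up strategy are sound, but there is a genuine gap at the disjoint-union step. Recording only the \emph{multiset} of example profiles loses exactly the information needed to combine the two children. Suppose examples $\bar v_1,\bar v_2$ have profiles $\pi_1,\pi_1'$ at the left child $H_1$ and profiles $\pi_2,\pi_2'$ at the right child $H_2$. The child count vectors record only the marginals; they cannot tell you whether the actual pairing is $(\pi_1,\pi_2),(\pi_1',\pi_2')$ or $(\pi_1,\pi_2'),(\pi_1',\pi_2)$, yet these two pairings yield different parent profiles in general. Your transition rule ``for each profile $\pi$, sum the products of child counts over compatible decompositions $\pi=\pi_1\uplus\pi_2$'' counts \emph{pairs} $(i,j)$ of examples rather than single examples $i$; the resulting numbers need not even sum to $m$. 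The compression from the per-example $m$-tuple of profiles to its multiset is lossy precisely where correctness is required.

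The paper keeps the examples individually: at each subexpression $\Xi'$ it computes the set of realizable $m$-tuples $(\theta_1,\ldots,\theta_m)$ of types, one coordinate per training example, and the disjoint-union step then matches the two children coordinate by coordinate. A priori this set has size up to $N^m$---the blow-up you were trying to avoid---and the idea missing from your proposal is how to bound it. The key lemma shows that only $(m+1)^{g(c,|\Lambda|,q,k,\ell)}$ such $m$-tuples are realizable over any fixed example sequence in any graph of clique-width at most $c$: by Grohe--Tur{\'a}n, every formula in $\MSO(\Lambda,q,k,\ell,0)$ has VC dimension bounded in terms of $(c,|\Lambda|,q,k,\ell)$ on such graphs, and a Sauer--Shelah argument (applied once per type in the finite type alphabet, via a pigeonhole step) then bounds the number of realizable type tuples. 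With that bound, the per-example DP has tables of size polynomial in $m$ and delivers the stated running time; the extra $|V(G)|$ factor comes from extracting a concrete witness $\bar w$ one coordinate at a time, as in the one-dimensional case.
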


While this is not strictly a fixed-parameter tractability result,
since we usually do not consider $m$ to be part of the parameterization,
we show in \cref{sec:high-dim-hardness} that this bound is optimal. Technically, this result is much more challenging than Theorems~\ref{Theorem:tractability-pac} and \ref{thm:1d-tract-main}. While we still use an overall dynamic-programming strategy that involves computing \(\MSO\) types, here we need to consider MSO types over sequences of tuples. The number of such sequence types is not constantly bounded, but exponential in the length of the sequence. The core of our argument is to prove that the number of relevant types can be polynomially bounded. This fundamentally distinguishes our approach from typical MSO/automata arguments, where types are from a bounded set (and they correspond to the states of a finite automaton).

Lastly, we study $\MSOLearn$ on arbitrary classes of labeled graphs.
Analogously to the hardness of learning \(\FO\)-definable concepts and the relation to the \(\FO\)-model-checking problem discussed in \cite{van_bergerem_parameterized_2022},
we are interested
specifically in the relation of \MSOLearn to the \MSO-model-checking problem \MSOMc.
We show that \MSOMc can already be reduced to the 1-dimensional case of \MSOLearn, even with a training sequence of size two.
This yields the following hardness result that we prove in \cref{sec:hardness}.

\begin{theorem}
\label{theorem:reduction}
\uMSOLearn is para-NP-hard under fpt Turing reductions.
\end{theorem}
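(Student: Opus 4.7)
The plan is to exhibit an fpt Turing reduction from \(\MSOMc\) to \(\uMSOLearn\). Since \(\MSOMc\) on general graphs is NP-hard already for \(\MSO\) sentences of fixed quantifier rank (e.g., \(3\)-colorability is definable in \(\MSO\) with a constant-rank sentence), any reduction that preserves the bounds on \(q\) and \(\ell\) as functions of \(|\psi|\) immediately yields para-NP-hardness.

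Given an \(\MSOMc\) instance \((G, \psi)\), I would construct a graph \(H\) containing a copy of \(G\) together with a gadget that attaches two designated vertices \(v_+, v_- \in V(H)\) to \(G\) in a symmetric way. The training sequence is the length-two sequence \(S = ((v_+, +), (v_-, -))\), and the parameters \(q, \ell\) are chosen as computable functions of \(\qr(\psi)\) and \(|\psi|\). The central claim is that \((H, S, q, \ell)\) admits a consistent hypothesis iff \(G \models \psi\). For the forward direction, I would use a witness for \(\psi\) in \(G\) to select a parameter tuple \(\bar w\) consisting of vertices of \(G\) that encodes the witness, and then exhibit an \(\MSO\) formula \(\phi(x; \bar y)\) of bounded quantifier rank which uses \(\MSO\) set quantifiers to internally verify the witness and, combined with the gadget structure, yields different truth values at \((v_+, \bar w)\) and \((v_-, \bar w)\). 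For the converse direction, I would run an Ehrenfeucht--Fra\"{\i}ss\'{e} argument for \(\MSO\): if \(G \not\models \psi\), then for every admissible parameter tuple \(\bar w\), the duplicator has a winning strategy in the \(q\)-round \(\MSO\) game on the pointed structures \((H, v_+, \bar w)\) and \((H, v_-, \bar w)\), so no formula of quantifier rank \(\leq q\) separates them.

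The main obstacle is the design of the gadget. It has to simultaneously (i) block trivial distinguishers such as \(\phi(x; y) \equiv (x = y)\) taken with \(\bar w = (v_+)\), (ii) rule out any distinguisher when \(G \not\models \psi\), and (iii) allow a valid \(\psi\)-witness in \(G\) to lift to a structural distinction of \(v_+\) and \(v_-\). I would handle (i) by making \(v_+\) and \(v_-\) related by an automorphism of \(H\) that fixes \(V(G)\) pointwise, and by ensuring the gadget produces enough ``twin'' copies of \(v_+, v_-\) that pointing at them individually exceeds the prescribed parameter budget \(\ell\); this forces the learner to use parameters that live in \(V(G)\). Condition (ii) then follows because any such \(\bar w\) is fixed by the swap automorphism. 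Condition (iii) is achieved by arranging the gadget so that the swap symmetry between \(v_+, v_-\) is broken exactly when the parameter \(\bar w\) picks out a valid witness for \(\psi\) in \(G\), at which point the asymmetry can be read off by an \(\MSO\) formula of quantifier rank bounded in terms of \(\qr(\psi)\). Packaging this reduction and invoking the NP-hardness of bounded-rank \(\MSOMc\) then yields the theorem.
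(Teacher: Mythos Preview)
Your plan has a genuine internal contradiction. You propose to (i) make $v_+$ and $v_-$ related by an automorphism of $H$ that fixes $V(G)$ pointwise, and simultaneously (iii) arrange that for some parameter tuple $\bar w$ with entries in $V(G)$ the symmetry between $v_+$ and $v_-$ is broken. These two requirements are incompatible: if $\pi$ is an automorphism of $H$ with $\pi(v_+)=v_-$ and $\pi$ fixing every vertex of $G$, then for \emph{every} $\bar w\in V(G)^\ell$ we have $\pi(\bar w)=\bar w$, hence $(H,v_+,\bar w)\cong(H,v_-,\bar w)$ and the two pointed structures satisfy exactly the same formulas of any quantifier rank. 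So condition (iii) can never be achieved under (i), regardless of whether $G\models\psi$.

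There is a second, more fundamental obstruction to the ``encode a witness as a parameter tuple'' idea. The sentence $\psi$ may contain second-order existential quantifiers, whose witnesses are arbitrary subsets of $V(G)$. A tuple $\bar w\in V(G)^\ell$ with $\ell$ bounded in terms of $|\psi|$ cannot encode such a set in general; there are $2^{|V(G)|}$ candidate sets but only $|V(G)|^\ell$ parameter tuples. So a single oracle call with parameters encoding a witness cannot work for genuine \MSO.

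The paper's proof avoids both issues by using a \emph{Turing} reduction rather than a many-one reduction. It never tries to encode a witness. Instead, it uses the learning oracle purely as a type-equality tester: on input $S=\big((v,+),(v',-)\big)$ with $\ell=0$, the oracle returns a hypothesis iff $v$ and $v'$ have distinct $q$-types. This lets the reduction compute, for a first-order quantifier, a bounded set of representative vertices (one per $q$-type), and then encode the chosen representative as a fresh label and recurse on the subformula. For a second-order quantifier, the paper builds candidate sets incrementally in $n$ rounds, using the oracle (on a two-copy gadget) to test type equality of sets and keep only boundedly many representatives per round. The recursion depth and branching are bounded in $|\psi|$, and each node makes polynomially many oracle calls with bounded parameter. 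This is why the reduction is fpt Turing rather than many-one, and why it succeeds where a one-shot gadget construction cannot.
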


\paragraph*{Related Work}
The model-theoretic learning framework studied in this paper was introduced in \cite{grohe_learnability_2004}.
There, the authors give information-theoretic learnability results for hypothesis classes
that can be defined using first-order and monadic second-order logic on restricted classes of
structures.

Algorithmic aspects of the framework were first studied in~\cite{grohe_learning_2017},
where it was proved that concepts definable in first-order logic can be learned
in time polynomial in the degree of the background structure
and the number of labeled examples the algorithm receives as input,
independently of the size of the background structure.
This was generalized to first-order logic with counting~\cite{van_bergerem_learning_2019}
and with weight aggregation~\cite{van_bergerem_learning_2021}.
On structures of polylogarithmic degree, the results yield learning algorithms
running in time sublinear in the size of the background structure.
It was shown in~\cite{grohe_learning_2017-1,van_bergerem_learning_2019}
that sublinear-time learning is no longer possible if the degree is unrestricted.
To address this issue, in~\cite{grohe_learning_2017-1},
it was proposed to introduce a preprocessing phase where,
before seeing any labeled examples,
the background structure is converted to a data structure that supports sublinear-time learning later.
This model was applied to monadic second-order logic on strings~\cite{grohe_learning_2017-1}
and trees~\cite{grienenberger_learning_2019}.

The parameterized complexity of learning first-order logic was first studied in~\cite{van_bergerem_parameterized_2022}.
Via a reduction from the model-checking problem,
the authors show that on arbitrary relational structures,
learning hypotheses definable in \(\FO\) is \(\AWstar\)-hard.
In contrast to this, they show that the problem is fixed-parameter tractable on nowhere dense graph classes.
This result has been extended to nowhere dense structure classes in~\cite{van_bergerem_thesis_2023}.
Although not stated as fpt results, the results in \cite{grohe_learning_2017-1,grienenberger_learning_2019}
yield fixed-parameter tractability for learning \(\MSO\)-definable concepts on strings and trees
if the problem is restricted to the 1-dimensional case where the tuples to classify are single vertices.

The logical learning framework is related to, but different from the framework of \emph{inductive logic programming}
(see, \eg, \cite{cohen_polynomial_1995,muggleton_inductive_1991,muggleton_inductive_1994}),
which may be viewed as the classical logic-learning framework.
In the database literature, there are various approaches to learning queries from examples
\cite{barcelo_conjunctive_2017,barcelo_conjunctive_2021,haussler_conjuctive_1989,hirata_conjunctive_2000,kimelfeld_conjunctive_2018,bonifati_path_queries_2015,staworko_twig_2012,abouzied_quantified_boolean_2013,aizenstein_hardness_1998,bonifati_join_2016,sloan_boolean_2010,ten_cate_conjunctive_2021}.
Many of these are concerned with active learning scenarios, whereas we are in a statistical learning setting.
Moreover, most of the results are concerned with conjunctive queries or queries outside the relational database model,
whereas we focus on monadic second-order logic.
Another related subject in the database literature is the problem of learning schema mappings from examples
\cite{alexe_schema_mappings_2011,bonifati_interactive_2019,ten_cate_schema_mappings_2013,ten_cate_active_2018,gottlob_schema_mapping_2010}.
In formal verification, related logical learning frameworks~\cite{champion_ice_2020,ezudheen_ice_2018,garg_ice_2014,loeding_abstract_2016,zhu_chc_2018}
have been studied as well.
In algorithmic learning theory, related works study the parameterized complexity of
several learning problems~\cite{arvind_kjuntas_2007,li_learning_2018} including,
quite recently, learning propositional CNF and DNF formulas
and learning solutions to graph problems in the PAC setting~\cite{brand_parameterized_pac_2023}.
 \section{Preliminaries}
\label{sec:prelim}

We let \(\NN\) denote the set of non-negative integers.
For \(m, n \in \NN\), we let \([m,n] \deff \setc{\ell \in \NN}{m \leq \ell \leq n}\)
and \([n] \deff [1,n]\).
For a set \(V\), we let \(2^V \deff \setc{V'}{V' \subseteq V}\).

\subsection{Clique-Width}
\label{sub:prelim:clique-width}
In this paper, graphs are always undirected and simple (no loops or parallel edges);
we view them as \(\set{E}\)-structures for a binary relation symbol \(E\),
and we denote the set of vertices of a graph \(G\) by \(V(G)\).
A \emph{label set} is a set \(\Lambda\) of unary relation symbols,
and a \emph{\(\Lambda\)-graph} or \emph{\(\Lambda\)-labeled graph}
is the expansion of a graph to the vocabulary \(\set{E} \cup \Lambda\).
A \emph{labeled graph} is a \(\Lambda\)-graph for any label set \(\Lambda\).

In the following, we define \emph{expressions} to represent labeled graphs.
A \emph{base graph} is a labeled graph of order \(1\).
For every base graph \(G\), we introduce a \emph{base expression} \(\beta\)
that \emph{represents} \(G\).
Moreover, we have the following operations.
\begin{description}
  \item[Disjoint union:]
    For disjoint \(\Lambda\)-graphs \(G_1, G_2\),
    we define \(G_1 \uplus G_2\) to be the union of \(G_1\) and \(G_2\).
    If \(G_1\) and \(G_2\) are not disjoint, then \(G_1 \uplus G_2\) is undefined.
  \item[Adding edges:]
    For a \(\Lambda\)-graph \(G\) and unary relation symbols \(P, Q \in \Lambda\) with \(P \neq Q\),
    we let \(\eta_{P,Q}(G)\) be the \(\Lambda\)-graph obtained from \(G\)
    by adding an edge between every pair of distinct vertices
    \(v \in P(G)\), \(w \in Q(G)\).
    That is,
    \(E\bigl(\eta_{P,Q}(G)\bigr) \deff E(G) \cup
      \bigsetc{(v,w),(w,v)}{v \in P(G), w \in Q(G), v\neq w}\).
  \item[Relabeling:]
    For a \(\Lambda\)-graph \(G\) and unary relation symbols \(P, Q \in \Lambda\) with \(P \neq Q\),
    we let \(\rho_{P,Q}(G)\) be the \(\Lambda\)-graph obtained from \(G\)
    by relabeling all vertices in \(P\) by \(Q\), that is,
    \(V(\rho_{P,Q}(G)) \deff V(G)\),
    \(P(\rho_{P,Q}(G)) \deff \emptyset\),
    \(Q(\rho_{P,Q}(G)) \deff Q(G) \cup P(G)\),
    and \(R(\rho_{P,Q}(G)) \deff R(G)\) for all \(R \in \Lambda \setminus\set{P,Q}\).
  \item[Deleting labels:]
    For a \(\Lambda\)-graph \(G\) and a unary relation symbol \(P \in \Lambda\),
    we let \(\delta_{P}(G)\) be the restriction of \(G\) to \(\Lambda \setminus \set{P}\),
    that is, the \((\Lambda \setminus \set{P})\)-graph obtained from \(G\)
    by removing the relation \(P(G)\).
\end{description}
We also introduce a modification of the disjoint-union operator,
namely the \emph{ordered-disjoint-union operator} \(\uplus^<\), which is used in \cref{sec:high-dim-cons} to simplify notations.
\begin{description}
    \item[Ordered disjoint union:]
To introduce this operator, we need two distinguished unary relation symbols \(P_1^<\) and \(P_2^<\).
For disjoint \(\Lambda\)-graphs \(G_1, G_2\), where we assume \(P_1^<, P_2^< \not \in \Lambda\),
we let \(G_1 \uplus^< G_2\) be the \((\Lambda \cup \set{P_1^<, P_2^<})\)-expansion
of the disjoint union \(G_1 \uplus G_2\) with \(P_1^<(G_1 \uplus^< G_2) \deff V(G_1)\)
and \(P_2^<(G_1 \uplus^< G_2) \deff V(G_2)\).
By deleting the relations \(P_1^<, P_2^<\) immediately
after introducing them in an ordered disjoint union,
we can simulate a standard disjoint-union by an ordered disjoint union, that is,
\(G_1 \uplus G_2 = \delta_{P_1^<}(\delta_{P_2^<}(G_1\uplus^< G_2))\).
\end{description}
A \emph{\(\Lambda\)-expression} is a term formed from base expressions \(\beta\),
whose label set is a subset of \(\Lambda\),
using unary operators \(\eta_{P,Q}\), \(\rho_{P,Q}\), \(\delta_P\)
for \(P, Q \in \Lambda\) with \(P \neq Q\), and the binary operator \(\uplus\).
We require \(\Lambda\)-expressions to be well-formed, that is,
all base expressions represent base graphs with mutually distinct
vertices, and the label sets fit the operators.

Every \(\Lambda\)-expression \(\Xi\) \emph{describes} a \(\Lambda'\)-graph \(G_{\Xi}\)
for some \(\Lambda' \subseteq \Lambda\).
Note that there is a one-to-one correspondence between
the base expressions in \(\Xi\) and the vertices of \(G_\Xi\).
Actually, we may simply identify the vertices of \(G_\Xi\) with the base expressions in \(\Xi\).
We let \(V_\Xi \deff V(G_{\Xi})\) be the set of these base expressions.
We may then view an expression \(\Xi\) as a tree where \(V_{\Xi}\) is the set of leaves of this tree.
We let \(\abs{\Xi}\) be the number of nodes of the tree.
We have \(\abs{G_{\Xi}} = \abs{V_{\Xi}} \leq \abs{\Xi}\).
In general, we cannot bound \(\abs{\Xi}\) in terms of \(\abs{G_{\Xi}}\),
but for every \(\Lambda\)-expression \(\Xi\), we can find a \(\Lambda\)-expression \(\Xi'\)
such that \(G_{\Xi'} = G_\Xi\) and \(\abs{\Xi'} \in \bigO(\abs{\Lambda^2} \cdot \abs{G_{\Xi}})\).

Each subexpression \(\Xi'\) of \(\Xi\) describes a labeled graph \(G_{\Xi'}\)
on a subset \(V_{\Xi'} \subseteq V_\Xi\) consisting of all base expressions in \(\Xi'\).
Note that, in general, \(G_{\Xi'}\) is not a subgraph of \(G_\Xi\).

For \(c \in \NN\), a \emph{\(c\)-expression} is a \(\Lambda\)-expression
for a label set \(\Lambda\) of size \(\abs{\Lambda} = c\).
It is easy to see that every labeled graph of order \(n\) is described by an \(n\)-expression.
The \emph{clique-width} \(\cw(G)\) of a (labeled) graph \(G\) is the least \(c\)
such that \(G\) is described by a \(c\)-expression.

We remark that our notion of clique-width differs slightly
from the one given by Courcelle and Olariu \cite{courcelle_cliquewidth_2000},
since we allow vertices to have multiple labels, and we also allow the deletion of labels.
Thus, our definition is similar to the definition of \emph{multi-clique-width} \cite{furer_multicliquewidth_2017}.
However, for our algorithmic results, the definitions are equivalent,
since we have \(\cw(G) \leq \cw'(G)\) and \(\cw'(G) \in 2^{\bigO(\cw(G))}\) for every (labeled) graph \(G\),
where \(\cw'\) is the notion of clique-width from~\cite{courcelle_cliquewidth_2000}.

\begin{lemma}[\cite{oum_approximating_2006}]
  \label{theorem-find-k-expression}
  For a graph \(G\) with \(n\) vertices and clique-width \(c' \deff \cw(G)\),
  there is an algorithm that outputs a \(c\)-expression for \(G\) where \(c = 2^{3c'+2}-1\).
  The algorithm has a running time of \(\bigO(n^9 \log n)\).
\end{lemma}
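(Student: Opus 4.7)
The plan is to approach clique-width indirectly through rank-width, because the cut-rank function \(\rho(X) = \operatorname{rank}_{\mathbb{F}_2}\) of the bipartite adjacency submatrix of \((X, V(G)\setminus X)\) is symmetric and submodular, which makes rank-width amenable to submodular-minimisation techniques that clique-width itself resists. The first step is to record the standard sandwich bounds \(\operatorname{rw}(G) \le \cw(G) \le 2^{\operatorname{rw}(G)+1}-1\); these will let us translate an approximate rank-decomposition back into a clique-width expression at the end.

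Next, I would iterate \(k = 1,2,3,\ldots\) and, for each \(k\), run a subroutine that either certifies \(\operatorname{rw}(G) > k\) or returns a rank-decomposition of width at most \(3k+1\). Starting from an arbitrary branch-decomposition \(T\) of the vertex set, the subroutine scans for an edge whose cut has \(\rho\)-value exceeding \(3k+1\) and tries to reattach one side at a different location; the key combinatorial lemma, established from submodularity of \(\rho\) together with a tangle-style obstruction argument, asserts that either such a reattachment strictly decreases a lexicographic potential on the multiset of edge-widths, or the tree itself exhibits a configuration witnessing \(\operatorname{rw}(G) > k\). Each improvement step requires re-evaluating cut-ranks on a linear number of tree edges via GF\((2)\)-rank computations; with careful bookkeeping this costs roughly \(\bigO(n^{8})\) per step, and the potential permits at most \(\bigO(n \log n)\) improvements, so the subroutine runs in \(\bigO(n^{9}\log n)\) time. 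We stop as soon as the subroutine succeeds, which happens by \(k = \operatorname{rw}(G) \le \cw(G)\), so the outer loop contributes only a constant-factor overhead.

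Having obtained a rank-decomposition of width \(w \le 3\,\cw(G)+1\), the final step is to convert it into a \(c\)-expression with \(c \le 2^{w+1}-1 \le 2^{3\,\cw(G)+2}-1\). Processing the decomposition tree bottom-up, I introduce each leaf as a singleton with its own label; at each internal node, the cut of rank \(w\) lets me partition the vertices on one side into at most \(2^w\) equivalence classes whose elements share the same adjacency pattern to the other side. Representing these classes by labels and composing the two sides through a bounded number of \(\rho_{P,Q}\)-relabelings, one \(\eta_{P,Q}\)-edge-creation per relevant label pair, and subsequent label deletions yields the required expression without ever exceeding the label budget.

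The main obstacle is the rank-width approximation subroutine itself: the \(3k+1\) factor, the termination analysis, and the obstruction dichotomy all rely on a delicate submodular potential argument, and pinning the complexity at \(\bigO(n^{9}\log n)\) rather than a naive higher polynomial demands incremental maintenance of cut-ranks so that each attempted local modification is tested without a fresh global recomputation. The bottom-up conversion from rank-decomposition to \(c\)-expression is comparatively routine once the label-by-twin-class encoding is set up correctly.
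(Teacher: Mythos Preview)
The paper does not prove this lemma; it is quoted verbatim from \cite{oum_approximating_2006} and used as a black box. Your plan is a faithful high-level sketch of the Oum--Seymour argument: approximate rank-width via the submodular cut-rank function to obtain a rank-decomposition of width at most \(3k+1\), then convert it to a \(c\)-expression using the bound \(\cw(G)\le 2^{\operatorname{rw}(G)+1}-1\).

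One minor point worth flagging: your claim that ``the outer loop contributes only a constant-factor overhead'' holds only if \(c'=\cw(G)\) is treated as a constant, since the loop may run up to \(\operatorname{rw}(G)\le c'\) times and each iteration already costs \(\bigO(n^9\log n)\). The Oum--Seymour theorem is stated for \emph{fixed} \(k\), and the paper invokes it in a setting where the clique-width is bounded, so this is harmless in context; but as written the lemma's \(\bigO(n^9\log n)\) bound should be read with the hidden constant depending on \(c'\), not as a uniform bound over all graphs.
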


\subsection{Monadic Second-Order Logic}
\label{sec-prelim-formulas}
We consider monadic second-order (\MSO) logic,
which is a fragment of second-order logic
where we only quantify over unary relations (sets).
In \MSO, we consider two kinds of free variables,
which we call set variables (uppercase \(X,Y,X_i\))
and individual variables (lowercase \(x,y,x_i\)).
The \emph{quantifier rank} \(\qr(\phi)\) of a formula \(\phi\) is the nesting depth of its quantifiers.

Let \(\tau\) be a relational vocabulary and \(q \in \NN\).
By \(\MSO(\tau, q)\), we denote the set of all \(\MSO\) formulas
of quantifier rank at most \(q\) using only relation symbols in \(\tau\),
and we let \(\MSO(\tau) \deff \bigcup_{q} \MSO(\tau, q)\).
By \(\MSO(\tau, q,  k, s)\), we denote the set of all \(\MSO(\tau, q)\) formulas
with free individual variables in \(\set{x_1, \dots, x_k}\)
and free set variables in \(\set{X_1, \dots, X_s}\).
In particular, \(\MSO(\tau, q, 0, 0)\) denotes the set of \emph{sentences}.
Moreover, it will be convenient to separate the free individual variables
into \emph{instance variables} (\(x_1, x_2, \dots\))
and \emph{parameter variables} (\(y_1, y_2, \dots\)).
For this, we let \(\MSO(\tau, q,  k, \ell, s)\) denote the set of all \(\MSO(\tau, q)\) formulas
with free instance variables in \(\set{x_1, \dots, x_k}\),
free parameter variables in \(\set{y_1, \dots, y_\ell}\),
and free set variables in \(\set{X_1, \dots, X_s}\).
Furthermore, we write \(\phi(\vec{x}, \vec{y}, \vec{X})\) to denote that
the formula \(\phi\) has its free instance variables among the entries of \(\vec{x}\),
its free parameter variables among the entries \(\vec{y}\),
and its free set variables among the entries \(\vec{X}\).

We normalize formulas such that the set of normalized formulas in
\(\MSO(\tau, q, k, \ell, s)\) is finite,
and there is an algorithm that, given an arbitrary formula in \(\MSO(\tau, q, k, \ell, s)\),
decides if the formula is normalized, and if not, computes an equivalent normalized formula.
In the following, we assume that all formulas are normalized.

In this paper, all structures we consider will be labeled graphs for some label set \(\Lambda\).
In notations such as \(\MSO(\tau, \dots)\),
it will be convenient to write \(\MSO(\Lambda, \dots)\)
if \(\tau = \set{E} \cup \Lambda\).

\subsection{Types}
\label{subsec-prelim-types}
Let \(G\) be a \(\Lambda\)-labeled graph and \(\vec{v} \in (V(G))^k\).
The \emph{\(q\)-type of \(\vec{v}\) in \(G\)} is the set \(\tp_q^G(\vec{v})\)
of all formulas \(\phi(\vec{x}) \in \MSO(\Lambda, q, k, 0)\) such that \(G \models \phi(\vec{v})\).
A \emph{\((\Lambda,q,k)\)-type} is a set \(\theta \subseteq \MSO(\Lambda, q, k, 0)\) such that,
for each \(\phi \in \MSO(\Lambda, q, k, 0)\),
either \(\phi \in \theta\) or \(\neg\phi \in \theta\).
We denote the set of all \((\Lambda, q, k)\)-types by \(\Tp(\Lambda, q, k)\).
Note that \(\tp_q^G(\vec{v}) \in \Tp(\Lambda, q, k)\).
For a type \(\theta \in \Tp(\Lambda, q, k)\), we write \(G \models \theta (\vec{v})\)
if \(G \models \phi(\vec{v})\) for all \(\phi(\vec{x}) \in \theta\).
Observe that \(G \models \theta(\vec{v}) \iff \tp_q^G(\vec{v}) = \theta\).
We say that a type \(\theta \in \Tp(\Lambda,q,k)\) is \emph{realizable}
if there is some \(\Lambda\)-labeled graph \(G\) and tuple \(\vec{v} \in (V(G))^k\)
such that \(\theta = \tp_q^G(\vec{v})\).
We are not particularly interested in types that are not realizable,
but it is undecidable if a type \(\theta\) is realizable,
whereas the sets \(\Tp(\Lambda, q, k)\) are decidable.
(More precisely, there is an algorithm that, given \(\Lambda, q, k\) and a set \(\theta\) of formulas,
decides if \(\theta \in \Tp(\Lambda, q, k)\).)
For a \((\Lambda, q, k)\)-type \(\theta\) and a \(\Lambda\)-labeled graph \(G\), we let
\[
  \theta(G) \deff \bigsetc{\vec{v} \in (V(G))^k}{G \models \theta(\vec{v})}.
\]
If \(\theta(G) \neq \emptyset\), we say that \(\theta\) is \emph{realizable in \(G\)}.

As for formulas, we split the variables for types into two parts,
so we consider \((\Lambda, q, k, \ell)\)-types \(\theta \subseteq \MSO(\Lambda, q, k, \ell, 0)\),
and we denote the set of all these types by \(\Tp(\Lambda, q, k, \ell)\).
For a \(\Lambda\)-labeled graph \(G\) and tuples \(\vec{v} \in (V(G))^k, \vec{w} \in (V(G))^\ell\),
we often think of \(\tp_q^G(\vec{v}, \vec{w})\) as the
\emph{\(q\)-type of \(\vec{w}\) over \(\vec{v}\) in \(G\)}.
Moreover, we let
\[
  \theta(\vec{v}, G) \deff \setc{\vec{w} \in (V(G))^\ell}
  {G \models \theta(\vec{v}, \vec{w})}.
\]
If \(\theta(\vec{v}, G) \neq \emptyset\),
we say that \(\theta\) is \emph{realizable over \(\vec{v}\) in \(G\)}.

For a vector \(\vec{k} = (k_1, \dots, k_m) \in \NN^m\) and a set \(V\),
we let \(V^{\vec{k}}\) be the set of all sequences
\(\Ca = (\vec{v}_1, \dots, \vec{v}_m)\) of tuples \(\vec{v}_i \in V^{k_i}\).
Let \(G\) be a labeled graph, \(\Ca = (\vec{v}_1, \dots, \vec{v}_m) \in V^{\vec{k}}\)
for some \(\vec{k} \in\NN^m\), and \(\vec{w} \in (V(A))^\ell\).
We define the \emph{\(q\)-type of \(\vec{w}\) over \(\Ca\) in \(G\)} to be the tuple
\[
  \tp_q^{G}(\Ca, \vec{w}) \deff \bigl(\tp_q^G(\vec{v}_1, \vec{w}), \dots,
  \tp_q^G(\vec{v}_m, \vec{w})\big).
\]
Again, we need an ``abstract'' notion of type over a sequence.
A \emph{\((\Lambda, q, \vec{k}, \ell)\)-type} for a tuple
\(\vec{k} = (k_1, \dots, k_m) \in\NN^m\) is an element of
\[
  \Tp(\Lambda, q, \vec{k}, \ell) \deff \prod_{i=1}^m \Tp(\Lambda, q, k_i, \ell).
\]
Let \(\btheta = (\theta_1, \dots, \theta_m) \in \Tp(\Lambda, q, \vec{k}, \ell)\).
For a labeled graph \(G\), a sequence \(\Ca = (\vec{v}_1, \dots, \vec{v}_m) \in (V(G))^{\vec{k}}\),
and a tuple \(\vec{w} \in (V(G))^\ell\), we write \(G \models \btheta(\Ca, \vec{w})\)
if \(G \models \theta_i(\vec{v}_i, \vec{w})\) for all \(i \in [m]\).
Note that \(G \models \btheta(\Ca, \vec{w}) \iff \tp_q^G(\Ca, \vec{w}) = \btheta\).
For a type \(\btheta \in \Tp(\Lambda, q, \vec{k}, \ell)\),
a \(\Lambda\)-labeled graph \(G\), and a sequence \(\Ca \in (V(G))^{\vec{k}}\),
we let
\[
  \btheta(\Ca, G) \deff \bigsetc{\vec{w} \in (V(G))^\ell}{G \models \theta(\Ca, \vec{w})}.
\]
If \(\btheta(\Ca, G) \neq \emptyset\),
we say that \(\btheta\) is \emph{realizable over \(\Ca\) in \(G\)}.

\subsection{VC Dimension}
For \(q, k, \ell \in \NN\), a formula \(\phi(\vec{x}, \vec{y}) \in \MSO(\Lambda, q, k, \ell,0)\),
a \(\Lambda\)-labeled graph \(G\), and a tuple \(\vec{w} \in (V(G))^\ell\),
we let
\[\phi(G, \vec{w}) \deff \bigsetc{\vec{v} \in (V(G))^k}{G \models \phi(\vec{v}, \vec{w})}.\]
For a set \(X \subseteq (V(G))^k\), we let
\[
  H_{\phi}(G,X) \deff \bigsetc{X \cap \phi(G, \vec{w})}{\vec{w} \in (V(G))^\ell}.
\]
We say that \(X\) is \emph{shattered} by \(\phi\) if \(H_\phi(G,X) = 2^X\).
The \emph{VC dimension} \(\VC(\phi, G)\) of \(\phi\) on \(G\) is the maximum \(d \in \NN\)
such that there is a set \(X \subseteq V(G)^k\) of cardinality \(\abs{X} = d\)
that is shattered by \(\phi\).
In this paper, we are only interested in finite graphs,
but for infinite \(G\), we let \(\VC(\phi, G) \deff \infty\) if the maximum does not exist.
For a class \(\CC\) of \(\Lambda\)-labeled graphs, the VC dimension of \(\phi\) over \(\CC\),
\(\VC(\phi, \CC)\), is the least \(d\) such that \(\VC(\phi, G) \leq d\) for all \(G \in \CC\)
if such a \(d\) exists, and \(\infty\) otherwise.

\begin{lemma}[{\cite[Theorem~17]{grohe_learnability_2004}}]\label{lem:gt}
  There is a function \(g \colon \NN^5 \to \NN\) such that the following holds.
  Let \(\Lambda\) be a label set, let \(\CC\) be the class of all \(\Lambda\)-graphs
  of clique-width at most \(c\), and let \(q, k, \ell \in \NN\).
  Then \(\VC(\phi, \CC) \leq g(c, \abs{\Lambda}, q, k, \ell)\)
  for all \(\phi \in \MSO(\Lambda, q, k, \ell, 0)\).
\end{lemma}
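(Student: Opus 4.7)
The plan is to translate the VC dimension question into one about \MSO{} on the clique-width expression tree and then invoke the tree-automaton characterization of \MSO{} to bound the number of realizable labelings.

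Fix a graph $G \in \CC$ together with a $c$-expression $\Xi$ describing $G$. View $\Xi$ as a finite labeled ordered tree $T_\Xi$ whose leaves correspond bijectively to $V(G)$ and whose internal nodes carry labels encoding the operators $\uplus$, $\eta_{P,Q}$, $\rho_{P,Q}$, $\delta_P$ together with their parameters; the effective label set $\Lambda^{\ast}$ of $T_\Xi$ has size bounded by a function of $c$ and $\abs{\Lambda}$. Via the standard construction underlying Courcelle's theorem, each formula $\phi(\vec x, \vec y) \in \MSO(\Lambda, q, k, \ell, 0)$ translates into an equivalent \MSO{} formula $\phi^{\ast}(\vec x, \vec y)$ on $T_\Xi$, with variables ranging over leaves, such that $G \models \phi(\vec v, \vec w)$ iff $T_\Xi \models \phi^{\ast}(\vec v, \vec w)$. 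Each atomic $E(x, y)$ is replaced by an \MSO{} formula asserting a common $\eta_{P, Q}$-ancestor of $x$ and $y$ with appropriate label-tracking; each atomic $P(x)$ is replaced by a formula propagating labels through the $\rho$- and $\delta$-operations on the ancestor path. The quantifier rank of $\phi^{\ast}$ is bounded by $q + f_1(c, \abs{\Lambda})$ for a suitable function $f_1$.

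By the Doner--Thatcher--Wright correspondence, $\phi^{\ast}$ is equivalent to a deterministic bottom-up tree automaton $A$ on $T_\Xi$ (with leaves additionally annotated by the assignment of the free individual variables) whose state space $Q$ has size bounded by a function of $\qr(\phi^{\ast})$, $\abs{\Lambda^{\ast}}$, $k$, and $\ell$. Suppose $X = \set{\vec v_1, \dots, \vec v_d} \subseteq V(G)^k$ is shattered by $\phi$ via parameter tuples $\vec w_S$ for $S \subseteq X$. For each $\vec w \in V(G)^\ell$, the labeling $i \mapsto \bigl(G \models \phi(\vec v_i, \vec w)\bigr)$ is determined by the joint run of $A$ on the $d$ annotated trees obtained by marking $\vec v_i$ and $\vec w$ in $T_\Xi$. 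By a compositional bottom-up induction along $T_\Xi$, where at each node the relevant ``type profile'' of the $d$ instances is shown to live in a set of bounded cardinality---exploiting the bounded state space of $A$ together with the binary $\uplus$-structure of $\Xi$---one concludes that the number of distinct labelings of $X$ realized as $\vec w$ ranges over $V(G)^\ell$ is bounded by a function of $\abs{Q}$, $k$, $\ell$. Since shattering requires $2^d$ distinct labelings, this forces $d \leq g(c, \abs{\Lambda}, q, k, \ell)$.

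The main obstacle is the compositional bound on the number of realized type profiles. A naive bound gives $\abs{Q}^d$, which is insufficient. The non-trivial step is to show that most combinations of individual automaton runs are not jointly realizable on a single tree, reducing the count to a function independent of $d$; this is the combinatorial heart of the VC dimension bound and is established by a Feferman--Vaught-style analysis of the $\uplus$-operations in $\Xi$, using that the ``joint type'' of the $d$ instances decomposes compositionally along the tree while the underlying state space remains fixed.
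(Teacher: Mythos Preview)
The paper does not prove this lemma at all: it is stated with the citation \cite[Theorem~17]{grohe_learnability_2004} and used as a black box. So there is no ``paper's own proof'' to compare against; the question is only whether your sketch stands on its own.

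It does not. The reduction to an \MSO{} formula on the expression tree $T_\Xi$ is fine, and passing to a tree automaton with state set $Q$ of size bounded in $c,\abs{\Lambda},q,k,\ell$ is standard. The gap is exactly where you flag it: you assert that the number of realized labelings of $X=\{\vec v_1,\dots,\vec v_d\}$ can be ``reduced to a function independent of $d$'' by a Feferman--Vaught-style analysis. That assertion is false. Take $\phi(x;y)\equiv(x=y)$ on any graph with at least $d$ vertices and $X=\{v_1,\dots,v_d\}$: as $w$ ranges over $V(G)$ the realized subsets are $\emptyset,\{v_1\},\dots,\{v_d\}$, so there are $d+1$ of them. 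No compositional argument can push this below a function of $d$, because the true count genuinely grows with $d$. What bounded VC dimension buys you (via Sauer--Shelah) is that the growth is \emph{polynomial} in $d$, not that it is constant; and polynomial growth is what rules out shattering, since shattering needs $2^d$.

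In fact the paper runs your intended implication in the opposite direction: \cref{lem:types-bound} takes the VC bound of \cref{lem:gt} as input and, via Sauer--Shelah and \cref{lem:bonnet}, derives a polynomial bound on the number of realizable type profiles. So trying to bound type profiles in order to obtain the VC bound is circular relative to the paper's logic. A correct direct argument would either (i) prove polynomial growth of $\abs{H_\phi(G,X)}$ in $\abs{X}$ straight from the automaton, without assuming bounded VC, or (ii) argue as in Grohe--Tur\'an by exploiting the tree structure to show directly that no large set can be shattered. Neither of these is what your Feferman--Vaught sentence delivers; that decomposition tells you types compose, but gives no bound on how many joint profiles are realizable.
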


\subsection{Parameterized Complexity}
\label{subsec-para-complex}
A \emph{parameterization} $\kappa$ is a function mapping the input $x$ of a problem
to a natural number $\kappa(x) \in \mathbb{N}$.
An algorithm $\mathbb{A}$ is an \emph{fpt algorithm with respect to $\kappa$}
if there is a computable function $f \colon \NN \to \NN$
and a polynomial $p$ such that for every input $x$ the running time of $\mathbb{A}$
is at most $f(\kappa(x)) \cdot p(\abs{x})$.

A \emph{parameterized problem} is a tuple $(Q,\kappa)$.
We say $(Q,\kappa) \in \FPT$ or $(Q,\kappa)$ is \emph{fixed-parameter tractable}
if there is an fpt algorithm with respect to $\kappa$ for $Q$,
and we say \((Q, \kappa)\) is \emph{fixed-parameter linear}
if the polynomial in the running time of the fpt algorithm is linear.
We say \((Q,\kappa) \in \text{para-NP}\) if there is a nondeterministic fpt algorithm
with respect to \(\kappa\) for \(Q\).
If the parameterization is clear from the context, then we omit it.

For two parameterized problems \((Q,\kappa), (Q', \kappa')\),
an \emph{fpt Turing reduction} from \((Q,\kappa)\) to \((Q',\kappa')\)
is an algorithm \(\mathbb{A}\) with oracle access to \(Q'\)
such that \(\mathbb{A}\) decides \(Q\),
\(\mathbb{A}\) is an fpt algorithm with respect to \(\kappa\),
and there is a computable function \(g \colon \NN \to \NN\)
such that on input \(x\),
\(\kappa'(x') \leq g\bigl((\kappa(x)\bigr)\) for all oracle queries with oracle input \(x'\).

For additional background on parameterized complexity, we refer to \cite{flum_parameterized_2006}.
 \section{Tractability for One-Dimensional Training Data on Well-Behaved Classes}
\label{sec:tract}

We start by formalizing the parameterized version of the problem \MSOLearn described in the introduction.
For a training sequence $S$, a graph $G$, and a hypothesis \(h_{\phi, \vec{w}}\),
we say \emph{\(h_{\phi, \bar{w}}\) is consistent with $S$ on $G$}
if for every positive example \((\vec v, +) \in S\), we have \(G \models \phi(\vec v, \vec w)\),
and for every negative example \((\vec v, -) \in S\), we have \(G \not\models \phi(\vec v, \vec w)\).

\problemDefPara{\MSOLearn}
    {$\Lambda$-labeled graph $G$, $q, k, \ell \in \mathbb{N}$, training sequence $S \in (V(G)^{k} \times \{+,-\})^m$}
    {$\kappa \deff |\Lambda|+q+k+\ell$}
    {Return a hypothesis $h_{\phi,\bar{w}}$ consisting of
    \begin{itemize}
        \item a formula $\phi \in \MSO(\Lambda, q, k, \ell,0)$ and
        \item a parameter setting $\bar{w} \in V(G)^{\ell}$
    \end{itemize}
    such that $h_{\phi,\bar{w}}$ is consistent with the training sequence $S$ on $G$, if such a hypothesis exists. Reject if there is no consistent hypothesis.}

The problem \uMSOLearn refers to the \(1\)-dimen\-sional version of the problem \MSOLearn
where the arity \(k\) of the training examples is $1$.
The tractability results for \uMSOLearn are significantly more straightforward than those for the higher-dimensional problem.
This is due to the fact that the full training sequence can be encoded into the graph by only adding two new labels as follows.

\begin{lemma}
  \label{Lemma:codeExamplesIntoStruct}
  Given a \(\Lambda\)-labeled graph $G$,
  an \(\MSO\) formula $\phi(x, \bar{y})$ of quantifier rank $q_\phi$,
  and a training sequence $S \in \bigl(V(G) \times \{+,-\}\bigr)^m$, there is
  \begin{itemize}
      \item a label set $\Lambda'$ of size $|\Lambda| + 2$,
      \item a $\Lambda'$-labeled graph $G_S$,
      \item and an \(\MSO\) formula $\phi'(\bar{y})$ of quantifier rank $q_\phi+1$
  \end{itemize}
  such that for all $\bar{w} \in V(G)^\ell$,
  we have $G_S \models \phi'(\bar{w})$ if and only if
  $h_{\phi,\bar{w}}$ is consistent with $S$ on $G$.
\end{lemma}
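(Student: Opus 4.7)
The plan is to encode each example of $S$ by marking its vertex with one of two fresh unary labels, so that consistency of $h_{\phi,\bar{w}}$ with $S$ becomes expressible by a single universally quantified \(\MSO\) statement on top of $\phi$.

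Concretely, I take fresh symbols $P^+, P^- \notin \Lambda$ and set $\Lambda' \deff \Lambda \cup \set{P^+, P^-}$, so $\abs{\Lambda'} = \abs{\Lambda}+2$. The graph $G_S$ is the $\Lambda'$-expansion of $G$ with
\[
  P^+(G_S) \deff \bigsetc{v \in V(G)}{(v,+) \in S}
  \quad\text{and}\quad
  P^-(G_S) \deff \bigsetc{v \in V(G)}{(v,-) \in S}.
\]
I then define
\[
  \phi'(\bar{y}) \deff \forall x \Bigl(\bigl(P^+(x) \to \phi(x,\bar{y})\bigr)
  \land \bigl(P^-(x) \to \neg\phi(x,\bar{y})\bigr)\Bigr),
\]
which has quantifier rank $q_\phi + 1$ since exactly one universal quantifier is added on top of $\phi$.

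For correctness, fix $\bar{w} \in V(G)^\ell$. Because $\phi$ uses only symbols from $\Lambda$ and the interpretations of $\Lambda$ in $G$ and in $G_S$ coincide, $G \models \phi(v,\bar{w})$ if and only if $G_S \models \phi(v,\bar{w})$ for every $v \in V(G)$. Unfolding $\phi'$, the condition $G_S \models \phi'(\bar{w})$ is equivalent to saying that $G \models \phi(v,\bar{w})$ holds for every $(v,+) \in S$ and fails for every $(v,-) \in S$, i.e.\ that $h_{\phi,\bar{w}}$ is consistent with $S$ on $G$. (In particular, if some $v$ appears both as a positive and as a negative example, then $v \in P^+(G_S) \cap P^-(G_S)$ forces $\phi'(\bar{w})$ to be false for every $\bar{w}$, matching the fact that no consistent hypothesis exists in that case.)

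There is essentially no technical obstacle here: the construction is a one-step encoding trick specific to the $1$-dimensional setting, in which each example is a single vertex, so the entire training sequence is representable by just two unary relations. This is precisely why the lemma suffices to derive the $1$-dimensional tractability result via the bounded-clique-width model-checking machinery of Courcelle, Makowsky, and Rotics, and also why the higher-dimensional case requires the substantially different approach developed in \cref{sec:high-dim-cons}, where marking $k$-tuples by unary predicates is no longer possible.
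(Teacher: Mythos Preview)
Your proof is correct and essentially identical to the paper's: both introduce two fresh unary labels marking the positive and negative example vertices and wrap $\phi$ in a single universal quantifier enforcing the implications $P^+(x)\to\phi$ and $P^-(x)\to\neg\phi$. Your additional remark about the inconsistent case $v\in P^+(G_S)\cap P^-(G_S)$ and the closing contextual comments are extra but accurate.
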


\begin{proof}
Let \(\Lambda' \deff \Lambda \uplus \set{P,N}\) for two new labels \(P, N\).
Furthermore, let \(G_S\) be the \(\Lambda'\)-structure with the same vertex set as \(G\), and the labels
\(R(G_S) = R(G)\) for $R \in \Lambda$,
\(P(G_S) = \setc{v \in V(G)}{(v,+) \in S}\), and
\(N(G_S) = \setc{v \in V(G)}{(v,-) \in S}\).

Based on the \(\MSO(\Lambda, q_\phi, 1, \ell, 0)\) formula \(\phi(x;\vec{y})\),
we define the \(\MSO(\Lambda', q_\phi+1, 0, \ell, 0)\) formula \(\phi'(\bar{y})\) as
\begin{align*}
    \phi' (y_1, \dots, y_\ell) \deff
    \forall x \Bigl(&\bigl(P(x) \rightarrow \phi(x, y_1, \dots, y_\ell)\bigr) \land\\
     &\bigl(N(x) \rightarrow \neg \phi(x, y_1, \dots, y_\ell)\bigr)\Bigr).
\end{align*}
Then, we have $G_S \models \phi'(\bar{w})$
if and only if $h_{\phi,\bar{w}}$ is consistent with the training sequence $S$ on $G$.
\end{proof}

As discussed in \cref{sec-prelim-formulas},
there is a function \(f \colon \NN^4 \to \NN\) such that
\(\abs{\MSO(\Lambda, q,k,\ell, 0)}  \leq f(\abs\Lambda, q,k,\ell)\).
Therefore, to solve \uMSOLearn, we can iterate over all formulas
\(\phi \in \MSO(\Lambda, q,k,\ell, 0)\) and focus on finding a parameter setting
\(\vec w \in V(G)^\ell\) such that \(h_{\phi,\vec w}\) is consistent with \(S\) on \(G\).
Moreover, if the model-checking problem on a graph class with additional labels is tractable,
then finding a consistent parameter setting is tractable as well
by performing model checking on the graph with the encoded training sequence.

\begin{lemma}
  \label{lem:unary-tractability-general}
  Let \(\CC\) be a class of labeled graphs,
  let \(\CC_i\) be the class of all extensions of graphs from \(\CC\) by $i$ additional labels for all \(i \in \NN\),
  let \(f \colon \NN \to \NN\) be a function,
  and let \(c \in \NN\) such that the \(\MSO\)-model-checking problem
  on \(\CC_i\) can be solved in time \(f(\abs{\phi}) \cdot \abs{V(G)}^c\) for all \(i \in \NN\),
  where \(\phi\) is the \(\MSO\) sentence and
  \(G \in \mathcal{C}_i\) is the labeled graph given as input.
There is a function \(g \colon \NN^3 \to \NN\) such that
  \(\uMSOLearn\) can be solved on \(\mathcal{C}\)
  in time \(g(\abs{\Lambda}, q, \ell) \cdot \abs{V(G)}^{c+1}\).
\end{lemma}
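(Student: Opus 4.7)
The plan is to use \cref{Lemma:codeExamplesIntoStruct} to reduce the consistency search to model checking on graphs from the enriched class $\CC_i$, and then exploit self-reducibility to extract a witness parameter tuple $\bar w$ without paying the naive $|V(G)|^\ell$ cost of brute-force iteration.

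First, I would enumerate all (normalized) formulas $\phi(x;\bar y) \in \MSO(\Lambda, q, 1, \ell, 0)$; as noted in \cref{sec-prelim-formulas}, the number of such formulas is bounded by some $f_1(|\Lambda|, q, \ell)$ and a representative of each equivalence class can be computed effectively. For each candidate $\phi$, apply \cref{Lemma:codeExamplesIntoStruct} to the training sequence $S$ to obtain a graph $G_S \in \CC_2$ over the label set $\Lambda' = \Lambda \uplus \{P,N\}$ and an $\MSO(\Lambda', q+1, 0, \ell, 0)$ formula $\phi'(\bar y)$ with the property that $h_{\phi,\bar w}$ is consistent with $S$ on $G$ iff $G_S \models \phi'(\bar w)$.

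Next, to find a witness $\bar w$, I would use the standard self-reduction trick. Fixing the components of $\bar w$ one coordinate at a time, suppose $w_1,\dots,w_{i-1}$ have already been chosen. For each candidate $v \in V(G)$, form the graph $G_S^{(i,v)} \in \CC_{i+2}$ obtained from $G_S$ by adding $i$ fresh singleton labels $L_1,\dots,L_i$ marking $w_1,\dots,w_{i-1},v$, and model-check the sentence
\[
  \psi_i \;\deff\; \exists y_1 \cdots \exists y_\ell \Bigl(\Land_{j=1}^{i} L_j(y_j) \;\land\; \phi'(y_1,\dots,y_\ell)\Bigr).
\]
If $G_S^{(i,v)} \models \psi_i$, commit $w_i \deff v$ and move on to coordinate $i+1$; otherwise try the next $v$. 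A preliminary check of $\exists \bar y\, \phi'(\bar y)$ on $G_S$ (the case $i=0$) determines whether any consistent hypothesis of shape $\phi$ exists at all; if no candidate $\phi$ succeeds, reject.

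The correctness of this procedure is immediate from the semantics of existential quantification: once $\exists \bar y\, \phi'$ holds, each subsequent coordinate admits at least one extending witness. For the running time, the total number of oracle calls per formula is at most $\ell \cdot |V(G)| + 1$, each call being a model-checking query on a graph in $\CC_{i+2}$ with $i+2 \leq \ell + 2$, hence in $\CC_{\ell+2}$. The size of the queried sentence is bounded in terms of $|\Lambda|, q, \ell$ only, so each query costs $f(f_2(|\Lambda|,q,\ell)) \cdot |V(G)|^c$ for some function $f_2$, and summing over the $f_1(|\Lambda|,q,\ell)$ candidate formulas yields the desired bound $g(|\Lambda|,q,\ell) \cdot |V(G)|^{c+1}$. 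The only conceptual obstacle is avoiding the trivial $|V(G)|^\ell$ blow-up from enumerating all parameter tuples, which is precisely what the coordinate-by-coordinate self-reduction circumvents; the rest is bookkeeping.
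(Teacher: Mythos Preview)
Your proposal is correct and follows essentially the same approach as the paper: iterate over all normalized formulas, encode the training sequence via \cref{Lemma:codeExamplesIntoStruct}, and then extract a witness parameter tuple coordinate by coordinate using fresh singleton labels and repeated model checking, for a total of at most $\ell\cdot|V(G)|$ model-checking calls on graphs with at most $\ell+2$ additional labels. The paper's proof is slightly terser but the underlying argument is the same.
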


\begin{proof}
We can iterate over all formulas \(\phi \in \MSO(\Lambda, q, 1, \ell, 0)\)
and encode the training sequence into the graph
using \cref{Lemma:codeExamplesIntoStruct},
resulting in a \(\Lambda'\)-labeled graph \(G_S\)
and a formula \(\phi'\).
There is a consistent parameter setting \(\vec w \in (V(G))^\ell\) if and only if
it holds that \(G_S \models \exists y_1, \dots \exists y_\ell \phi'(y_1, \dots, y_\ell)\),
which can be checked using an algorithm for the \(\MSO\)-model-checking problem.

We enforce assigning a specific vertex $v_1 \in V(G)$ to the
variable $y_1$ by building a new graph \(G_{S,v_1}\) and encoding the
parameter choice as a new label \(P_{y_1}\) with \(P_{y_1}(G_{S,v_1})=\{v_1\}\).
For $\vec w = (w_1, \dots ,w_\ell) \in V(G)^\ell$ where $w_1 = v_1$,
we have $G_S \models \phi'(\vec w)$ if and only if
\[G_{S,v_1} \models \exists y_1 \dots \exists y_\ell \bigl(P_{y_1}(y_1)
\land \phi'(y_1, y_2 \dots, y_\ell)\bigr).\]
We then perform model checking on $G_{S,v_1}$ to see if
we have chosen a vertex $v_1$ for $y_1$ that can be extended into a
consistent parameter setting.
Once such a vertex is found, we can continue
with the next variable. In total, we add \(\ell+2\) new labels to the graph.
In this way, a consistent
parameter setting can be computed in at most $\abs{V(G)}\cdot \ell$ rounds of model checking.
\end{proof}

If the input graph is given in the form of a $c$-expression,
then the \(\MSO\) model-checking problem is fixed-parameter linear on
classes of bounded clique-width \cite{courcelle_linear_1999}.
Therefore, \cref{lem:unary-tractability-general} implies that
there is a function \(g \colon \NN^4 \to \NN\) such that \(\uMSOLearn\)
can be solved in time \(g(\cw(G), \abs{\Lambda}, q, \ell) \cdot \abs{G}^2\).

\Cref{thm:1d-tract-main} improves this bound for classes of graphs of bounded clique-width even further,
showing that the problem \(\uMSOLearn\) can be solved in time linear in the size of the graph.

\oneDConsistentLearning*

Since graphs of tree-width \(c_t\) have a clique-width of
at most \(3\cdot2^{c_t-1}\) \cite{corneil_relationship_tw_cw},
\cref{thm:1d-tract-main} implies that for classes of graphs of bounded tree-width,
\(\uMSOLearn\) is fixed-parameter linear as well.
Moreover, although all background structures we consider in this paper are labeled graphs,
we remark that the result for classes of bounded tree-width also holds on arbitrary relational structures
and a corresponding version of \(\uMSOLearn\).

The remainder of this section is dedicated to the proof of \cref{thm:1d-tract-main}.
As argued above,
since the number of possible formulas to check can be bounded in terms of
\(\abs{\Lambda}\), \(q\), and \(\ell\),
it suffices to show how to find a parameter setting \(\vec{w}\)
for a given formula \(\phi\) such that the hypothesis \(h_{\phi, \vec{w}}\)
is consistent with the training sequence \(S\),
or reject if there is no such parameter setting.

\subsection{Preprocessing the Input Graph}
\label{subsec:clique-pre}

Let \(G\) be a \(\Lambda\)-labeled graph
and let \(S \in \bigl(V(G) \times \set{+,1}\bigr)^m\)
be the training sequence given as input for \uMSOLearn.
We assume that we are given \(G\)
in form of a \(c\)-expression \(\Xi\).
Let \(\phi \in \MSO(\Lambda, q, 1, \ell, 0)\)
be the formula for which we want to find a consistent parameter setting.

As described in \cref{Lemma:codeExamplesIntoStruct},
we encode the training sequence \(S\) into \(G\).
This results in a \(\Lambda'\)-labeled graph \(G_S\)
(with \(\abs{\Lambda'} = \abs{\Lambda} + 2\)),
a corresponding \((c+2)\)-expression \(\Xi_S\),
and an \(\MSO(\Lambda', q+1, 0, \ell, 0)\) formula \(\phi_2(\bar{y})\)
such that
\[
    G_S \models \phi_2(\vec w) \iff \text{\(h_{\phi,\vec w}\) is consistent with $S$ on $G$}.
\]

Further, using the following lemma,
we transform \(\phi_2\) into a formula \(\phi_3\)
that has free set variables instead of free individual variables.
This allows us to directly use the results from \cite{courcelle_linear_1999}.

\begin{restatable}{lemma}{removeFOquantifier}
  \label{lemma:remove-FO-quantifier}
  For a \(\Lambda'\)-labeled graph $G_S$
  and an \(\MSO(\Lambda', q+1, 0, \ell, 0)\) formula \(\phi_2(y_1, \dots, y_\ell)\),
  there is an \(\MSO(\Lambda', q+\ell+3, 0, 0, \ell)\) formula $\phi_3(X_1, \dots, X_\ell)$
  such that for all $v_1, \dots, v_\ell \in V(G_S)$, we have that
  \[G_S \models \phi_2(v_1, \dots, v_\ell) \iff
  G_S \models \phi_3(\set{v_1}, \dots, \set{v_\ell}).\]
  Moreover, for all \(V_1, \dots, V_\ell \subseteq V(G_S)\),
  we have
  \(G_S \not\models \phi_3(V_1, \dots, V_\ell)\) if
  \(\abs{V_i} \neq 1\) for any \(i \in [\ell]\).
\end{restatable}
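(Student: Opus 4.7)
The plan is to eliminate the free individual variables $y_1, \dots, y_\ell$ of $\phi_2$ by replacing each of them with a free set variable $X_i$ that is forced to denote a singleton $\set{v_i}$. The construction is essentially syntactic, and the only thing to verify is the quantifier-rank bound.

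First I would introduce the auxiliary singleton predicate
\[
  \mathrm{sing}(X) \deff \exists x \bigl(X(x) \land \forall y (X(y) \to y = x)\bigr),
\]
which has quantifier rank $2$ and is satisfied in a structure exactly when $X$ is interpreted by a set of size one. Using this, I would then set
\[
  \phi_3(X_1, \dots, X_\ell) \deff
  \Land_{i=1}^\ell \mathrm{sing}(X_i)
  \;\land\;
  \forall y_1 \cdots \forall y_\ell
  \Bigl( \Land_{i=1}^\ell X_i(y_i) \to \phi_2(y_1, \dots, y_\ell) \Bigr).
\]
If $\abs{V_i} \neq 1$ for some $i$, then $\mathrm{sing}(X_i)$ fails and hence $\phi_3$ is false, which gives the second part of the statement. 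Conversely, if every $X_i$ is the singleton $\set{v_i}$, then the universally quantified implication forces $y_i = v_i$, so $\phi_3(\set{v_1},\dots,\set{v_\ell})$ holds iff $\phi_2(v_1,\dots,v_\ell)$ does, giving the equivalence.

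For the quantifier rank I would observe that the $\ell$-fold universal block contributes $\ell$ quantifiers, so the second conjunct has quantifier rank at most $\qr(\phi_2) + \ell = q + 1 + \ell$, while the first conjunct is a conjunction of formulas of quantifier rank $2$. Taking the maximum and allowing a small slack for the outer Boolean combination (and for any extra quantifiers introduced by normalization of the resulting formula into the prescribed normal form), the total quantifier rank is bounded by $q + \ell + 3$, as claimed. The main bookkeeping step will be checking that after normalization the formula still lies in $\MSO(\Lambda', q+\ell+3, 0, 0, \ell)$; since normalization is quantifier-rank preserving up to the finite set of representatives, this causes no issue. I do not anticipate a genuine mathematical obstacle here: the content of the lemma is precisely that individual variables can be simulated by singleton set variables at the cost of a bounded additive increase in quantifier rank.
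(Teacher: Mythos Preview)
Your proposal is correct and follows essentially the same approach as the paper: both simulate the free individual variables $y_i$ by singleton set variables $X_i$ using a quantifier-rank-$2$ singleton predicate, then bind the $y_i$ under a block of $\ell$ quantifiers. The only cosmetic difference is that the paper binds the $y_i$ existentially (placing the singleton constraint $Sing(X_i,y_i)$ inside the block), whereas you bind them universally with the singleton test as a separate conjunct; both variants yield the stated equivalence and the quantifier-rank bound (your remark about Boolean combinations adding slack is unnecessary, since conjunctions do not increase quantifier rank, but the bound $q+\ell+3$ holds regardless).
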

\begin{proof}
  Let
  \[Sing(X_i,y_i) \deff X_i(y_i) \land \forall x \forall y \Bigl(\bigl(X_i(x)
  \land X_i(y)\bigr) \rightarrow x=y\Bigr)\]
  be the formula that enforces that the set variable $X_i$ has exactly one element, which is $y_i$.
  Now we can bind all free individual variables and replace them by free set variables
  by setting
  \[
    \phi_3(X_1, \dots, X_\ell) \deff \exists y_1 \dots \exists y_\ell
    \Bigl(\Land_{i \in [\ell]} Sing(X_i, y_i) \land \phi_2(y_1, \dots, y_\ell)\Bigr).
  \]
  This formula satisfies the requirements stated in the lemma.
\end{proof}

\subsection{Computing the Parameters}
\label{subsec:clique-formulas}
For every \(\MSO(\Lambda')\) formula \(\psi(X_1, \dots, X_\ell)\) and every \(\Lambda'\)-labeled graph \(G\),
we let
\begin{equation*}
    \psi(G) \deff \bigsetc{(V_1, \dots, V_\ell)}{G \models \psi(V_1, \dots, V_\ell)}
\end{equation*}
be the set of all parameter settings for which $\psi$ holds in $G$.
Note that any $(V_1,...,V_\ell) \in \phi_3(G_S)$ yields a consistent parameter setting for $\phi$.

Computing all possible parameter settings is intractable,
as the size of $\phi_3(G_S)$ might be up to $\abs{V(G_S)}^{\ell}$.
Instead, we follow the construction from \cite{courcelle_linear_1999} to compute a subset of $\phi_3(G_S)$
that is empty if and only if $\phi_3(G_S)$ is empty.
Let \(q' \deff q + \ell + 3\).
We use the following three
results to recursively compute the subset.

\begin{lemma}[\cite{courcelle_linear_1999}]
\label{lemma:sat-unary-trans}
  For every \(P, Q \in \Lambda'\) with \(P \neq Q\),
  for every operation \(f \in \set{\rho_{P, Q}, \eta_{P, Q}}\),
  and for every $\psi \in \MSO(\Lambda',q',0,0,\ell)$,
  there is a formula $\psi' \in \MSO(\Lambda',q',0,0,\ell)$ of the same quantifier rank
  such that for all \(\Lambda'\)-labeled graphs \(G'\),
  \begin{equation*}
      \psi(f(G')) = \psi'(G').
  \end{equation*}
  Moreover, \(\psi'\) can be computed from \(\psi\).
\end{lemma}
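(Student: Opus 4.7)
The plan is to prove this by a purely syntactic translation of $\psi$, carried out by structural induction on the formula, with the observation that only atomic subformulas need to be modified and therefore the quantifier rank is preserved automatically.

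First I would observe that both operations $f \in \{\rho_{P,Q}, \eta_{P,Q}\}$ leave the vertex set unchanged, i.e.\ $V(f(G')) = V(G')$, so an assignment of subsets $V_1,\dots,V_\ell \subseteq V(G')$ to the free set variables $X_1,\dots,X_\ell$ is valid for both $G'$ and $f(G')$. Hence the translation only needs to adjust the interpretations of the atomic predicates $E, P, Q$ (and leave $X_i(x)$ and equality untouched). Concretely, for each operation I would define an operator $t_f$ on atomic formulas and extend it homomorphically through Boolean connectives and quantifiers:
\begin{itemize}
    \item For $f = \rho_{P,Q}$: set $t_f(P(x)) \deff \bot$, $t_f(Q(x)) \deff Q(x) \vee P(x)$, and leave every other atomic formula unchanged.
    \item For $f = \eta_{P,Q}$: set
    \[t_f(E(x,y)) \deff E(x,y) \vee \bigl(x \neq y \wedge ((P(x) \wedge Q(y)) \vee (P(y) \wedge Q(x)))\bigr),\]
    and leave every other atomic formula unchanged.
\end{itemize}
I would then put $\psi' \deff t_f(\psi)$, extended by: $t_f(\neg \chi) \deff \neg t_f(\chi)$, $t_f(\chi_1 \vee \chi_2) \deff t_f(\chi_1) \vee t_f(\chi_2)$, $t_f(\exists x\, \chi) \deff \exists x\, t_f(\chi)$, $t_f(\exists X\, \chi) \deff \exists X\, t_f(\chi)$, and symmetrically for the other connectives/quantifiers.

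The correctness claim $\psi(f(G')) = \psi'(G')$ I would prove by induction on $\psi$. The base case is the heart of the argument and is just unfolding the definitions of $\rho_{P,Q}$ and $\eta_{P,Q}$: a vertex $v$ satisfies $P$ in $\rho_{P,Q}(G')$ iff never, satisfies $Q$ iff it was in $P(G') \cup Q(G')$; and a pair $(v,w)$ satisfies $E$ in $\eta_{P,Q}(G')$ iff it was already an edge in $G'$ or $v \ne w$ and one endpoint is in $P(G')$ while the other is in $Q(G')$. The inductive step is immediate for Boolean connectives, and for quantifiers it follows because $V(f(G')) = V(G')$, so the ranges of both first-order and set quantifiers coincide in $G'$ and $f(G')$.

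Finally, since $t_f$ only replaces atomic subformulas by quantifier-free subformulas and commutes with every quantifier and connective, we have $\qr(t_f(\psi)) = \qr(\psi)$; in particular $\psi' \in \MSO(\Lambda', q', 0, 0, \ell)$. The translation is computable directly from $\psi$ and $f$. I do not foresee a genuine obstacle here: the only thing to be careful about is the $x \ne y$ side condition in the edge case (matching the definition of $\eta_{P,Q}$ which only adds edges between distinct vertices) and the fact that under $\rho_{P,Q}$ the label $P$ is erased rather than merely renamed, which is why $t_f(P(x))$ must be $\bot$ and not $Q(x)$.
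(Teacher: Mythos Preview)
Your proposal is correct. The paper does not give its own proof of this lemma at all---it simply cites \cite{courcelle_linear_1999}---so there is nothing to compare against; your syntactic-translation argument is the standard proof and matches what the paper later invokes (for the closely related \cref{lem:eta-rho}) as an instance of the Theorem on Syntactic Interpretations.
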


It is easy to see that this also holds for the operation \(\delta_P\).

\begin{lemma}
\label{lemma:sat-delete}
  For every \(P \in \Lambda'\)
  and $\psi \in \MSO(\Lambda',q',0,0,\ell)$,
  there is a formula $\psi' \in \MSO(\Lambda',q',0,0,\ell)$ of the same quantifier rank
  such that for all \(\Lambda'\)-labeled graphs \(G'\),
  \begin{equation*}
      \psi(\delta_P(G')) = \psi'(G').
  \end{equation*}
  Moreover, \(\psi'\) can be computed from \(\psi\) by replacing all occurrences of \(P(x)\) in \(\psi\) by \(\bot\).
\end{lemma}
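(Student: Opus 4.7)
The plan is to verify directly that the syntactic substitution suggested in the lemma statement works, arguing by induction on the structure of $\psi$. Concretely, I would define $\psi'$ to be the formula obtained from $\psi$ by replacing every atomic subformula of the form $P(x)$ (for any individual variable $x$) by $\bot$. Since the substitution does not introduce or remove any quantifiers, the quantifier rank of $\psi'$ equals that of $\psi$, and $\psi'$ uses only relation symbols from $\Lambda'$ and the same free set variables $X_1,\dots,X_\ell$ as $\psi$, so $\psi' \in \MSO(\Lambda',q',0,0,\ell)$ as required.

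A preliminary conceptual point is how to evaluate $\psi$ on $\delta_P(G')$ at all, since formally $\delta_P(G')$ is a $(\Lambda'\setminus\{P\})$-graph. The natural convention, implicit in the surrounding development, is to identify $\delta_P(G')$ with its unique $\Lambda'$-expansion in which $P$ is interpreted as the empty set; under this convention the statement $\psi(\delta_P(G')) = \psi'(G')$ is meaningful.

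The main claim to prove is that for every assignment $\vec V = (V_1,\dots,V_\ell)$ of subsets of $V(G')$ to the free set variables,
\[
\delta_P(G') \models \psi(\vec V) \iff G' \models \psi'(\vec V).
\]
I would prove this by induction on the construction of $\psi$. For the base case of an atomic subformula $P(x)$, the left-hand side evaluates to false in $\delta_P(G')$ because $P$ is empty there, and the right-hand side is $\bot$ in $G'$, which is likewise false. All other atomic formulas (equality, the edge relation, labels $R \in \Lambda'\setminus\{P\}$, and set membership $X_i(x)$) are unchanged by the substitution and their interpretations in $\delta_P(G')$ and $G'$ coincide. The inductive steps for Boolean connectives and for both individual and set quantifiers are immediate, since the substitution commutes with connectives and quantifiers and neither operation alters the universe $V(G') = V(\delta_P(G'))$.

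There is no real obstacle here; the only mildly delicate point is the preliminary convention for interpreting a $\Lambda'$-formula over a $(\Lambda'\setminus\{P\})$-graph, and once that is fixed the argument reduces to a routine structural induction that mirrors the analogous (more substantial) statements in Lemma~\ref{lemma:sat-unary-trans}.
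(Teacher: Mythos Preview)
Your proposal is correct and is exactly the argument the paper has in mind: the paper itself offers no proof beyond stating the construction (replace $P(x)$ by $\bot$) and prefacing the lemma with ``it is easy to see,'' so your structural induction simply spells out what the paper leaves implicit. Your remark about interpreting a $\Lambda'$-formula on a $(\Lambda'\setminus\{P\})$-graph via the empty-$P$ expansion is a sensible clarification of a point the paper glosses over.
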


\begin{lemma}[\cite{makowsky_fefermanvaught_2004}, Feferman--Vaught for \MSO]
\label{lemma:sat-binary-trans}
For every $\psi \in \MSO(\Lambda',q',0,0,\ell)$,
there are $\MSO(\Lambda',q',0,0,\ell)$ formulas
$\theta_1, \dots, \theta_m$ and $\chi_1, \dots, \chi_m$
such that all formulas have the same free variables as $\psi$
and have quantifier rank no larger than the quantifier rank of $\psi$,
and for every two $\Lambda'$-labeled graphs $G_1$ and $G_2$
such that \(G_1\) and \(G_2\) are disjoint,
\begin{equation*}
    \psi(G_1 \uplus G_2) = \bigcup_{i=1}^m \theta_i(G_1) \boxtimes \chi_i(G_2),
\end{equation*}
where
\begin{align*}
  A \boxtimes B \deff
  &\bigl\{(A_1 \cup B_1, \dots, A_n \cup B_n) \bigmid \\
  &\qquad (A_1, \dots, A_n) \in A, (B_1, \dots, B_n) \in B\bigr\}.
\end{align*}
Moreover, $\theta_1, \dots, \theta_m$ and $\chi_1, \dots, \chi_m$
can be computed from \(\psi\).
\end{lemma}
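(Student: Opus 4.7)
The plan is to prove the result by structural induction on $\psi$, strengthened to allow free individual variables as well so that the quantifier cases go through. Concretely, I prove: for every formula $\psi(x_1, \dots, x_k, X_1, \dots, X_\ell) \in \MSO(\Lambda', q', k, 0, \ell)$ and every map $\sigma \colon [k] \to \set{1,2}$ indicating a placement of the individual variables, there are finitely many pairs $(\theta_i^\sigma, \chi_i^\sigma)$ with $\qr(\theta_i^\sigma), \qr(\chi_i^\sigma) \leq \qr(\psi)$ such that, for all disjoint $\Lambda'$-labeled graphs $G_1, G_2$, all tuples $\vec v$ with $v_j \in V(G_{\sigma(j)})$, and all sets $V_1, \dots, V_\ell \subseteq V(G_1) \cup V(G_2)$,
\[
G_1 \uplus G_2 \models \psi(\vec v, V_1, \dots, V_\ell) \iff \exists i\colon\ G_1 \models \theta_i^\sigma(\vec v|_1, V_1 \cap V(G_1), \dots) \text{ and } G_2 \models \chi_i^\sigma(\vec v|_2, V_1 \cap V(G_2), \dots),
\]
where $\vec v|_j$ collects the components of $\vec v$ with $\sigma$-label $j$. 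The lemma follows by specialising to $k=0$, since then the equivalence above for the set witnesses $V_1, \dots, V_\ell$ reads exactly as $\psi(G_1 \uplus G_2) = \bigcup_i \theta_i^\sigma(G_1) \boxtimes \chi_i^\sigma(G_2)$.

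Base cases are immediate: edge atoms $E(x_i,x_j)$ and equality atoms $x_i = x_j$ can only hold when $\sigma(i) = \sigma(j)$ (since $\uplus$ adds no cross edges and vertex sets are disjoint), in which case they are placed on the relevant side with $\top$ on the other; set-membership atoms $X_r(x_j)$ and label atoms $R(x_j)$ are placed on the $\sigma(j)$-side. Conjunction and disjunction are handled by taking pairwise conjunctions or concatenating the pair lists. For the first-order quantifier $\exists x_{k+1}\,\psi$, I enumerate both extensions of $\sigma$ to $[k+1]$, apply the induction hypothesis to each, and existentially quantify $x_{k+1}$ inside the component on the side where it lives, leaving the other untouched. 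For the second-order quantifier $\exists X_{\ell+1}\,\psi$, a witness $V_{\ell+1}$ in $G_1 \uplus G_2$ splits canonically into $V_{\ell+1} \cap V(G_1)$ and $V_{\ell+1} \cap V(G_2)$, and conversely any two such pieces combine to a witness; hence, given pairs $(\theta_i,\chi_i)$ for $\psi$, the pairs $\bigl(\exists X_{\ell+1}\,\theta_i,\ \exists X_{\ell+1}\,\chi_i\bigr)$ work for $\exists X_{\ell+1}\,\psi$, and the quantifier rank grows by exactly one on each side in lockstep with the outer quantifier.

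The main obstacle is negation: complementing $\bigvee_i (\theta_i \wedge \chi_i)$ yields $\bigwedge_i (\neg\theta_i \vee \neg\chi_i)$, which is \emph{a priori} not in the required disjunction-of-products form, and naively distributing mixes formulas over the two sides. I resolve this by first refining the decomposition to a \emph{rectangular} one, exploiting the fact (used throughout \cref{subsec-prelim-types}) that over a fixed finite vocabulary there are, up to logical equivalence, only finitely many $\MSO$ formulas of quantifier rank at most $q'$ with a prescribed set of free variables. Enumerating these equivalence classes as $\{\alpha_r\}_{r \in R}$ on the $G_1$-side and $\{\beta_s\}_{s \in S}$ on the $G_2$-side, any pair list $\{(\theta_i,\chi_i)\}$ can be rewritten as $\bigcup_{(r,s) \in I} \alpha_r(G_1) \boxtimes \beta_s(G_2)$ for some $I \subseteq R \times S$, because each $\theta_i$ is equivalent to a disjunction of $\alpha_r$'s and similarly for $\chi_i$, and the $\alpha_r$'s (resp.\ $\beta_s$'s) give mutually exclusive possibilities on their side. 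Once in rectangular form, the complement is immediately the rectangular decomposition indexed by $(R \times S) \setminus I$, proving the claim for $\neg\psi$. Since $\abs R$ and $\abs S$ are finite and the quantifier rank is preserved throughout all Boolean manipulations, the induction goes through with the required bounds on $m$ and $\qr$; computability of all pair lists from $\psi$ follows since the type refinement and Boolean manipulations are effective.
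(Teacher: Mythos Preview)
The paper does not give its own proof of this lemma: it is stated with a citation to Makowsky's survey and used as a black box. So there is no ``paper's approach'' to compare against; your proposal supplies a proof where the paper simply imports one.

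Your argument is the standard structural-induction proof of the Feferman--Vaught decomposition for \(\MSO\), and it is essentially correct. The strengthening to allow free individual variables together with a placement map \(\sigma\) is exactly what is needed for the first-order quantifier step, and your treatment of the second-order quantifier is right: the key observation that a witness set in \(G_1 \uplus G_2\) splits as an \emph{independent} pair of witnesses, one in each component, is precisely what lets you push \(\exists X_{\ell+1}\) into both \(\theta_i\) and \(\chi_i\) simultaneously while keeping the same index \(i\). The negation case via a ``rectangular'' refinement is also the standard move.

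One small imprecision worth fixing: when you write ``enumerating these equivalence classes as \(\{\alpha_r\}\)'' you really want the \(\alpha_r\) to be \emph{Hintikka formulas} (isolating complete \(q'\)-types), not just arbitrary representatives of logical-equivalence classes of formulas. The properties you use --- mutual exclusivity, joint exhaustiveness, and that every formula of rank \(\le q'\) is equivalent to a disjunction of them --- are properties of Hintikka formulas, not of arbitrary equivalence-class representatives. With that clarification the rectangular-complement argument goes through cleanly and the quantifier-rank bound is preserved, since rank-\(q'\) Hintikka formulas themselves have quantifier rank \(q'\).
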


In the following, we view the $(c+2)$-expression \(\Xi_S\) as a tree of operations which we refer to by $\T_{\Xi_S}$.
We start with assigning the formula $\phi_3$ to the root of the tree $\T_{\Xi_S}$.
From there, we assign formulas to each inner node of the tree.
If a node represents an operation \(\rho_{P,Q}\), \(\eta_{P,Q}\) or \(\delta_P\),
then we apply \cref{lemma:sat-unary-trans} or \cref{lemma:sat-delete}
respectively to every formula assigned to the node and assign the
resulting formulas to the child node.
If the node represents an operation \(\uplus\),
then we apply \cref{lemma:sat-binary-trans} to every formula assigned to the node
and assign the formulas \(\theta_1, \dots, \theta_m\) to the first child
and \(\chi_1, \dots, \chi_m\) to the second child.
Note that the quantifier rank, the number of free variables,
and the vocabulary stay the same in each step.
We can bound the number of formulas assigned no each node
by \(\abs{\MSO(\Lambda',q',0,0,\ell)}\),
which only depends on \(\abs{\Lambda}\), \(q\), and \(\ell\).

After this step, we have a list of formulas $\Psi$ assigned to each node of $\T_{\Xi_S}$.
Starting at the leafs, which correspond to base graphs $G_v$ that only have a single node $v$,
we compute the sets of consistent parameter settings \(\psi(G_v)\) for each formula \(\psi \in \Psi\).
On a graph with a single node, $\psi(G_v)$ can be computed by a simple brute-force algorithm
in time $\bigO\bigl(2^\ell \cdot 2^{q'} \cdot \abs{\phi}\bigr)$.
We then construct a parameter setting bottom up from the leafs,
following \cref{lemma:sat-unary-trans,lemma:sat-delete,lemma:sat-binary-trans}.
\begin{restatable}{lemma}{lemmaSatSingle}
\label{lamma_save_single_form}
With the above-described procedure,
we find a consistent parameter setting \(\bar{V} \in \phi_3(G_S)\)
if and only if we find a consistent parameter setting in an analogous procedure where,
in each node of \(\T_{\Xi_S}\), corresponding to a labeled graph $G'$,
we only keep one parameter setting \(\bar{V}' \in \phi'(G')\) for every formula \(\phi'\)
assigned to the node.
\end{restatable}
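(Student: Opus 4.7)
The plan is to prove both directions of the equivalence separately. The backward direction is immediate: any $\vec V^* \in \phi_3(G_S)$ produced by the restricted procedure at the root is by construction also a witness that $\phi_3(G_S) \neq \emptyset$, so the full procedure finds a consistent parameter setting as well.

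For the forward direction, I would proceed by bottom-up induction on the tree $\T_{\Xi_S}$, establishing the invariant that for every node $t$ with associated labeled graph $G_t$ and every formula $\psi \in \Psi_t$ in the list assigned to $t$, if $\psi(G_t) \neq \emptyset$, then the restricted procedure records some representative $\vec V^*_\psi \in \psi(G_t)$ at $t$. Applied at the root with $\psi = \phi_3$, this yields the claim. The base case, at a leaf corresponding to a base graph $G_v$, holds because the restricted procedure there agrees with the full one: $\psi(G_v)$ is computed exactly by brute force, so a representative is kept precisely when $\psi(G_v) \neq \emptyset$.

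For the unary-operation step at a node $t$ with child $t'$ and operation $f \in \set{\rho_{P,Q}, \eta_{P,Q}, \delta_P}$, \cref{lemma:sat-unary-trans} and \cref{lemma:sat-delete} provide, for each $\psi \in \Psi_t$, a formula $\psi' \in \Psi_{t'}$ with $\psi(G_t) = \psi'(G_{t'})$, so non-emptiness is preserved between parent and child, and the representative kept for $\psi'$ at $t'$ serves directly as the representative for $\psi$ at $t$. The essential case is the disjoint-union step, where $G_t = G_1 \uplus G_2$ and \cref{lemma:sat-binary-trans} yields formulas $\theta_1, \dots, \theta_m$ at the first child and $\chi_1, \dots, \chi_m$ at the second with $\psi(G_1 \uplus G_2) = \bigcup_{i=1}^m \theta_i(G_1) \boxtimes \chi_i(G_2)$. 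If $\psi(G_t) \neq \emptyset$, then some index $i$ witnesses both $\theta_i(G_1) \neq \emptyset$ and $\chi_i(G_2) \neq \emptyset$; by the induction hypothesis, the restricted procedure has representatives $\vec V_1^* \in \theta_i(G_1)$ and $\vec V_2^* \in \chi_i(G_2)$, and their componentwise union lies in $\theta_i(G_1) \boxtimes \chi_i(G_2) \subseteq \psi(G_t)$, providing the required representative at $t$.

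The main obstacle is precisely the binary step: the induction hypothesis must be tight enough that a single representative per formula at each child suffices to construct a representative at the parent, which works only because the Feferman--Vaught decomposition lets us witness non-emptiness of $\psi(G_t)$ through a single index $i$ and a single pair of child representatives for $\theta_i$ and $\chi_i$. A minor bookkeeping point is that the formula lists $\Psi_t$ at every node must be exactly those generated by the top-down pass described before the lemma, so that at each join node the formulas $\theta_i, \chi_i$ whose non-emptiness the induction hypothesis speaks about are genuinely the formulas carried by the children's lists.
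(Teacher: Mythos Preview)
Your proposal is correct and follows essentially the same approach as the paper: the paper's proof also reduces everything to the disjoint-union case and argues, via \cref{lemma:sat-binary-trans}, that $\psi(G_1\uplus G_2)\neq\emptyset$ iff some $\theta_i(G_1)$ and $\chi_i(G_2)$ are both non-empty, which is witnessed precisely by the single stored representatives. Your write-up is a bit more explicit in phrasing the argument as a bottom-up induction with a stated invariant, whereas the paper handles the unary operations in one sentence (they leave the satisfying sets unchanged) and focuses on the join; but the content is the same.
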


\begin{proof}
We show that if we can find a parameter setting in the root using all possible parameter settings,
we can also find a parameter setting in the root using only a single representative setting per node and formula.
Unary operations $\rho_{P,Q}$ and $\eta_{P,Q}$ do not change the number of parameter settings in a node,
and the initialization operation is executed only on leafs.
Now consider the case of $G_1 \uplus G_2$ with formula $\psi$
and with corresponding formulas $\theta_1, \dots, \theta_m$ and $\chi_1, \dots, \chi_m$ in the children.
For all \(i \in [m]\), let \(\vec{g}_{1,i}\) be the single representative setting saved for \(\theta_i\),
where \(\vec{g}_{1,i} = \emptyset\) if \(\theta_i(G_1) = \emptyset\),
and \(\vec{g}_{1,i} \in \theta_i(G_1)\) else.
Analogously, let \(\vec{g}_{2,i}\) be the single representative setting saved for \(\chi_i\) in \(G_2\).

In the above-described procedure, we find a consistent parameter setting
if and only if \(\psi(G_1 \uplus G_2) \neq \emptyset\).
By \cref{lemma:sat-binary-trans},
this holds if and only if
\(\theta_i(G_1) \boxtimes \chi_i(G_2) \neq \emptyset\),
or, equivalently, \(\theta_i(G_1) \neq \emptyset\)
and \(\chi_i(G_2) \neq \emptyset\),
for some \(i \in [m]\).
This holds if and only if \(\bar{g}_{1,i} \neq \emptyset\) and \(\bar{g}_{2,i} \neq \emptyset\)
for some \(i \in [m]\).
These are precisely the cases in which we find a consistent parameter setting
if we only keep one parameter setting for every formula in every node.
\end{proof}
Now, we can reconstruct a parameter setting for the root,
following \cref{lemma:sat-unary-trans,lemma:sat-delete,lemma:sat-binary-trans}.
Since there is only a constant number of formulas assigned to each node, say \(c = g(\abs\Lambda, q, \ell)\) for some $g \colon \NN^3 \to \NN$,
and we only save at most one parameter setting per formula and node,
we compute at most \(c\) settings for each node using \cref{lemma:sat-unary-trans} or \cref{lemma:sat-delete},
and at most \(c^2\) settings for each node using \cref{lemma:sat-binary-trans},
which we can then again reduce to \(c\) settings.
Hence, all in all, there is a function \(f \colon \NN^4 \to \NN\)
such that a consistent hypothesis $h_{\phi,\vec w}$ is found in time
\(\bigO\bigl(\abs{V(G)}\cdot f(c, \abs\Lambda, q, \ell)\bigr)\).
This concludes the proof of \cref{thm:1d-tract-main}.
 \section{Hardness for One-Dimensional Training Data}
\label{sec:hardness}

Previously, we restricted the input graph of the \(\MSO\)-learning problem
to certain well-behaved classes.
Now, we consider the problem \MSOLearn without any restrictions.
Van Bergerem, Grohe, and Ritzert showed
in~\cite{van_bergerem_parameterized_2022} that there is a close relation
between first-order model checking (\FOMc)
and learning first-order formulas.
The fpt-reduction in~\cite{van_bergerem_parameterized_2022}
from model checking to learning yields \(\AWstar\)-hardness
for learning first-order formulas on classes of structures that are not nowhere dense.
It is simple to show (and not surprising) that \MSOLearn is at least as hard as \FOMc.
The more interesting question is whether \MSOLearn is at least as hard
as the model-checking problem for \(\MSO\) sentences (\MSOMc),
which is defined as follows.

\problemDefPara{\MSOMc}
{\(\Lambda\)-labeled graph \(G\), \(\MSO(\Lambda)\) sentence \(\phi\)}
{\(\abs{\phi}\)}
{Decide whether \(G \models \phi\) holds.}

We give a positive answer,
which even holds for the \(\MSO\)-learning problem with only one-dimensional training data
where we restrict the training sequence to contain at most two training examples.

\begin{restatable}{lemma}{hardness}
  \label{lemma:reduction}
  The model-checking problem \(\MSOMc\) is fpt Turing reducible to
  \(\problemFont{1D-MSO}\)-\(\problemFont{Consistent-Learn}\)
  where we restrict the training sequence \(S\) given as input
  to have length at most \(2\).
\end{restatable}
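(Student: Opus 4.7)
I plan to set up an fpt Turing reduction from $\MSOMc$ to $\uMSOLearn$ with training sequences of length at most two. Given $(G,\phi)$ with $\phi$ an $\MSO$ sentence of quantifier rank $q_\phi$, I would construct a $\Lambda^*$-labeled graph $G^*$ from $G$ and $\phi$, select two vertices $v^+, v^- \in V(G^*)$, choose parameters $q, \ell$ bounded by a computable function of $\abs\phi$, and query the oracle once with the training sequence $S \deff ((v^+, +), (v^-, -))$. The reduction returns ``yes'' exactly when the oracle returns a consistent hypothesis. The starting observation is that a hypothesis $h_{\phi', \vec w}$ is consistent with $S$ on $G^*$ iff $v^+$ and $v^-$ realize distinct $q$-types over $\vec w$ in $G^*$; hence the entire task is to engineer $G^*$ so that the existence of such a separating $\vec w$ is equivalent to $G \models \phi$.

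\paragraph*{Construction and correctness.}
To realize this equivalence I would form $G^* \deff G \uplus H_\phi$ (possibly with a few edges or labels interconnecting the two parts) where $H_\phi$ is a small ``$\phi$-gadget'' of size $\bigO(\abs\phi)$ whose labels encode the syntax tree of $\phi$ and which places $v^+, v^-$ in carefully chosen positions. The gadget would be designed so that there exists an explicit formula $\phi_{\mathrm{eval}}(x; \vec y) \in \MSO(\Lambda^*, q, 1, \ell, 0)$ that uses $H_\phi$ as a template to ``simulate'' the evaluation of $\phi$ on $G$, and whose value at $v^+$ differs from its value at $v^-$ precisely when $G \models \phi$. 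For the forward direction ($G \models \phi$) I would simply exhibit $\phi' \deff \phi_{\mathrm{eval}}$ with a suitable $\vec w$ and verify consistency directly. For the backward direction ($G \not\models \phi$) I would invoke a Feferman--Vaught-style composition theorem for $\MSO$ types in the spirit of \cref{lemma:sat-binary-trans}: any candidate formula $\phi'$ applied on $G^* = G \uplus H_\phi$ decomposes as a Boolean combination of $\MSO$ sentences about $G$ paired with $\MSO$ formulas about $H_\phi$, and the design of $H_\phi$ would guarantee that whenever the $G$-side sentences required for a separation are not satisfied, no $H_\phi$-side formula can produce a separation on its own. Splitting the parameter tuple $\vec w$ into its $G$-part and its $H_\phi$-part, and exploiting the symmetry of $H_\phi$ around $v^+, v^-$ together with the failure of $\phi$ on $G$, then yields $\tp_q^{G^*}(v^+, \vec w) = \tp_q^{G^*}(v^-, \vec w)$ for every $\vec w$.

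\paragraph*{Main obstacle.}
The hardest step will be the design of $H_\phi$ and the accompanying backward direction. The gadget must be asymmetric enough that the composition with $G$ exposes a distinction between $v^+$ and $v^-$ exactly when $G \models \phi$, yet symmetric enough that no ``shortcut'' formula (one whose $G$-side is weaker than $\phi$) can break the symmetry unaided. Carefully managing this tension, together with controlling the interaction of the $\ell$ parameters with the composition of types across the disjoint union, is where the bulk of the technical work lies; this is also exactly why the restriction to a training sequence of length two is the right regime, since it reduces the whole reduction to a single type-separation question between $v^+$ and $v^-$.
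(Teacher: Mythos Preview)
Your proposal has a genuine gap that breaks the single-call reduction. You allow $\ell \geq 1$ (you explicitly speak of ``splitting the parameter tuple $\vec w$ into its $G$-part and its $H_\phi$-part''), but then the oracle \emph{always} returns a consistent hypothesis, independently of $G$ and $\phi$: take $\phi'(x;y_1,\dots,y_\ell) \deff (x = y_1)$ and $\vec w \deff (v^+,\dots)$. This formula has quantifier rank $0$, and $h_{\phi',\vec w}(v^+) = +$, $h_{\phi',\vec w}(v^-) = -$. No amount of symmetry in $H_\phi$ prevents this, because the parameter itself breaks the symmetry. Your Feferman--Vaught argument for the backward direction tacitly assumes that any separating formula must ``go through'' the structure of $G$ and $H_\phi$; the equality atom does not. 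If instead you set $\ell = 0$, the oracle question becomes ``do $v^+$ and $v^-$ have distinct $q$-types in $G^*$?'', and it is far from clear how a single gadget $H_\phi$ can make this one bit coincide with $G \models \phi$: the type of $v^+$ depends on the full $q$-theory of $G$, which is strictly finer than the single sentence $\phi$. Your sketch gives no construction here, and the ``symmetry/asymmetry tension'' you flag is exactly the obstruction.

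The paper's proof takes a completely different route and avoids this trap. It never tries to answer $G \models \phi$ with one oracle call. Instead it recursively decomposes $\phi$: for $\phi = \exists x\,\psi$ it uses oracle calls with $\ell = 0$ and training sequence $((v,+),(v',-))$ to test whether two vertices have the same $\qr(\psi)$-type, thereby partitioning $V(G)$ into boundedly many type classes and picking one representative per class; it then encodes the chosen vertex via fresh labels and recurses on $\psi$. For $\phi = \exists X\,\psi$ it does the analogous thing for set types, but since it cannot enumerate all subsets, it grows candidate sets one element at a time over $n$ rounds, using a two-copies-of-$G$ construction so that comparing $(q{+}2)$-types of sets reduces to comparing vertex types of two auxiliary apex vertices $w_1,w_2$ (still with $\ell = 0$, $|S| = 2$). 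The crucial design choice is $\ell = 0$ throughout: with no parameters, the oracle answers exactly the type-equality question, and nothing trivial like $x = y_1$ is available. The price is many oracle calls, which is why the lemma states a Turing reduction rather than a many-one reduction.
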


\(\MSOMc\) is para-NP-hard under fpt Turing reductions
as even for some fixed sentence \(\phi\),
the corresponding model-checking problem can be NP-hard
(for example for a formula defining \problemFont{3-Colorability},
see \cite{flum_parameterized_2006} for details).
Hence, \cref{lemma:reduction} proves \cref{theorem:reduction}.

\begin{proof}[Proof of \cref{lemma:reduction}]
  We describe an fpt algorithm with access to a
\(\problemFont{1D-MSO-Consistent-}\) \(\problemFont{Learn}\)
  oracle that solves \(\MSOMc\).
  Let \(G\) be a \(\Lambda\)-labeled graph,
  and let \(\phi\) be an \(\MSO(\Lambda)\) sentence.
  Let \(n \deff \abs{V(G)}\).
  W.l.o.g., we may assume that \(\Lambda\) only contains labels that appear in \(\phi\) and hence,
  we have \(\abs{\Lambda} \leq \abs{\phi}\).
  We decide whether \(G \models \phi\) holds recursively
  by decomposing the input formula.
  While handling negation and Boolean connectives is easy,
  the crucial part of the computation is handling quantification.
  Thus, we assume that \(\phi = \exists x \psi\) or \(\phi = \exists X \psi\)
  for some \(\MSO\) formula \(\psi\).
  For both types of quantifiers, we use the \uMSOLearn oracle to identify
  a small set of candidate vertices or sets such that \(\psi\) holds for any vertex or set
  if and only if it holds for any of the identified candidates.
  Then, since the number of candidates will only depend on \(\abs{\psi}\),
  we can check recursively whether \(\psi\) holds for any of them
  and thereby decide \MSOMc with an fpt algorithm.

  First, assume that \(\phi = \exists x \psi\).
  For every pair of distinct vertices \(v, v' \in V(G)\),
  we call the \uMSOLearn oracle with input graph \(G\),
  \(q = \qr(\psi)\), \(\ell = 0\),
  and training sequence \(\TS = \bigl((v, +), (v', -)\bigr)\).
  Since \(\ell = 0\), the learning algorithm may not use any vertices as parameters
  in the hypothesis.
  Thus, the oracle returns a hypothesis if and only if \(\type{q}{G}{v} \neq \type{q}{G}{v'}\).
  Note that the oracle answers induce a partition on \(V(G)\)
  where two vertices \(v\) and \(v'\) are in the same class
  if and only if the oracle does not return a hypothesis
  on the input as specified above.
  The number of classes is the number of vertex \(q\)-types in \(G\),
  which only depends on \(q\) and \(\abs{\Lambda}\) and can thus be
  bounded in terms of \(\abs{\phi}\).
  Finally, since \(G \models \psi(v)\) if and only if \(G \models \psi(v')\)
  for all vertices \(v\) and \(v'\) in the same class,
  it suffices to check a single candidate per class.

  Let \(v \in V(G)\) be the candidate for which we want to check
  whether \(G \models \psi(v)\) holds.
  To be able to run our algorithm for \MSOMc recursively,
  we encode our choice via labels in the graph and then turn the formula \(\psi\)
  into a sentence.
  For this, let \(\Lambda' \deff \Lambda \uplus \set{I_v, N_v}\).
  For \(v \in V(G)\), let \(G_v\) be the \(\Lambda'\)-labeled graph with vertex set \(V(G_v) = V(G)\)
  and labels \(R(G_v) \deff R(G)\) for all \(R \in \Lambda\),
  \(I_v(G_v) \deff \set{v}\),
  and \(N_v(G_v) \deff \setc{w \in V(G)}{(v,w) \in E(G)}\).
  W.l.o.g., we assume that the formula \(\psi(x)\) does not contain atoms of the form \(x = x\).
  The formula can then be transformed into a sentence \(\psi_v\)
  by replacing every atom of the form \(y = x\) or \(x = y\) by \(I_v(y)\),
  and by replacing atoms of the form \(E(x,y)\) or \(E(y,x)\) by \(N_v(y)\).
  We have \(G \models \psi(v)\) if and only if \(G_v \models \psi_v\).
  Furthermore, \(\abs{\Lambda'} = \abs{\Lambda} + 2\),
  and \(G_v\) and \(\psi_v\) can be computed by an fpt algorithm.

  Note that for now, considering only first-order quantifiers,
  the depth and the degree of the recursion tree can be bounded in terms of \(\abs{\phi}\).
  Moreover, in all oracle queries, the vertex set of the input graph is \(V(G)\),
  we only use training sequences of length \(2\),
  and the parameter of the problem \uMSOLearn can be bounded in terms of \(\abs{\phi}\).

  In the remainder of this proof, we describe how to handle second-order quantifiers.
  Assume that \(\phi = \exists X \psi\).
  Similarly to the first case, we want to find a small set of candidate sets for \(X\).
  However, in contrast to the first-order quantifier,
  querying the oracle for every possible choice would lead to a running time
  that is exponential in the size of the input graph.
  To overcome this issue, intuitively, we proceed as follows.
  We go through the possible choices in \(n\) rounds,
  where in round \(i\), we only consider candidate sets of size \(i\).
  Overall, we compute \(n\) families \(V_1, \dots, V_n\) of candidates,
  where every single family \(V_i\) is small, that is,
  the size of \(V_i\) only depends on \(\abs{\phi}\), say \(f(\abs{\phi})\).
  In the first round, we compute \(V_1\) by going through all sets of size \(1\).
  Based on answers of the \uMSOLearn oracle,
  we keep only \(f(\abs{\phi})\) many candidates.
  Then, in round \(i+1\) for \(i>0\),
  we consider all extensions of the sets in \(V_i\) by a single element,
  which are \(f(\abs{\phi}) \cdot n\) many possible choices to consider.
  Again, using the oracle, we keep only \(f(\abs{\phi})\) many candidates
  and call the resulting family of candidates \(V_{i+1}\).
  After \(n\) rounds, all in all, we have \(f(\abs{\phi}) \cdot n\)
  many candidates.
  In a last step, we then also compare candidates of different sizes,
  ending up with at most \(f(\abs{\phi})\) candidates in total.
  Similarly to the first-order quantifier, for every candidate,
  we can encode the candidate via labels in the graph
  and run our algorithm for \MSOMc recursively.
  In the following, we formalize every single step of the described procedure.

  First, we describe how to find the small number of candidates in each round.
  For a set \(C \subseteq V(G)\), let the \(q\)-type \(\tp_q^G(C)\) of \(C\) in \(G\) be the set
  of all \(\MSO(\Lambda, q, 0, 1)\) formulas \(\alpha(X)\) with \(G \models \alpha(C)\).
  Now, for every \(\MSO(\Lambda, q, 0, 1)\) formula \(\gamma(X)\), consider the \(\MSO(\Lambda, q+2, 0, 1)\) formula
  \[\gamma'(X) = \exists Y \bigl(\forall x (X(x) \rightarrow Y(x)) \land \gamma(Y)\bigr).\]
  This formula checks whether any superset of \(X\) (including \(X\) itself) satisfies \(\gamma\).
  Thus, if two sets \(C_1, C_2\) have the same \((q+2)\)-type,
  then for every superset \(C_1' \supseteq C_1\),
  there is a superset \(C_2' \supseteq C_2\) of the same \(q\)-type.
  Hence, when computing the families \(V_i\), it suffices to keep only one of the two sets.

  To check for two sets \(C_1, C_2 \subseteq V(G)\) whether they have the same \((q+2)\)-type,
  we use the \uMSOLearn oracle.
  As the input graph, we choose the \(\Lambda'\)-labeled graph \(G'\)
  with \(\Lambda' \deff \Lambda \uplus \set{C}\),
  vertex set \(V(G') \deff \setc{(v,1), (v,2)}{v \in V(G)} \uplus \set{w_1, w_2}\),
  edge set
  \begin{align*}
    E(G') \deff
    &\bigsetc{\bigl((v, i), (v', i)\bigr)}{(v,v') \in E(G), i \in \set{1,2}}\\
    &\cup \bigsetc{\bigl((v, i), w_i\bigr), \bigl(w_i, (v, i)\bigr)}{v \in V(G), i \in \set{1,2}},
  \end{align*}
  and labels
  \(C(G') \deff \bigsetc{(v,1)}{v \in C_1} \cup \bigsetc{(v,2)}{v \in C_2}\)
  and \(R(G') \deff \bigsetc{(v,1), (v,2)}{v \in R(G)}\) for all \(R \in \Lambda\).

  Intuitively, \(G'\) contains two copies of \(G\).
  For each copy, there is vertex \(w_i\) that is connected to every vertex in the copy.
  Furthermore, the label \(C\) encodes the set \(C_1\) in the first copy
  and \(C_2\) in the second copy. The construction is depicted in \cref{fig:hard}.

  \begin{figure}
      \centering
      \begin{tikzpicture}
        \node[vertex] (w1) [label=above:$w_1$] {};
\node[vertex, right of=w1, node distance=12em] (w2) [label=above:$w_2$] {};

\foreach \i in {1,2} {
  \node[below of=w\i, node distance=7.5em] (g\i) {};
  \foreach \edge in {0,...,10} {
    \draw[edge, gray] (w\i) to [bend right=20-4*\edge] ($(g\i)+(-4.5em + \edge*.9em,0)$);
  }
  \node[draw, semithick, dashed, circle, fill=black!2, inner sep=3.5em, below of=g\i, node distance=0] (graph\i) [label=below left:$G$] {};
}

\node[ellipse, fill=ibm-magenta!50, minimum width=4.5em, minimum height=2.5em] (c1) at ($(g1) + (-1em, -1em)$) {\(C_1\)};
\node[ellipse, fill=ibm-magenta!50, minimum width=4.5em, minimum height=2.5em] (c2) at ($(g2) + (1em, 1em)$) {\(C_2\)};

\draw[dashed, gray, rounded corners=.5em] ($(g1) + (-6em, -6em)$) rectangle ($(g2) + (6em, 10em)$);
\node[color=gray] at ($(g1) + (-7em, -5.5em)$) {\(G'\)};
       \end{tikzpicture}
      \caption{Input graph $G'$ of the \uMSOLearn oracle.}
      \label{fig:hard}
  \end{figure}

  \begin{claim}
    \label{claim:hardness-second-order}
    For an algorithm solving the problem \uMSOLearn,
    given \(G'\) as the input graph,
    \(2q+6\) as the quantifier rank,
    \(\ell = 0\),
    and \(\TS= \bigl((w_1, +), (w_2, -)\bigr)\) as the training sequence,
    the following holds.
    \begin{enumerate}
      \item\label{item:hardness-second-order-reject-to-type}
        If the algorithm rejects, then \(\type{q+2}{G}{C_1} = \type{q+2}{G}{C_2}\).
      \item\label{item:hardness-second-order-type-to-reject}
        If \(\type{2q+6}{G}{C_1} = \type{2q+6}{G}{C_2}\),
        then the algorithm rejects.
    \end{enumerate}
  \end{claim}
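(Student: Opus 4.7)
The plan is to prove both items by contrapositive. For item~\ref{item:hardness-second-order-reject-to-type}, I would start from a formula $\gamma(X) \in \MSO(\Lambda, q+2, 0, 0, 1)$ witnessing $\type{q+2}{G}{C_1} \neq \type{q+2}{G}{C_2}$ and construct an $\MSO$ formula $\phi(x)$ of quantifier rank at most $2q+6$ on $G'$ that separates $w_1$ from $w_2$. The key observation is that the open neighbourhood $N(w_i) \deff \setc{z}{E(w_i,z)}$, viewed as an induced substructure of $G'$ over $\set{E} \cup \Lambda \cup \set{C}$, is isomorphic to $(G, C_i)$ with $C$ labelling $C_i$. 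I would therefore set
\[
  \phi(x) \deff \exists X \Bigl( \forall z \bigl(X(z) \to E(x,z)\bigr) \land \forall z \bigl(E(x,z) \to (X(z) \leftrightarrow C(z))\bigr) \land \gamma^{\downarrow x}(X) \Bigr),
\]
where $\gamma^{\downarrow x}$ denotes the relativization of $\gamma$ to $N(x)$. A straightforward induction gives $\qr(\gamma^{\downarrow x}) \leq \qr(\gamma)+1$, the only nontrivial case being that a set quantifier $\exists Y$ gains a nested guard $\forall z (Y(z) \to E(x,z))$; thus $\qr(\phi) \leq q+4 \leq 2q+6$. The unique $X$ satisfying the two guards is $C \cap N(w_i)$, which corresponds to $C_i$ under the canonical isomorphism, so $G' \models \phi(w_i) \iff G \models \gamma(C_i)$, and $\phi$ separates $w_1$ from $w_2$.

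For item~\ref{item:hardness-second-order-type-to-reject}, I would decompose the input via the Feferman--Vaught theorem. Write $G' = H_1 \uplus H_2$, where $H_i$ is the induced substructure of $G'$ on $\setc{(v,i)}{v \in V(G)} \cup \set{w_i}$; each $H_i$ is isomorphic to $(G, C_i)$ extended by a fresh vertex $w_i$ adjacent to all of $V(G)$, with $C_i$ labelled by $C$. By \cref{lemma:sat-binary-trans}, $\type{2q+6}{G'}{w_1}$ is determined by $\type{2q+6}{H_1}{w_1}$ together with the $(2q+6)$-sentence type of $H_2$, and symmetrically for $w_2$. Since sentence types are recoverable from element types by prepending a dummy quantifier, it is enough to prove $\type{2q+6}{H_1}{w_1} = \type{2q+6}{H_2}{w_2}$ under the hypothesis $\type{2q+6}{G}{C_1} = \type{2q+6}{G}{C_2}$.

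The step I expect to be the main obstacle is producing a quantifier-rank-preserving translation $\phi(x) \mapsto \gamma_\phi(X)$ satisfying $H_i \models \phi(w_i) \iff G \models \gamma_\phi(C_i)$ for $i \in \set{1,2}$: the subtlety is that every set in $H_i$ carries one extra bit of information (whether it contains $w_i$) compared to a set in $G$, and this must be simulated without spending quantifiers. The plan is a syntactic induction that, at each subformula, tracks for every free first-order variable whether it denotes $w_i$, and for every free set variable whether $w_i$ belongs to it. Atoms are rewritten according to these flags: for instance, $E(y,z)$ becomes true if exactly one of $y,z$ is flagged as $w_i$ and false if both are, while $C(y)$ becomes $X(y)$ unless $y$ is flagged (in which case it is false). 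A first-order quantifier $\exists y\, \psi$ is split as $\gamma_{\psi[y := w_i]} \lor \exists y\, \gamma_{\psi[y \in V(G)]}$, and a set quantifier $\exists Y\, \psi$ as $\exists Y\, \gamma_{\psi[w_i \in Y]} \lor \exists Y\, \gamma_{\psi[w_i \notin Y]}$. Each split introduces only one new quantifier of the same kind, so $\qr(\gamma_\phi) = \qr(\phi)$. Combined with the Feferman--Vaught step, this yields $\type{2q+6}{G'}{w_1} = \type{2q+6}{G'}{w_2}$, so no distinguishing formula of quantifier rank at most $2q+6$ exists and the oracle must reject.
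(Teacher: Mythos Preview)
Your proof is correct and follows essentially the same approach as the paper: relativization to $N(x)$ for item~(1), and Feferman--Vaught plus a component analysis for item~(2). The only noteworthy differences are that you obtain the tighter quantifier-rank bound $q+4$ in item~(1) (the paper uses the cruder $2\qr(\alpha)+2=2q+6$), and where the paper invokes Ehrenfeucht--Fra\"iss\'e games to pass from $G'[V_i]$ to $G'[V_i']$, you supply an explicit rank-preserving syntactic translation tracking whether each variable equals or contains the apex vertex---both routes are standard and equivalent here.
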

  \begin{claimproof}
    We prove \eqref{item:hardness-second-order-reject-to-type} by showing that
    the algorithm returns a hypothesis for all \(C_1, C_2\) of different \((q+2)\)-types.
    Let \(C_1, C_2 \subseteq V(G)\) with
    \(\type{q+2}{G}{C_1} \neq \type{q+2}{G}{C_2}\),
    let \(\alpha(X) \in \type{q+2}{G}{C_1} \setminus \type{q+2}{G}{C_2}\),
    and
    \[\beta(x) \deff \exists X \Bigl(\forall y \bigl(Xy \leftrightarrow (Cy \land Exy)\bigr)
      \land \alpha'(x, X)\Bigr),\]
    where \(\alpha'(x, X)\) is computed recursively from \(\alpha(X)\)
    by replacing every subformula of the form \(\exists Y \xi\) by
    \(\exists Y \bigl(\forall y (Yy \rightarrow Exy) \land \xi'\bigr)\)
    and replacing every subformula of the form \(\exists y\, \xi\) by
    \(\exists y (Exy \land \xi')\).
    It holds that \(G \models \alpha(C_i)\)
    if and only if \(G' \models \beta(w_i)\) for \(i \in [2]\).
    Furthermore, \(\qr(\beta) \leq 2 \qr(\alpha) + 2 = 2q+6.\)
    Hence, an algorithm solving \uMSOLearn will return a hypothesis
    (for example \(h_\beta\)).

    To prove \eqref{item:hardness-second-order-type-to-reject},
    we assume that \(\type{2q+6}{G}{C_1} = \type{2q+6}{G}{C_2}\).
    It suffices to show that
    \(\type{2q+6}{G'}{w_1} = \type{2q+6}{G'}{w_2}\),
    since an algorithm solving \(\uMSOLearn\) will not be able to return
    a consistent hypothesis in that case and reject.
    Let \(V_i \deff \bigsetc{(v, i)}{v \in V(G)}\)
    and \(V'_i \deff V_i \cup \set{w_i}\) for \(i \in [2]\).
    Then, for every \(\MSO(\Lambda', 2q+6, 0, 0)\) sentence \(\chi\),
    we have \(G'[V_1] \models \chi\) if and only if \(G'[V_2] \models \chi\).
    To see this, assume that there is an \(\MSO(\Lambda', 2q+6, 0, 0)\) sentence \(\chi\)
    with \(G'[V_1] \models \chi\) and \(G'[V_2] \not\models \chi\).
    Then \(\chi'(C) \deff \chi\) is an \(\MSO(\Lambda, 2q+6, 0, 1)\) formula that is contained in
    \(\type{2q+6}{G}{C_1}\) and not contained in \(\type{2q+6}{G}{C_2}\),
    which contradicts \(\type{2q+6}{G}{C_1} = \type{2q+6}{G}{C_2}\).

    From this result on \(G'[V_1]\) and \(G'[V_2]\),
    for example by considering the corresponding Ehrenfeucht-–Fraïssé games,
    we obtain that
    \(G'[V'_1] \models \chi\) if and only if \(G'[V'_2] \models \chi\)
    and \(\type{2q+6}{G'[V'_1]}{w_1} = \type{2q+6}{G'[V'_2]}{w_2}\).
    Since \(G'\) is the disjoint union of \(G'[V'_1]\) and \(G'[V'_2]\),
    by the Feferman--Vaught Theorem for \(\MSO\) \cite{makowsky_fefermanvaught_2004},
    this implies that \(\type{2q+6}{G'}{w_1} = \type{2q+6}{G'}{w_2}\).
  \end{claimproof}

  To compute the families of candidate sets \(V_1, \dots, V_n\),
  in the first round,
  we set \(V_1 = \bigsetc{\set{v}}{v \in V(G)}\).
  Then, for every pair \(C_1, C_2 \in V_1\), we query the oracle as described above.
  If the algorithm rejects,
  because of Item~\eqref{item:hardness-second-order-reject-to-type}
  of \cref{claim:hardness-second-order},
  for every superset of \(C_1\),
  there is a superset of \(C_2\) of the same \(q\)-type.
  Hence, \(C_2\) is not needed anymore in the following computation,
  and we remove it from \(V_1\).
  Because of Item~\eqref{item:hardness-second-order-type-to-reject}
  of \cref{claim:hardness-second-order},
  after running the oracle for every pair in \(V_1\),
  there is at most one set in \(V_1\) for every \((2q+6)\)-type of sets of vertices in \(G\).
  Thus, the number of remaining sets in \(V_1\) can be bounded in terms of \(\abs{\phi}\)
  without any dependence on \(\abs{G}\).

  Next, in round \(i+1\) for \(i > 0\),
  we set \(V_{i+1} = \bigsetc{C \cup \set{v}}{C \in V_i,\ v \in V(G)}\).
  As in the first round, we run the oracle for every pair \(C_1, C_2 \in V_{i+1}\),
  and we remove \(C_2\) from \(V_{i+1}\) if the oracle rejects.
After the \(n\)th round, we set \(W = \bigcup_{i=1}^n V_i\).
  Again, we call the oracle for every pair of distinct sets in \(W\),
  and we remove one of the two sets if the oracle rejects.
  At the end, the number of sets in \(W\) only depends on \(\abs{\phi}\).
  Furthermore, for every set \(V' \subseteq V(G)\),
  there is a candidate \(C \in W\) of the same \(q\)-type.
  The number of oracle calls is quadratic in \(\abs{G}\),
  the size of the input graphs for the oracle calls is linear in the size of \(G\),
  and the parameters of the oracle calls can be bounded in terms of \(\abs{\phi}\).

  Recall that we assumed the input formula \(\phi\) to be of the form
  \(\exists X \psi\).
  For every candidate \(C \in W\), we recursively want to check whether \(G \models \psi(C)\) holds.
  For that, we extend the set of labels \(\Lambda\) by a new label \(X\),
  and we let \(G_C\) be the extension of \(G\) to the new label set with \(X(G_C) = C\).
  We can then consider \(\psi\) as a sentence over the new signature and recursively run our algorithm
  to check whether \(G_C \models \psi\) holds.

  Overall, considering both first-order and second-order quantification,
  we have a recursion tree where both the depth and the degree of the tree
  can be bounded in terms of \(\abs{\phi}\).
  In every single recursive call, the number of oracle queries is at most quadratic in \(\abs{V(G)}\),
  the size of the input graphs for the oracle queries is linear in the size of \(G\),
  the parameter \(\abs{\Lambda} + q + \ell\) for the oracle queries
  can be bounded in terms of \(\abs{\phi}\),
  and every oracle query only uses training sequences of length at most \(2\).
  Thus, the described procedure yields an fpt Turing reduction
  from \(\MSOMc\) to \(\uMSOLearn\).
\end{proof}
 \section{PAC Learning in Higher Dimensions}
\label{sec:pac}

So far, we considered the consistent-learning setting,
where the goal is to return a hypothesis that is consistent with the given examples.
In this section, we study the \MSO-learning problem in the agnostic PAC-learning setting.
There, for an instance space \(\X\),
we assume an (unknown) probability distribution \(\D\) on \(\X \times \set{+,-}\).
The learner's goal is to find a hypothesis \(h \colon \X \to \set{+,-}\),
using an oracle to draw training examples randomly from \(\D\),
such that \(h\) (approximately) minimizes the generalization error
\[\err_\D(h) \deff \Pr_{(x,\lambda) \sim \D} \bigl(h(x) \neq \lambda\bigr).\]
For every \(\Lambda\)-labeled graph \(G\) and \(q, k, \ell \in \NN\),
let \(\Hypo_{q, k, \ell}(G)\) be the hypothesis class
\[\Hypo_{q,k,\ell}(G) \deff \bigsetc{h_{\phi, \vec{w}}}{\phi \in \MSO(\Lambda, q, k, \ell, 0), \vec{w} \in (V(G))^\ell}.\]
Formally, we define the \MSO PAC-learning problem as follows.

\problemDefPara{\PacMSOLearn}
{\(\Lambda\)-labeled graph \(G\),
  numbers \(k, \ell, q \in \NN\), \(\delta, \epsilon \in (0,1)\),
  oracle access to probability distribution \(\D\) on \((V(G))^k \times \set{+,-}\)}
{\(\kappa \deff \abs{\Lambda} + k + \ell + q + 1/\delta + 1/\epsilon\)}
{Return a hypothesis \(h_{\phi, \vec{w}} \in \Hypo_{q, k, \ell}(G)\) such that,
with probability of at least \(1 - \delta\) over the choice of examples drawn
i.i.d.\ from \(\D\), it holds that
\[\err_\D(h_{\phi, \vec{w}})
  \leq \min_{h \in \Hypo_{q, k, \ell}(G)} \err_\D (h) + \epsilon.\]}

The remainder of this section is dedicated to the proof of \cref{Theorem:tractability-pac},
that is, we want to show that \PacMSOLearn is fixed-parameter linear
on classes of bounded clique-width
when the input graph is given as a \(c\)-expression.
To solve the problem algorithmically,
we can follow the \emph{Empirical Risk Minimization (ERM)} rule~\cite{vapnik_erm_1991,shalev-shwartz_understanding_2014},
that is, our algorithm should minimize the \emph{training error}
(or \emph{empirical risk})
\[\err_\TS(h) \deff \frac{1}{\abs{\TS}} \cdot \abs{\setc{(\vec{v}, \lambda) \in \TS}{h(\vec{v}) \neq \lambda}}\]
on the training sequence \(\TS\) of queried examples.
Roughly speaking, an algorithm can solve \PacMSOLearn
by querying a sufficient number of examples and then following the ERM rule.
To bound the number of needed examples,
we combine a fundamental result of statistical learning~\cite{blumer_learnability_1989,shalev-shwartz_understanding_2014},
which bounds the number of needed examples
in terms of the VC dimension of a hypothesis class,
with \cref{lem:gt}, a result due to Grohe and Tur{\'{a}}n~\cite{grohe_learnability_2004},
which bounds the VC dimension of \MSO-definable hypothesis classes
on graphs of bounded clique-width.

\begin{lemma}[\cite{blumer_learnability_1989,shalev-shwartz_understanding_2014}]
  A hypothesis class \(\Hypo\) is (agnostically) PAC-learnable if and only if it has finite VC dimension.
  Furthermore, if \(\Hypo\) has finite VC dimension, then \(\Hypo\) can be learned by any algorithm
  that follows the Empirical Risk Minimization rule with sample complexity
  (\ie, the number of queried examples needed to fulfil the bounds required for agnostic PAC learnability)
  \(m_\Hypo(\epsilon, \delta) \in \bigO\left(\frac{\VC(\Hypo) + \log(1/\delta)}{\epsilon^2}\right)\).
\end{lemma}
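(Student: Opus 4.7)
The statement combines two classical results. The plan is to prove the easy direction of the equivalence and the sample complexity bound via a uniform convergence argument, then handle the converse via a no-free-lunch obstruction.

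\textbf{Step 1: Sauer--Shelah.} First I would bound the \emph{growth function} \(\Pi_{\Hypo}(m) \deff \max_{x_1, \dots, x_m} \abs{\setc{(h(x_1), \dots, h(x_m))}{h \in \Hypo}}\) by showing that if \(\VC(\Hypo) = d < \infty\), then \(\Pi_\Hypo(m) \leq \sum_{i=0}^d \binom{m}{i} \leq (em/d)^d\) for \(m \geq d\). This is a standard inductive/shifting argument on \(m\) and \(d\).

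\textbf{Step 2: Uniform convergence.} The heart of the proof is to show that, for \(m \geq m_\Hypo(\epsilon, \delta)\) with \(m_\Hypo\) as in the lemma, with probability at least \(1 - \delta\) over i.i.d.\ draws \(\TS \sim \D^m\),
\[
  \sup_{h \in \Hypo} \bigl\lvert \err_\D(h) - \err_\TS(h) \bigr\rvert \leq \epsilon/2.
\]
I would proceed via symmetrization with a ghost sample \(\TS'\) of the same size, reducing the two-sided deviation to a probability involving only the empirical measures on \(\TS \cup \TS'\). On this fixed pool of \(2m\) points, the effective number of hypotheses is at most \(\Pi_\Hypo(2m)\). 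A union bound combined with Hoeffding's inequality (or a random-swap/permutation argument) then yields a tail bound proportional to \(\Pi_\Hypo(2m) \cdot \exp(-c \epsilon^2 m)\). Plugging in the Sauer--Shelah bound and solving for \(m\) gives the claimed sample complexity \(m \in \bigO\bigl((d + \log(1/\delta))/\epsilon^2\bigr)\). This symmetrization step is the main technical obstacle, because it is the device that replaces the uncountable supremum over \(\Hypo\) with a finite union bound of size \(\Pi_\Hypo(2m)\).

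\textbf{Step 3: ERM is near-optimal.} Given uniform convergence, any ERM hypothesis \(h_{\text{ERM}} \in \arg\min_{h \in \Hypo} \err_\TS(h)\) satisfies, for \(h^* \in \arg\min_{h \in \Hypo} \err_\D(h)\),
\[
  \err_\D(h_{\text{ERM}}) \leq \err_\TS(h_{\text{ERM}}) + \epsilon/2 \leq \err_\TS(h^*) + \epsilon/2 \leq \err_\D(h^*) + \epsilon,
\]
where the outer inequalities use the uniform convergence bound applied to \(h_{\text{ERM}}\) and \(h^*\), respectively, and the middle inequality is the definition of ERM. This establishes the forward direction and the sample complexity simultaneously.

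\textbf{Step 4: Converse via No-Free-Lunch.} To show that infinite VC dimension precludes PAC learnability, I would fix any candidate learner \(\mathbb A\) with sample size \(m\) and choose an integer \(d \geq 2m\). Since \(\VC(\Hypo) = \infty\), there exists a set \(X = \{x_1, \dots, x_d\}\) shattered by \(\Hypo\). Letting \(\D\) be the uniform distribution on \(X \times \set{+,-}\) realized by a uniformly random labeling (averaging over all \(2^d\) target concepts on \(X\), each realizable in \(\Hypo\)), a standard averaging argument shows that \(\mathbb A\) must err with probability at least \(1/4\) on a random query point outside the sample, while some \(h \in \Hypo\) achieves zero error on that labeling. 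Hence the learner cannot achieve error \(\leq \min_{h \in \Hypo} \err_\D(h) + \epsilon\) with probability \(\geq 1 - \delta\) for small enough \(\epsilon, \delta\), contradicting PAC-learnability.

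Together, Steps 1--3 give the ``if'' direction together with the sample complexity bound, and Step 4 gives the ``only if'' direction, completing the proof.
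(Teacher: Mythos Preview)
The paper does not prove this lemma; it is simply cited from the literature (Blumer et al.\ and Shalev-Shwartz \& Ben-David) and used as a black box to derive \cref{lemma:sample-complexity}. So there is no proof in the paper to compare against.

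Your outline is the standard textbook route and is essentially correct, but one technical point deserves mention. The symmetrization-plus-union-bound argument you sketch in Step~2 yields a deviation bound of the form \(\Pi_\Hypo(2m)\cdot\exp(-c\epsilon^2 m)\); after inserting the Sauer--Shelah estimate \(\Pi_\Hypo(2m)\le (2em/d)^d\) and solving for \(m\), one obtains \(m\in\bigO\bigl((d\log(1/\epsilon)+\log(1/\delta))/\epsilon^2\bigr)\), i.e., with an extra \(\log(1/\epsilon)\) factor compared to the bound stated in the lemma. Eliminating that logarithm requires a sharper tool than a bare union bound over \(\Pi_\Hypo(2m)\) labelings---for instance the Rademacher-complexity route combined with Massart's lemma, or a chaining argument. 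Your sentence ``solving for \(m\) gives the claimed sample complexity'' therefore elides a real (if well-known) refinement; the rest of the plan is sound.
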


Combined with \cref{lem:gt}, this implies the following.

\begin{lemma}
  \label{lemma:sample-complexity}
  There is a computable function \(m \colon \NN^5 \times (0,1)^2 \to \NN\)
  such that any algorithm that proceeds as follows solves the problem \(\PacMSOLearn\).
  Given a \(\Lambda\)-labeled graph \(G\) of clique-width at most \(c\),
  numbers \(k, \ell, q \in \NN\), \(\delta, \epsilon \in (0,1)\),
  and oracle access to a probability distribution \(\D\) on \((V(G))^k \times \set{+,-}\),
  the algorithm queries at least \(m(c, \abs{\Lambda}, q, k, \ell, \delta, \epsilon)\)
  many examples from \(\D\) and then follows the ERM rule.
\end{lemma}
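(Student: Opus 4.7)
The plan is to chain \cref{lem:gt} together with the preceding sample-complexity statement in order to isolate the function \(m\). The hypothesis class \(\Hypo_{q,k,\ell}(G)\) decomposes as the union, over normalized formulas \(\phi \in \MSO(\Lambda, q, k, \ell, 0)\), of the sub-classes
\[
  \Hypo_\phi(G) \deff \bigsetc{h_{\phi, \vec{w}}}{\vec{w} \in (V(G))^\ell}.
\]
Since the normalized set \(\MSO(\Lambda, q, k, \ell, 0)\) is finite and its cardinality is bounded by some computable \(f(\abs{\Lambda}, q, k, \ell)\), this is a union of at most \(f(\abs{\Lambda}, q, k, \ell)\) sub-classes. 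I would first apply \cref{lem:gt} to the class \(\CC\) of all \(\Lambda\)-labeled graphs of clique-width at most \(c\) to conclude that each \(\Hypo_\phi(G)\) has VC dimension at most \(d \deff g(c, \abs{\Lambda}, q, k, \ell)\), uniformly in \(G\).

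Next, I would invoke a standard bound for the VC dimension of a finite union of concept classes: if \(\Hypo = \bigcup_{i=1}^r \Hypo_i\) with \(\VC(\Hypo_i) \leq d\) for each \(i\), then \(\VC(\Hypo) \leq D(d, r)\) for an explicit, computable function \(D\) (the usual proof via Sauer--Shelah yields \(D(d,r) \in \bigO(d + \log r)\)). Taking \(r \deff f(\abs{\Lambda}, q, k, \ell)\) then gives a computable bound
\[
  \widetilde{d}(c, \abs{\Lambda}, q, k, \ell) \deff D\bigl(g(c, \abs{\Lambda}, q, k, \ell),\, f(\abs{\Lambda}, q, k, \ell)\bigr)
\]
on \(\VC(\Hypo_{q,k,\ell}(G))\) that depends only on the stated parameters and not on the particular graph \(G\).

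Finally, substituting \(\widetilde{d}\) into the sample-complexity expression \(\bigO\bigl((\VC(\Hypo) + \log(1/\delta))/\epsilon^2\bigr)\) from the preceding lemma yields the desired function
\[
  m(c, \abs{\Lambda}, q, k, \ell, \delta, \epsilon)
  \deff \Bigl\lceil C \cdot \frac{\widetilde{d}(c, \abs{\Lambda}, q, k, \ell) + \log(1/\delta)}{\epsilon^2} \Bigr\rceil,
\]
for some absolute constant \(C\) supplied by the Blumer et al.\ bound. Any algorithm that draws at least \(m(c, \abs{\Lambda}, q, k, \ell, \delta, \epsilon)\) i.i.d.\ examples from \(\D\) and returns an ERM hypothesis in \(\Hypo_{q,k,\ell}(G)\) then solves \PacMSOLearn by the lemma. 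Computability of \(m\) is inherited from that of \(f\), \(g\), \(D\), and the arithmetic operations involved.

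The only real subtlety is the passage from the per-formula VC bound supplied by \cref{lem:gt} to a bound on the VC dimension of the full class \(\Hypo_{q,k,\ell}(G)\); once the logarithmic overhead coming from the number of formulas is absorbed into \(\widetilde{d}\), the remainder of the argument is a direct composition of the two lemmas cited immediately above.
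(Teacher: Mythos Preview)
Your proposal is correct and matches the paper's approach, which simply asserts that the lemma follows by combining \cref{lem:gt} with the preceding sample-complexity bound and gives no further detail. You correctly identify and handle the one point the paper glosses over: \cref{lem:gt} bounds \(\VC(\phi,G)\) only for each individual formula \(\phi\), whereas \(\Hypo_{q,k,\ell}(G)\) ranges over all of \(\MSO(\Lambda,q,k,\ell,0)\), so the standard union-of-classes VC bound you invoke (with the computable multiplicity \(f(\abs{\Lambda},q,k,\ell)\)) is exactly the step needed to make the combination rigorous.
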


Using this lemma, we can now prove \cref{Theorem:tractability-pac},
showing that \(\PacMSOLearn\)
is fixed-parameter linear on classes of bounded clique-width
if the input graph is given as a \(c\)-expression,
even for dimensions \(k > 1\).

\begin{proof}[Proof of \cref{Theorem:tractability-pac}]
  Let \(G\) be a \(\Lambda\)-labeled graph,
  let \(k, \ell, q \in \NN\), \(\delta, \epsilon \in (0,1)\),
  and assume we are given oracle access to a probability distribution
  \(\D\) on \((V(G))^k \times \set{+,-}\).
  Moreover, let \(c \in \NN\) and let \(\Xi\) be a \(c\)-expression
  given as input that describes \(G\).

  Let \(s \deff m(c, \abs{\Lambda}, q, k, \ell, \delta, \epsilon)\),
  where \(m\) is the function from \cref{lemma:sample-complexity}.
  We sample \(s\) examples from \(\D\) and call the resulting
  sequence of training examples \(\TS\).
  Then, for every subsequence \(\TS'\) of \(\TS\),
  we will make use of the techniques in \cref{sec:tract}
  (adapted to higher-dimensional training data)
  and compute a hypothesis \(h_{\TS'} \in \Hypo_{q, k, \ell}\) that is consistent with \(\TS'\)
  if such a hypothesis exists.
  Finally, from all subsequences with a consistent hypothesis,
  we choose a subsequence \(\TS^*\) of maximum length and return \(h_{\TS^*}\).
  Note that this procedure minimizes the training error on the training sequence \(\TS\),
  \ie, \(\err_\TS(h_{\TS^*}) = \min_{h \in \Hypo_{q, k, \ell}} \err_\TS(h)\).
  Hence, the procedure follows the ERM rule, and, by \cref{lemma:sample-complexity},
  it solves \(\PacMSOLearn\).

  To apply the techniques from \cref{sec:tract},
  we first need to encode the training examples into the graph.
  For that, we extend the label set \(\Lambda\) by \(k \cdot s\) new labels
  \(\TS_{i,j}\) for \(i \in [s]\) and \(j \in [k]\),
  and we call the resulting label set \(\Lambda'\).
  Let \[\TS' = \Bigl(\bigl((v_{1,1}, \dots, v_{1,k}), \lambda_1\bigr),
    \dots, \bigl((v_{s',1}, \dots, v_{s',k}), \lambda_{s'}\bigr)\Bigr)\]
  be a subsequence of \(\TS\) for which we want to find a consistent hypothesis.
  Furthermore, let \(G_{\TS'}\) be the \(\Lambda'\)-labeled graph with vertex set \(V(G_{\TS'}) = V(G)\),
  \(R(G_{\TS'}) = R(G)\) for all \(R \in \Lambda\),
  \(S_{i,j}(G_{\TS'}) = \set{v_{i,j}}\) for all \(i \in [s']\) and \(j \in [k]\),
  and \(S_{i,j}(G_{\TS'}) = \emptyset\) for all \(i \in [s'+1,s]\) and \(j \in [k]\).

  Then, for every formula \(\phi(\vec{x}, \vec{y}) \in \MSO(\Lambda, q, k, \ell, 0)\),
  we consider the \(\MSO(\Lambda', q + k, 0, \ell, 0)\) formula
  \begin{align*}
    \phi'(y_1, \dots, y_\ell) & = \forall x_1 \dots \forall x_k\\
    \Biggl(
    &\quad \biggl(\Lor_{\substack{i \in [s']\\\lambda_i = +}} \Land_{j=1}^k S_{i,j}(x_j)\biggr) \rightarrow \phi(x_1, \dots, x_k, y_1, \dots, y_\ell)\\
    &\land \biggl(\Lor_{\substack{i \in [s']\\\lambda_i = -}} \Land_{j=1}^k S_{i,j}(x_j)\biggr) \rightarrow \neg \phi(x_1, \dots, x_k, y_1, \dots, y_\ell)
    \Biggr).
  \end{align*}
  Now, for every \(w_1, \dots, w_\ell \in V(G)\),
  we have \(G_{\TS'} \models \phi'(w_1, \dots, w_\ell)\)
  if and only if \(h_{\phi, \vec{w}}\) with \(\vec{w} = (w_1, \dots, w_\ell)\)
  is consistent with \(\TS'\).
  Hence, analogously to the procedure described in \cref{sec:tract},
  we can find a consistent hypothesis, if it exists,
  by going through all possible formulas \(\phi\)
  and computing \(w_1, \dots, w_\ell \in V(G)\) with \(G_{\TS'} \models \phi'(w_1, \dots, w_\ell)\).

  Given a \(c\)-expression for \(G\),
  we can compute a \((c + k \cdot s)\)-expression for \(G_{\TS'}\) in linear time.
  Moreover, \(G_{\TS'}\) and \(G\) have the same vertex set,
  and \(\abs{\Lambda'}\) only depends on \(\abs{\Lambda}\), \(k\),
  and \(s = m(c, \abs{\Lambda}, q, k, \ell, \delta, \epsilon)\).
  Thus, analogously to \cref{sec:tract},
  a consistent hypothesis for a single subsequence \(\TS'\)
  can be computed by an fpt-algorithm with parameters
  \(c, \abs{\Lambda}, k, \ell, q, \delta\), and \(\epsilon\).
  The number of subsequences to check can be bounded by \(2^s\).
  Hence, all in all, the described procedure is an fpt-algorithm
  that solves \(\PacMSOLearn\).
\end{proof}
 \section{Consistent Learning in Higher Dimensions}
\label{sec:high-dim-cons}

In this section, we consider the problem \MSOLearn with dimension $k > 1$ on labeled graphs of bounded clique-width.
A brute-force attempt yields a solution in time $g(c,\abs{\Lambda},q,k,\ell) \cdot (V(G))^{\ell+1} \cdot m$,
where \(m\) is the length of the training sequence,
for some $g \colon \NN^5 \to \NN$.
This is achieved by iterating over all formulas, then iterating over all parameter assignments,
and then performing model checking for each training example.
We assume that the graph $G$ is considerably larger in scale than the sequence of training examples $S$.
Therefore, \cref{thm:tract-high-dim} significantly improves the running time to \(\bigO \bigl((m+1)^{g(c, \abs{\Lambda}, q, k, \ell)}|V(G)|^2\bigr)\).
While \cref{thm:tract-high-dim} is not a fixed-parameter tractability result in the classical sense,
we show that this is optimal in \cref{sec:high-dim-hardness}.
The present section is dedicated to the proof of \cref{thm:tract-high-dim}.

Until now, we have viewed the training sequence as a sequence of tuples
\(S \in ((V(G))^k \times \{+,-\})^m\). In the following, it is useful
to split the training sequence into two parts, a sequence of
vertex tuples $\Ca \in ((V(G))^k)^m$ and a function $\sigma \colon [m]\to\{+,-\}$ which
assigns the corresponding label to each tuple.
Let $G$ be a $\Lambda$-labeled graph,
$\Ca=(\vec v_1,\ldots,\vec v_m)\in ((V(G))^k)^m$,
$\sigma \colon [m]\to\{+,-\}$,
and $\phi(\vec x,\vec y)\in\MSO(\Lambda,q,k,\ell,0)$.
We call $(G,\Ca,\sigma)$ \emph{$\phi$-consistent} if there is a parameter setting
$\vec w\in (V(G))^\ell$ such that for all $i\in[m]$,
\(G\models\phi(\vec v_i,\vec w)\iff \sigma(\vec v_i)=+\).
We say that $\vec w$ is a \emph{$\phi$-witness} for $(G,\Ca,\sigma)$.
This notation allows us to state the main technical ingredient
for the proof of \cref{thm:tract-high-dim} as follows.

\begin{restatable}{theorem}{theoComputeConsist}
\label{theo:compute-consist}
  There is a computable function $g \colon \Nat^5\to\Nat$ and
  an algorithm that, given a $\Lambda$-graph $G$ of clique-width $\cw(G)\leq c$, a
  sequence $\Ca=(\vec v_1,\ldots,\vec v_m)\in ((V(G))^k)^m$, a function
  $\sigma \colon [m]\to\{+,-\}$, and a formula $\phi(\vec x,\vec
  y)\in\MSO(\Lambda,q,k,\ell,0)$, decides if $(G,\Ca,\sigma)$ is
  $\phi$-consistent in time
  \(
    (m+1)^{g(c,|\Lambda|,q,k,\ell)}|G|
  \).
\end{restatable}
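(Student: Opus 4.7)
The plan is to prove \cref{theo:compute-consist} by bottom-up dynamic programming on a $c$-expression $\Xi$ describing $G$. For each subexpression $\Xi_t$ with described subgraph $G_t$, set $V_t \deff V(G_t)$ and let $\Ca_t$ denote the sequence of partial tuples obtained by restricting each $\vec v_i$ to its coordinates in $V_t$. Any candidate parameter setting $\vec w \in (V(G))^\ell$ splits uniquely into a part $\vec w_t$ inside $V_t$ and a part outside, and the $\phi$-pattern $(G \models \phi(\vec v_i, \vec w))_{i \in [m]}$ is determined by the sequence type $\tp_q^G(\Ca, \vec w) \in \Tp(\Lambda, q, \vec k, \ell)$ from \cref{subsec-prelim-types}. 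At each node $t$, I would maintain a set of \emph{profiles}; a profile records which parameter positions lie in $V_t$, together with a sequence type $\btheta_t \in \Tp(\Lambda, q, \vec k_t, \ell_t)$ that is actually realized by some concrete partial parameter setting in $G_t$. The algorithm accepts iff some profile at the root induces a pattern matching $\sigma$.

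The composition of profiles along $\Xi$ mirrors the construction in \cref{sec:tract}. At an $\uplus^<$ node, once the parameter positions in each child are fixed, the new sequence type is the component-wise Feferman--Vaught combination of the two child sequence types via the sequence-type analogue of \cref{lemma:sat-binary-trans}; the ordering labels $P_1^<, P_2^<$ carried by $\uplus^<$ precisely encode how the partial tuples in the two children line up into the full tuple $\vec v_i$. At a unary node $\eta_{P,Q}$, $\rho_{P,Q}$ or $\delta_P$, each component of the sequence type is transformed by the corresponding type-level rewrite from \cref{lemma:sat-unary-trans,lemma:sat-delete}, keeping the quantifier rank at $q$. Base expressions are initialized by brute-force enumeration over the $2^\ell$ placements of parameter coordinates at the single vertex.

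The crux, and the main obstacle, is to keep the number of profiles at each node bounded by $(m+1)^{g(c,|\Lambda|,q,k,\ell)}$: the cardinality of $\Tp(\Lambda, q, \vec k_t, \ell_t)$ alone grows like a constant raised to the power $m$, so the bound cannot follow from a standard MSO/automata argument in which types are drawn from a set of constant size. I would group the tuples $\vec v_1, \ldots, \vec v_m$ at each node $t$ into a bounded family of \emph{categories} determined by their $t$-position pattern and their realized individual type, and then encode a profile by the partial-parameter type together with a count vector recording how many of the $\vec v_i$ lie in each category. Since the number of categories is a function of $c, |\Lambda|, q, k, \ell$ only, and each count vector has a constant number of components each ranging in $\{0, \ldots, m\}$, the number of profiles per node is at most $(m+1)^{g(c,|\Lambda|,q,k,\ell)}$. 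The delicate technical step is to show that the composition along $\uplus^<$ and along the unary operations respects this categorical coarsening — so that the parent count vectors are determined by the child data, rather than requiring the full assignment of individual tuples to categories — which is what distinguishes the argument from typical bounded-state MSO dynamic programs. Once this is in place, each node's composition costs $(m+1)^{\bigO(g(\ldots))}$, and since $\Xi$ can be chosen of size linear in $|G|$, the total running time matches the bound claimed in the theorem.
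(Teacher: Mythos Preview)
Your overall framework---dynamic programming on a $c$-expression, maintaining at each node the realizable sequence types over $\Ca \cap V_t$ together with parameter-position data, and composing via Feferman--Vaught and the syntactic type rewrites---matches the paper. The gap is entirely in your bound on the number of profiles. The count-vector coarsening does not work. First, the category you assign to $\vec v_i$ (its position pattern and the individual $q$-type of $\vec v_i \cap V_t$) does not determine the joint type $\theta_i = \tp_q^{G_t}(\vec v_i \cap V_t, \vec w_t)$: two tuples with the same individual type can have different joint types with a fixed $\vec w_t$ (take $\vec v_i = \vec w_t$ versus $\vec v_j \neq \vec w_t$), so the pair ``partial-parameter type $+$ count vector'' does not encode the sequence type. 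Second, even if you refined the categories to the joint types $\theta_i$ themselves, the composition at $\uplus^<$ would still not respect the coarsening: the parent type of tuple $i$ is determined by the pair $(\theta_{1,i},\theta_{2,i})$ of child types, so the parent count vector depends on the full joint assignment $i \mapsto (\theta_{1,i},\theta_{2,i})$, not on the two marginal count vectors. This is precisely the ``delicate technical step'' you flag, and it fails. Finally, the root check is index-sensitive (you must verify $\phi \in \theta_i \iff \sigma(i) = +$ for each $i$), so counts alone cannot decide consistency anyway.

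The paper does not coarsen. It keeps the full set of realizable sequence types at each node and bounds that set directly: $\MSO$-formulas have bounded VC dimension over graphs of bounded clique-width (\cref{lem:gt}), so by the Sauer--Shelah Lemma each single formula induces only polynomially many (in $m$) subsets of $\{\vec v_1,\ldots,\vec v_m\}$ as $\vec w$ ranges over $(V(G))^\ell$; a short pigeonhole argument (\cref{lem:bonnet}) then lifts this from single formulas to full $(\Lambda,q,k,\ell)$-types, giving at most $(m+1)^{g(c,|\Lambda|,q,k,\ell)}$ realizable sequence types over $\Ca$ (\cref{lem:types-bound}). Since every $G_{\Xi'}$ is again described by a $c$-expression, the same bound applies at every subexpression, and the dynamic program runs in the stated time. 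The VC-dimension argument is the missing idea in your proposal.
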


Using \cref{theo:compute-consist}, we can now prove \cref{thm:tract-high-dim}.

\begin{proof}[Proof of \cref{thm:tract-high-dim}]
Given a \(\Lambda\)-labeled graph \(G\) and a training sequence
\(S = ((\vec v_1, \lambda_1), \dots,\) \((\vec v_m, \lambda_m))
\in ((V(G))^k \times \{+,-\})^m\),
we let
$\Ca \deff (\vec v_1,\ldots,\vec v_m) \in ((V(G))^k)^m$ and
$\sigma \colon [m]\to\{+,-\}, i \mapsto \lambda_i$.
We iterate over all formulas
$\phi \in \MSO(\Lambda,q,k,\ell,0)$ and use \cref{theo:compute-consist}
to check whether $(G,\Ca, \sigma)$ is $\phi$-consistent.
If there is no \(\phi\) such that \((G, \Ca, \sigma)\) is \(\phi\)-consistent, then we reject the input.
Otherwise, let \(\phi \in \MSO(\Lambda, q, k, \ell, 0)\) be such that
\((G, \Ca, \sigma)\) is \(\phi\)-consistent.
We compute a $\phi$-witness following the same construction as in the proof of \cref{lem:unary-tractability-general}.
That is, using a fresh label, we encode the parameter choice of a single variable into the graph,
and then we check whether consistency still holds for the corresponding formula $\phi'$ that enforces this parameter choice.
In total, we perform up to \(\ell \cdot \abs{V(G)}\) such consistency checks to compute a $\phi$-witness
$\vec w$.
The consistent formula $\phi$ together with the $\phi$-witness
$\vec w$ can then be returned as a hypothesis $h_{\phi, \vec w}$ that
is consistent with $S$ on $G$ and therefore a solution to $\MSOLearn$.
\end{proof}

The remainder of this section is dedicated to the proof of \cref{theo:compute-consist}.
Intuitively, we proceed as follows.

Let \(\Xi\) be a \(c\)-expression describing the input graph \(G\).
As described in \cref{sub:prelim:clique-width}, we may view \(\Xi\) as a tree.
In a bottom-up algorithm, starting at the leaves of \(\Xi\),
we compute the set of all tuples \(\btheta = (\theta_1, \dots, \theta_m)\) of \((\Lambda, q, k_i, \ell)\)-types \(\theta_i\)
that are realizable over \(\Ca\) in \(G\).
Hence, to check whether \((G, \Ca, \sigma)\) is \(\phi\)-consistent,
it suffices to check whether there is a realizable type \(\btheta\)
such that \(\phi \in \theta_i\) if and only if \(\sigma(i) = +\) for all \(i \in [m]\).
We explain this in more detail in \cref{sub:tract-highdim-algorithm}.
The difficulty with this approach is that we are talking about \(m\)-tuples of types.
In general, the number of such tuples is exponential in $m$,
and hence the size of the set we aim to compute could be exponentially large.
Fortunately, this does not happen in graphs of bounded clique-width.
By \cref{lem:gt}, we can bound the VC dimension of a first-order formula
over classes of graphs of bounded clique-width.
Further, we show in \cref{sub:tract-highdim-vc}
that this suffices to give a polynomial bound for the number of realizable tuples.
Then, in \cref{sub:tract-highdim-composition}, we show how to compute
the set of realizable types in a node of \(\Xi\)
based on the realizable types in the children of the node.

\subsection{Bounding the Number of Realizable Types}
\label{sub:tract-highdim-vc}

In this subsection, we prove the following bound on the number of realizable tuples of types.

\begin{restatable}{lemma}{lemTypesBound}
\label{lem:types-bound}
  Let $d,q,k,\ell\in\Nat$, let $t \deff \abs{\Tp(\Lambda,q,k,\ell)}$,
  and let $G$ be a $\Lambda$-labeled graph such that
  $\VC(\phi,G)\le d$ for all $\phi\in\MSO(\Lambda,q,k,\ell,0)$. Let
  $\Ca\in (V(G))^{\vec k}$ for some $\vec k\in\{0,\ldots,k\}^m$. Then at most $(k{+}1) \cdot g(d,m)^t$ types
  in $\Tp(\Lambda,q,\vec k,\ell)$ are realizable over $\Ca$ in
  $G$.
\end{restatable}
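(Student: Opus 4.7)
The plan is to bound the number of realizable type tuples in $\Tp(\Lambda,q,\vec k,\ell)$ by reducing to a uniform-arity setting through a padding trick, and then invoking the Sauer-Shelah lemma once per characteristic formula. Assuming $V(G)\neq\emptyset$ (otherwise the bound is trivial), fix an arbitrary vertex $v_0\in V(G)$ and, for each $i\in[m]$, let $\vec v_i^{\uparrow}\deff(\vec v_i,v_0,\ldots,v_0)\in(V(G))^k$ be the tuple $\vec v_i$ padded to length $k$ with copies of $v_0$. The key observation is that every formula $\psi(x_1,\ldots,x_{k_i},\vec y)\in\MSO(\Lambda,q,k_i,\ell,0)$ can be viewed as a formula in $\MSO(\Lambda,q,k,\ell,0)$ that does not use the variables $x_{k_i+1},\ldots,x_k$, and hence $\tp_q^{G}(\vec v_i^{\uparrow},\vec w)\in\Tp(\Lambda,q,k,\ell)$ determines $\tp_q^{G}(\vec v_i,\vec w)\in\Tp(\Lambda,q,k_i,\ell)$. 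Consequently, the number of distinct sequences $(\tp_q^G(\vec v_i,\vec w))_{i\in[m]}$ arising as $\vec w$ ranges over $(V(G))^\ell$ is at most the number of distinct sequences $(\tp_q^G(\vec v_i^{\uparrow},\vec w))_{i\in[m]}$.

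For each type $\theta\in\Tp(\Lambda,q,k,\ell)$, I would fix a characteristic formula $\chi_\theta(\vec x,\vec y)\in\MSO(\Lambda,q,k,\ell,0)$ satisfying $G\models\chi_\theta(\vec u,\vec w)$ iff $\tp_q^G(\vec u,\vec w)=\theta$. Let $X\deff\{\vec v_i^{\uparrow}:i\in[m]\}\subseteq(V(G))^k$, so that $\abs{X}\leq m$. By the hypothesis $\VC(\chi_\theta,G)\leq d$ and the Sauer-Shelah lemma, the trace $X\cap\chi_\theta(G,\vec w)$ assumes at most $g(d,m)\deff\sum_{j=0}^{d}\binom{m}{j}$ distinct values as $\vec w$ varies. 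Since the tuple $(\tp_q^G(\vec v_i^{\uparrow},\vec w))_{i\in[m]}$ is determined by the collection of traces $(X\cap\chi_\theta(G,\vec w))_{\theta\in\Tp(\Lambda,q,k,\ell)}$---for each $\vec v\in X$ there is exactly one $\theta$ with $\vec v\in\chi_\theta(G,\vec w)$, and this $\theta$ is $\tp_q^G(\vec v,\vec w)$---the overall number of realizable type tuples is at most $g(d,m)^t$, which lies well within the claimed bound $(k{+}1)\cdot g(d,m)^t$.

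The main conceptual ingredient is the padding trick, which collapses the multi-arity setting into a single arity and allows the VC-dimension hypothesis to be applied uniformly via the $t$ characteristic formulas $\chi_\theta$. The remaining technical points---namely, that a characteristic formula of quantifier rank at most $q$ exists for every normalised type, and that the Sauer-Shelah traces over all $\theta$ together determine the realised type tuple---are routine. Edge cases such as $V(G)=\emptyset$ (whence there are no parameter tuples at all if $\ell>0$) and entries with $k_i=0$ (for which $\theta_i$ is simply a type over $\vec w$ alone) fit into this framework without modification; the slack factor $(k{+}1)$ in the statement is presumably retained for convenience and absorbs any such bookkeeping.
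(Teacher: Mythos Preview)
Your argument is correct and in fact yields the slightly sharper bound $g(d,m)^t$, without the $(k{+}1)$ factor. The padding step is sound: since $\MSO(\Lambda,q,k_i,\ell,0)\subseteq\MSO(\Lambda,q,k,\ell,0)$ as sets of formulas, the $q$-type of $(\vec v_i^{\uparrow},\vec w)$ determines that of $(\vec v_i,\vec w)$; and the characteristic formula $\chi_\theta=\bigwedge_{\psi\in\theta}\psi$ indeed lies in $\MSO(\Lambda,q,k,\ell,0)$, so the VC hypothesis applies to it.

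The paper proceeds differently. Instead of padding, it partitions $\Ca$ by arity into subsequences $\Ca^{(0)},\ldots,\Ca^{(k)}$ and argues by pigeonhole that if more than $(k{+}1)\cdot g(d,m)^t$ tuple-types were realizable, some fixed arity $p$ would already realize more than $g(d,m)^t$ tuple-types; this is where the $(k{+}1)$ factor enters. It then invokes an auxiliary combinatorial lemma about matrices over a $t$-letter alphabet (Lemma~\ref{lem:bonnet}) to extract a single type $\theta$ whose indicator matrix has more than $g(d,m)$ distinct columns, and derives a contradiction with Sauer--Shelah applied once to $\chi_\theta$. Your route bypasses both the arity split and the matrix lemma by observing directly that the realized tuple-type is determined by the $t$-tuple of Sauer--Shelah traces, one per characteristic formula; this is shorter, avoids an auxiliary lemma, and tightens the bound. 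The paper's decomposition by arity is perhaps conceptually natural but, as your argument shows, unnecessary.
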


The proof of \cref{lem:types-bound} is based on \cref{lem:sauer-shelah}.

\begin{lemma}[Sauer--Shelah Lemma \cite{sauer_density_72,shelah_density_72}]
  \label{lem:sauer-shelah}
  Let \(q \in \NN\),
  $\phi \in \MSO(\Lambda, q, k, \ell)$,
  let $\CC$ be a class of $\Lambda$-labeled graphs,
  and let \(d \in \NN\) such that $\VC(\phi,\CC) \leq d$. Then
  \[
    \abs{H_\phi(G,X)} \leq \sum_{i=0}^d\binom{\abs{X}}{i} \in \bigO\bigl(\abs{X}^d\bigr)
  \]
  for all $G\in\CC$ and all (finite) $X\subseteq (V(G))^k$.
\end{lemma}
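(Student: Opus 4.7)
My plan is to strip away the logical content of $\phi$ and reduce the statement to the purely combinatorial Sauer--Shelah bound on set systems. Fix $G\in\CC$ and a finite $X\subseteq (V(G))^k$, and set $\CF \deff H_\phi(G,X)$, which is a family of subsets of $X$. Unfolding the definitions, a subset $Y\subseteq X$ is shattered by $\phi$ in $G$ precisely when $\{F\cap Y:F\in\CF\}=2^Y$, i.e., when $\CF$ shatters $Y$ in the usual combinatorial sense. Hence the VC-dimension hypothesis translates into: no $Y\subseteq X$ with $\abs Y>d$ is shattered by $\CF$. With this translation in hand, the lemma reduces to showing $\abs{\CF}\le\sum_{i=0}^d\binom{\abs X}{i}$ for any set system $\CF\subseteq 2^X$ whose VC dimension is at most $d$.

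\textbf{Main step (Pajor's inequality).} I would prove the stronger statement
\[
  \abs{\CF}\;\le\;\abs{\mathrm{Sh}(\CF)}, \qquad \mathrm{Sh}(\CF) \deff \{Y\subseteq X : \CF\text{ shatters }Y\},
\]
by induction on $\abs X$. The base case $\abs X=0$ is immediate since $\CF\subseteq\{\emptyset\}$. For the inductive step, pick $x\in X$ and split $\CF$ into
\[
  \CF' \deff \{F\setminus\{x\} : F\in\CF\}\quad\text{and}\quad
  \CF_x \deff \{F\subseteq X\setminus\{x\} : F\in\CF \text{ and } F\cup\{x\}\in\CF\},
\]
both living over the smaller ground set $X\setminus\{x\}$. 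A direct double count gives $\abs{\CF}=\abs{\CF'}+\abs{\CF_x}$, since each element of $\CF'$ is the trace of one or two sets from $\CF$, and $\CF_x$ records exactly those traces coming from two. The induction hypothesis applied on $X\setminus\{x\}$ yields $\abs{\CF'}\le\abs{\mathrm{Sh}(\CF')}$ and $\abs{\CF_x}\le\abs{\mathrm{Sh}(\CF_x)}$. The geometric heart of the argument is then to verify
\[
  \mathrm{Sh}(\CF')\subseteq\mathrm{Sh}(\CF) \quad\text{and}\quad \{Y\cup\{x\}:Y\in\mathrm{Sh}(\CF_x)\}\subseteq\mathrm{Sh}(\CF),
\]
which follows directly from the definitions (any shattering witness for $\CF'$ lifts to $\CF$, and a shattering witness for $\CF_x$ on $Y$ lifts to a shattering of $Y\cup\{x\}$ by pairing sets in $\CF$ that do and do not contain $x$). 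These two subfamilies of $\mathrm{Sh}(\CF)$ are disjoint because one consists of subsets of $X\setminus\{x\}$ and the other only of sets containing $x$. Summing gives $\abs{\CF}\le\abs{\mathrm{Sh}(\CF)}$.

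\textbf{Finishing.} Once Pajor's inequality is in place, the VC-dimension hypothesis says that every $Y\in\mathrm{Sh}(\CF)$ has $\abs Y\le d$, so $\abs{\mathrm{Sh}(\CF)}\le\sum_{i=0}^d\binom{\abs X}{i}$, and the right-hand side is $\bigO(\abs X^d)$ for fixed $d$. Chaining the two bounds gives the claim.

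\textbf{Main obstacle.} The only delicate point is the bookkeeping for $\CF_x$ and the two dual inclusions into $\mathrm{Sh}(\CF)$: one has to check that the shattering witnesses on $X\setminus\{x\}$ really lift correctly, and that the counting identity $\abs{\CF}=\abs{\CF'}+\abs{\CF_x}$ is set up so each $F'\in\CF'$ is counted once for every preimage in $\CF$. Once this is settled, the induction proceeds cleanly and the logical side of $\phi$ is never invoked again.
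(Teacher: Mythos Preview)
Your proposal is correct, and in fact it supplies more than the paper does: the paper does not prove this lemma at all but simply cites it as the classical Sauer--Shelah bound from \cite{sauer_density_72,shelah_density_72}. Your argument via Pajor's inequality (bounding $\abs{\CF}$ by the number of sets it shatters, proved by the standard trace-and-split induction on $\abs{X}$) is one of the well-known proofs of this result, and the bookkeeping you identify as the delicate point is handled correctly. So there is nothing to compare against; you have given a valid self-contained proof of a lemma the paper treats as a black box.
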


Further, we need a simple combinatorial
argument, which is captured by the following \cref{lem:bonnet}. We remark that this
argument is known (see \cite[Proof of Lemma 23]{bonnet_tw_22}). For a
matrix $M\in\Sigma^{m\times n}$ and a symbol $\sigma\in\Sigma$, we let
$M^{(\sigma)}\in\{0,1\}^{m\times n}$ be the matrix obtained from $M$
by replacing all $\sigma$-entries by $1$ and all other entries by $0$.

\begin{restatable}[\cite{bonnet_tw_22}]{lemma}{lemMatrix}\label{lem:bonnet}
  Let $c,m,n,s\in\NN_{>0}$ such that $n>(c-1)^{s-1}$. Furthermore, let
  $\Sigma$ be a finite alphabet of size $|\Sigma|=s$, and let
  $M\in\Sigma^{m\times n}$ be a matrix with mutually distinct
  columns. Then there is a $\sigma\in\Sigma$ such that $M^{(\sigma)}$
  has at least $c$ distinct columns.
\end{restatable}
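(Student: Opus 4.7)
The plan is to prove the contrapositive: assuming that for every $\sigma \in \Sigma$ the binary matrix $M^{(\sigma)}$ has at most $c-1$ distinct columns, I will show that $n \leq (c-1)^{s-1}$, which contradicts the hypothesis $n > (c-1)^{s-1}$.

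The key observation is a simple redundancy among the indicator matrices. For each column $v \in \Sigma^m$ of $M$ and each symbol $\sigma$, let $v^{(\sigma)} \in \{0,1\}^m$ denote the corresponding column of $M^{(\sigma)}$, i.e.\ the $\sigma$-indicator of $v$. Since each entry of $v$ takes exactly one value from $\Sigma$, the indicator vectors satisfy
\[
  \sum_{\sigma \in \Sigma} v^{(\sigma)} = \mathbf 1,
\]
where $\mathbf 1 \in \{0,1\}^m$ is the all-ones vector. Consequently, fixing an arbitrary symbol $\sigma_s \in \Sigma$ and writing $\Sigma = \{\sigma_1, \dots, \sigma_{s-1}, \sigma_s\}$, the vector $v^{(\sigma_s)}$ is determined by $(v^{(\sigma_1)}, \dots, v^{(\sigma_{s-1})})$. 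In particular, the column $v$ itself is recoverable from this $(s-1)$-tuple of binary indicator columns, since together they specify in which of the sets $\sigma_i^{-1}(\{1\})$ each row index lies.

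Now I count. Under the contrapositive assumption, for each $i \in [s-1]$, the matrix $M^{(\sigma_i)}$ has at most $c-1$ distinct columns, so the number of possible values of $v^{(\sigma_i)}$ taken across columns of $M$ is at most $c-1$. Hence the number of distinct $(s-1)$-tuples $(v^{(\sigma_1)}, \dots, v^{(\sigma_{s-1})})$ arising from columns of $M$ is at most $(c-1)^{s-1}$. Since the map $v \mapsto (v^{(\sigma_1)}, \dots, v^{(\sigma_{s-1})})$ is injective by the previous paragraph, and since the columns of $M$ are pairwise distinct, this gives $n \leq (c-1)^{s-1}$, the desired contradiction.

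The argument is essentially a single pigeonhole step, so there is no real obstacle; the only thing to be careful about is correctly identifying that one of the $s$ indicator vectors is redundant (yielding the exponent $s-1$ rather than $s$), which is exactly what makes the bound tight.
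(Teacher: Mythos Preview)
Your proof is correct and follows essentially the same route as the paper: assume every $M^{(\sigma)}$ has at most $c-1$ distinct columns, observe that the indicator columns satisfy $\sum_{\sigma} v^{(\sigma)}=\mathbf 1$ so one of them is redundant, and conclude by pigeonhole that there are at most $(c-1)^{s-1}$ distinct columns in $M$. The only cosmetic difference is that the paper phrases the redundancy via $v=\sum_\sigma \sigma\, v^{(\sigma)}$ rather than $\sum_\sigma v^{(\sigma)}=\mathbf 1$, but the counting argument is identical.
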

\begin{proof}
  Without loss of generality, we assume that $\Sigma=[s]$.
  Then, for every column $\vec v$ of $M$, we can write $\vec v=\sum_{\sigma=1}^s\sigma
  \vec v^{(\sigma)}$, where $\vec v^{(\sigma)}$ is the column of $M^{(\sigma)}$
  corresponding to $\vec v$.

  Suppose for contradiction that for every $\sigma$, the set
  $\CC^{(\sigma)}$ of distinct column vectors appearing in
  $M^{(\sigma)}$ has cardinality at most $c-1$. Then there are at most
  $(c-1)^{s-1}$ vectors $\sum_{\sigma=1}^s\sigma
  \vec v^{(\sigma)}$ with $\vec v^{(\sigma)}\in\CC^{(\sigma)}$ that have $1$s in
  mutually disjoint positions, because once we
  have chosen $\vec v^{(1)},\ldots,\vec v^{(s-1)}$,
  there is only one choice left for $\vec v^{(s)}$.
  This contradicts our assumption that all columns of $M$ are distinct.
\end{proof}

In the following, for a set \(K \subseteq \NN\), we let
\[
  \Tp(\Lambda, q, K^m, \ell) \deff \bigcup_{\vec{k} \in K^m} \Tp(\Lambda, q, \vec{k}, \ell).
\]

\begin{proof}[Proof of \cref{lem:types-bound}]
  For $0\le p\le k$, let $\Ca^{(p)}$ be the subsequence of $\Ca$ consisting
  of all entries of length $p$, and let $m_p \deff \bigl\lvert\Ca^{(p)}\bigr\rvert$.
  Every type $\btheta\in\Tp(\Lambda,q,\vec k,\ell)$ projects to a
  type $\btheta^{(p)}\in\Tp(\Lambda,q,\{p\}^{m_p},\ell)$.

  Suppose for contradiction that more than $b \deff (k{+}1) \cdot g(d,m)^t$ types
  in $\Tp(\Lambda,q,\vec k, \ell)$  are
  realizable over $\Ca$ in $G$.

  If $\btheta$ is
  realizable over $\Ca$, then for each $p$, the projection $\btheta^{(p)}$ is
  realizable over $\Ca^{(p)}$.
  Moreover, if types $\btheta,\btheta'$ are distinct, then there is at
  least one $p$ such that $\btheta_p$ and $\btheta_p'$ are
  distinct. Thus, there is a $p$ such that at least
  $g(d,m)^t+1$ types in $\Tp(\Lambda,q,\{p\}^{m_p},\ell)$ are realizable over $\Ca^{(p)}$.

  Suppose $\Ca^{(p)}=(\vec v_1,\ldots,\vec v_{m_p})$,
  where $\vec{v}_i\in (V(G))^p$.
  Let $\btheta_1,\ldots,\btheta_n\in\Tp(\Lambda,q,\{p\}^{m_p},\ell)$
  be distinct types
  and $\vec w_1,\ldots,\vec w_n\in (V(G))^\ell$ for some \(n \geq g(d,m)^t + 1\)
  such that $\tp_q^G(\Ca^{(p)},\vec w_j)=\btheta_j$,
  that is, the types \(\btheta_1, \dots, \btheta_r\) are realizable over \(\Ca^{(p)}\).
  Suppose that
  $\btheta_j=(\theta_{1j},\ldots,\theta_{m_pj})$ with
  $\theta_{ij}\in\Tp(\Lambda,q,p,\ell)$. Then $\tp_q^G(\vec v_i,\vec w_j)=\theta_{ij}$. Let
  $\Sigma\coloneqq\Tp(\Lambda,q,p,\ell)$ and
  let $M\in\Sigma^{m_p\times n}$ be the matrix with entries
  $M_{ij}=\theta_{ij}$. Note that $|\Sigma|\le t$ and that the columns of $M$ are the vectors
  $\btheta_j$, which are mutually distinct. Since
  $n > g(d,m)^t\ge(g(d,m)+1-1)^{t-1}$, by \cref{lem:bonnet}, there is
  a $\theta\in\Sigma$ such that the matrix $M^{(\theta)}$ has at least
  $g(d,m)+1$ distinct columns.

  Observe that for
  \(i\in[m_p]\) and \(j\in[n]\), if $M^{(\theta)}_{ij}=1$, then $\tp_q^G(\vec
  v_i,\vec w_j)=\theta$, and if $M^{(\theta)}_{ij}=0$, then $\tp_q^G(\vec
  v_i,\vec w_j)\neq\theta$. Let
  \[
    \phi(\vec x,\vec y) \deff \bigwedge_{\psi(\vec x,\vec
      y)\in\theta}\psi(\vec x,\vec y).
  \]
  Then for $\vec v\in (V(G))^p$, $\vec w\in (V(G))^\ell$,
  \[
    G\models\phi(\vec v,\vec w)\iff\tp_q^G(\vec v,\vec w)=\theta.
  \]
  Let $X\coloneqq\{\vec v_1,\ldots, \vec v_{m_p}\}$. Then for all $j\in[n]$,
  \[
    X\cap\phi(G,\vec w_j)=\{\vec v_i\mid M^{(\theta)}_{ij}=1\}.
  \]
  Since the matrix $M^{(\theta)}$ has more than $g(d,m)\ge g(d,m_p)$ distinct
  columns, it follows that $|H_\phi(G,X)|>g(d,m_p)$,
  which contradicts \cref{lem:sauer-shelah}.
\end{proof}

\subsection{Compositionality}
\label{sub:tract-highdim-composition}
In the following, we show how the realizable tuples of types for an expression can be computed based on the realizable tuples of types of its subexpressions.

For the operators \(\eta_{P,Q}\) and \(\rho_{P,Q}\),
this is done by adding edges and relabeling.

\begin{lemma}\label{lem:eta-rho}
  Let $\Lambda$ be a label set, \(q, k \in \NN\), $P,Q\in\Lambda$, and \(\xi \in \set{\eta, \rho}\).
  For every type  $\theta\in\Tp(\Lambda,q,k)$, there is a set
  $T_{\xi,P,Q}(\theta)\subseteq\Tp(\Lambda,q,k)$ such that
  for every $\Lambda$-graph \(G'\), for $G \deff \xi_{P,Q}(G')$,
  and for every $\vec v\in (V(G))^k$,
  \[
    \tp_q^G(\vec v)=\theta\iff\tp_q^{G'}(\vec
    v)\in T_{\xi,P,Q}(\theta).
  \]
  Furthermore, the mappings
  \(T_{\eta,P,Q}\) and \(T_{\rho, P, Q}\) are computable.
\end{lemma}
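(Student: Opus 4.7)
The plan is to reduce the lemma to a standard MSO-interpretation argument: syntactically rewrite any formula evaluated on the output graph $G = \xi_{P,Q}(G')$ into an equivalent formula evaluated on the input graph $G'$, preserving the quantifier rank. First, for each operator I would define a syntactic translation $\phi \mapsto \phi^*$ on $\MSO(\Lambda, q, k, 0)$ formulas. For $\xi = \eta$, every atom $E(x,y)$ is replaced by
\[
  E(x,y) \vee \bigl(P(x) \wedge Q(y) \wedge x \neq y\bigr) \vee \bigl(P(y) \wedge Q(x) \wedge x \neq y\bigr),
\]
reflecting the edges inserted by $\eta_{P,Q}$, and every other atom is left untouched. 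For $\xi = \rho$, every atom $P(x)$ is replaced by $\bot$ and every atom $Q(x)$ by $Q(x) \vee P(x)$, mirroring that $P$ is emptied into $Q$. Since these substitutions are quantifier-free, $\qr(\phi^*) \leq q$, and after applying the normalization of \cref{sec-prelim-formulas} we may assume $\phi^* \in \MSO(\Lambda, q, k, 0)$.

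Next I would verify by a routine induction on the structure of $\phi$ that for every $\Lambda$-graph $G'$ and every $\vec v \in (V(G'))^k$,
\[
  \xi_{P,Q}(G') \models \phi(\vec v) \iff G' \models \phi^*(\vec v).
\]
The atomic cases are immediate from the definitions of $\eta_{P,Q}$ and $\rho_{P,Q}$, and the inductive cases (Boolean connectives, first-order quantifiers, and second-order quantifiers) are transparent because the substitution commutes with all of them and because $\xi_{P,Q}$ does not change the vertex set, so quantifiers range over the same domain in either structure.

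Given this semantic equivalence, the $q$-type of $\vec v$ in $G$ is fully determined by the $q$-type of $\vec v$ in $G'$. I would then set
\[
  T_{\xi, P, Q}(\theta) \deff \bigsetc{\tau \in \Tp(\Lambda, q, k)}
    {\text{for all } \phi \in \MSO(\Lambda, q, k, 0),\ \phi \in \theta \iff \phi^* \in \tau},
\]
and the biconditional in the lemma then unfolds to the chain $\phi \in \tp_q^G(\vec v) \iff G \models \phi(\vec v) \iff G' \models \phi^*(\vec v) \iff \phi^* \in \tp_q^{G'}(\vec v)$, which is exactly the membership condition for $T_{\xi, P, Q}(\theta)$. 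Computability of $T_{\xi, P, Q}$ follows because $\Tp(\Lambda, q, k)$ is finite and decidable (as noted in \cref{subsec-prelim-types}) and because $\phi \mapsto \phi^*$ is a primitive syntactic rewrite. There is no real obstacle to this argument; the only minor care needed is the re-normalization of $\phi^*$ so that it lies in the fixed finite set $\MSO(\Lambda, q, k, 0)$, which is guaranteed by the normalization procedure from \cref{sec-prelim-formulas}.
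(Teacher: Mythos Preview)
Your proposal is correct and is precisely the approach the paper invokes: the paper's proof consists of a single line citing the Theorem on Syntactic Interpretations, and what you have written is exactly that theorem spelled out for the two operations $\eta_{P,Q}$ and $\rho_{P,Q}$. Your atomic translations, the induction on formula structure, and the definition of $T_{\xi,P,Q}(\theta)$ are all standard and accurate, so there is nothing to add.
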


\begin{proof}
  This follows from the Theorem on Syntactic Interpretations \cite[Chapter VIII]{ebbinghaus_mathematicallogic_2021}.
\end{proof}

Note that, for \(\xi \in \set{\eta, \rho}\) and for distinct $\theta_1,\theta_2$, the sets
$T_{\xi,P,Q}(\theta_1)$ and $T_{\xi,P,Q}(\theta_2)$ are mutually
disjoint up to types that are not realizable. Thus, for every
realizable $\theta'\in \Tp(\Lambda,q,k)$, there is at
most one $\theta\in \Tp(\Lambda,q,k)$ such that $\theta'\in T_{\xi,P,Q}(\theta)$.
\begin{corollary}\label{cor:eta-rho}
  Let $\Lambda$ be a label set, \(q, \ell \in \NN\), $P,Q\in\Lambda$, \(\xi \in \set{\eta, \rho}\),
  $\vec k\in\Nat^m$,
  and $\btheta=(\theta_1,\ldots,\theta_m)\in\Tp(\Lambda,q,\vec k,\ell)$.
  Let \(G'\) be a \(\Lambda\)-graph, let \(G \deff \xi_{P,Q}(G')\),
  and $\Ca\in (V(G))^{\vec k}$.
  Then $\btheta$ is realizable over $\Ca$ in $G$ if and only there is a
  $\btheta'=(\theta_1',\ldots,\theta_m')\in\Tp(\Lambda,q,\vec{k},\ell)$
  such that $\btheta'$ is realizable over $\Ca$ in $G'$
  and $\theta_i'\in T_{\xi,P,Q}(\theta_i)$ for all $i\in[m]$,
  where $T_{\xi,P,Q}$ is the mapping of Lemma~\ref{lem:eta-rho}.
\end{corollary}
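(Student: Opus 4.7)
The plan is to deduce the corollary as an immediate application of \cref{lem:eta-rho}, observing that the key feature of the operators $\eta_{P,Q}$ and $\rho_{P,Q}$ is that they preserve the vertex set, so that any candidate parameter tuple $\vec w \in (V(G'))^\ell = (V(G))^\ell$ can be evaluated in both $G'$ and $G$, and the types it realizes in the two structures are linked pointwise by $T_{\xi,P,Q}$. The only coordination needed across the sequence is to use \emph{the same} witness $\vec w$ for every index $i \in [m]$.

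For the forward direction, I would assume $\btheta$ is realizable over $\Ca = (\vec v_1,\ldots,\vec v_m)$ in $G$, picking a witness $\vec w$ with $\tp_q^G(\vec v_i,\vec w) = \theta_i$ for every $i$. I then define $\theta_i' \deff \tp_q^{G'}(\vec v_i,\vec w)$ and let $\btheta' \deff (\theta_1',\ldots,\theta_m')$. Applying \cref{lem:eta-rho} (with the combined tuple $(\vec v_i,\vec w)$ viewed as a single tuple of length $k_i+\ell$, since the split between instance and parameter variables in $\Tp(\Lambda,q,k,\ell)$ is purely notational) to each index $i$ gives $\theta_i' \in T_{\xi,P,Q}(\theta_i)$, and $\vec w$ itself witnesses that $\btheta'$ is realizable over $\Ca$ in $G'$.

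For the backward direction, suppose $\btheta'$ is realizable over $\Ca$ in $G'$ via some witness $\vec w$, so $\tp_q^{G'}(\vec v_i,\vec w) = \theta_i'$ for all $i$. Since $\theta_i' \in T_{\xi,P,Q}(\theta_i)$, the right-to-left direction of \cref{lem:eta-rho} yields $\tp_q^{G}(\vec v_i,\vec w) = \theta_i$ for all $i$, so the same $\vec w$ witnesses realizability of $\btheta$ over $\Ca$ in $G$.

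There is no real obstacle here: the corollary is essentially a componentwise application of \cref{lem:eta-rho} combined with the observation that realizability of a sequence type is realizability via a single shared parameter tuple. The only minor point to verify is that \cref{lem:eta-rho}, stated there for $\Tp(\Lambda,q,k)$, applies to types in $\Tp(\Lambda,q,k_i,\ell)$ as well; this is immediate because the syntactic-interpretation argument underlying \cref{lem:eta-rho} treats all free individual variables uniformly, regardless of whether the paper groups them as instance or parameter variables.
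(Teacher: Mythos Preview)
Your proposal is correct and matches the paper's intent: the corollary is stated in the paper without an explicit proof, as it is meant to follow immediately from \cref{lem:eta-rho} by exactly the componentwise argument you give, using a single shared witness $\vec w$ (which is legitimate since $V(G)=V(G')$). Your observation that \cref{lem:eta-rho} applies to $\Tp(\Lambda,q,k_i,\ell)$ by viewing $(\vec v_i,\vec w)$ as a single tuple is also the right way to handle the instance/parameter split.
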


Next, we handle the operator \(\delta_P\) that deletes the label \(P\).

\begin{lemma}\label{lem:delta}
  Let $\Lambda$ be a label set, let \(P\) be a label with $P\not\in\Lambda$,
  and let \(q, k \in \NN\).
  For every type  $\theta\in\Tp(\Lambda,q,k)$, there is a set
  $T_{\delta,P}(\theta) \subseteq \Tp(\Lambda\cup\{P\},q,k)$ such that
  for every $(\Lambda \cup \set{P})$-graph \(G'\), for $G \deff \delta_{P}(G')$,
  and for every $\vec v\in (V(G))^k$,
  \[
    \tp_q^G(\vec v)=\theta\iff\tp_q^{G'}(\vec
    a)\in T_{\delta,P}(\theta).
  \]
  Furthermore, the mapping
  $T_{\delta,P}$ is
  computable.
\end{lemma}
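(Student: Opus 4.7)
The plan is to observe that $\delta_P$ is simply a vocabulary reduct: $G = \delta_P(G')$ has the same universe as $G'$ and the same interpretation of every relation symbol in $\{E\} \cup \Lambda$, only the label $P$ being removed. The first step is to note, by a straightforward induction on formula structure, that for every $\phi \in \MSO(\Lambda, q, k, 0)$ and every $\vec{v} \in (V(G))^k$ we have $G \models \phi(\vec{v}) \iff G' \models \phi(\vec{v})$, since $\phi$ does not mention $P$. This yields the key identity $\tp_q^G(\vec{v}) = \tp_q^{G'}(\vec{v}) \cap \MSO(\Lambda, q, k, 0)$.

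With this identity in hand, I would define
\[
    T_{\delta, P}(\theta) \deff \bigsetc{\theta' \in \Tp(\Lambda \cup \set{P}, q, k)}{\theta' \cap \MSO(\Lambda, q, k, 0) = \theta}.
\]
The equivalence $\tp_q^G(\vec{v}) = \theta \iff \tp_q^{G'}(\vec{v}) \in T_{\delta, P}(\theta)$ then follows immediately: both sides say that the restriction of $\tp_q^{G'}(\vec{v})$ to $\MSO(\Lambda, q, k, 0)$ equals $\theta$.

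For computability of $T_{\delta, P}$, I would appeal to the fact recalled in \cref{subsec-prelim-types} that the set $\Tp(\Lambda \cup \set{P}, q, k)$ is decidable. Since normalization makes the relevant formula sets finite, one can simply enumerate the candidate types $\theta' \in \Tp(\Lambda \cup \set{P}, q, k)$ and, for each, compare the finite set $\theta' \cap \MSO(\Lambda, q, k, 0)$ with $\theta$.

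There is no real obstacle here: the lemma is the degenerate case of the Syntactic Interpretations framework invoked for \cref{lem:eta-rho}, corresponding to a vocabulary restriction where the interpretation acts as the identity on the smaller vocabulary. The only point worth double-checking is the direction of the vocabulary inclusion (the types $\theta$ live over $\Lambda$ while the types in $T_{\delta, P}(\theta)$ live over $\Lambda \cup \set{P}$), which is handled cleanly by the intersection-based definition above.
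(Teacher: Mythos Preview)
Your proposal is correct and amounts to an explicit unpacking of the paper's one-line proof, which simply cites the Theorem on Syntactic Interpretations. As you yourself observe, $\delta_P$ is the degenerate instance of that framework where the interpretation is the identity on the smaller vocabulary, and your intersection-based definition of $T_{\delta,P}(\theta)$ makes this concrete.
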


\begin{proof}
  This follows from the Theorem on Syntactic Interpretations \cite{ebbinghaus_mathematicallogic_2021}.
\end{proof}

\begin{corollary}\label{cor:delta}
  Let $\Lambda$ be a label set, let \(P\) be a label with $P\not\in\Lambda$,
  let \(q, \ell \in \NN\),
  $\vec k\in\Nat^m$, and $\btheta=(\theta_1,\ldots,\theta_m)\in\Tp(\Lambda,q,\vec k,\ell)$.
  Let \(G'\) be a $(\Lambda \cup \set{P})$-graph, let $G \deff \delta_{P}(G')$,
  and $\Ca\in (V(G))^{\vec k}$.
  Then $\btheta$ is realizable over $\Ca$ in $G$ if and only there is a
  $\btheta'=(\theta_1',\ldots,\theta_m')\in\Tp(\Lambda\cup\{P\},q,\vec
  k,\ell)$ such that $\btheta'$ is realizable over $\Ca$ in $G'$ and $\theta_i'\in
  T_{\delta,P}(\theta_i)$ for all $i\in[m]$, where $T_{\delta,P}$ is
  the mapping of Lemma~\ref{lem:delta}.
\end{corollary}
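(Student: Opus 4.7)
The plan is to derive Corollary~\ref{cor:delta} as a direct componentwise application of Lemma~\ref{lem:delta}, leveraging the fact that the operator $\delta_P$ leaves the vertex set unchanged, i.e.\ $V(G) = V(G')$, so that any witness tuple $\vec w \in (V(G))^\ell$ can be used unchanged in both $G$ and $G'$.

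First I would handle the forward direction. Assume $\btheta$ is realizable over $\Ca = (\vec v_1,\ldots,\vec v_m)$ in $G$, with witness $\vec w$, so that $\tp_q^G(\vec v_i,\vec w) = \theta_i$ for every $i \in [m]$. Viewing $\vec w$ as a tuple in $(V(G'))^\ell$, I would simply define $\theta_i' \deff \tp_q^{G'}(\vec v_i,\vec w) \in \Tp(\Lambda \cup \{P\}, q, k_i, \ell)$ and form $\btheta' \deff (\theta_1',\ldots,\theta_m')$. The same $\vec w$ realizes $\btheta'$ over $\Ca$ in $G'$ by construction. Finally, applying Lemma~\ref{lem:delta} to the concatenated tuple $(\vec v_i,\vec w)$ (of length $k_i+\ell$) for each $i$ gives $\theta_i' \in T_{\delta,P}(\theta_i)$, as required.

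The backward direction is symmetric: given $\btheta'$ realized over $\Ca$ in $G'$ by some $\vec w$, the hypothesis $\theta_i' \in T_{\delta,P}(\theta_i)$ together with Lemma~\ref{lem:delta} applied to $(\vec v_i,\vec w)$ yields $\tp_q^G(\vec v_i,\vec w) = \theta_i$ for every $i$, so the same $\vec w$ witnesses realizability of $\btheta$ over $\Ca$ in $G$.

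There is no real technical obstacle; the only point that needs a line of care is that Lemma~\ref{lem:delta} is stated for types in $\Tp(\Lambda,q,k)$ of a single tuple, whereas the corollary deals with types in $\Tp(\Lambda,q,k_i,\ell)$ that have variables split into instance and parameter variables. I would dispose of this by identifying $\Tp(\Lambda,q,k_i,\ell)$ with $\Tp(\Lambda,q,k_i+\ell)$ via the obvious bijection induced by the variable split; the operator $\delta_P$ is indifferent to how the free variables of a formula are partitioned, so Lemma~\ref{lem:delta} and its associated mapping $T_{\delta,P}$ transport directly to this setting.
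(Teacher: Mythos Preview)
Your proposal is correct and is exactly the argument the paper has in mind: the corollary is stated without proof, as an immediate componentwise consequence of Lemma~\ref{lem:delta}, and your write-up spells out precisely that derivation, including the harmless identification of $\Tp(\Lambda,q,k_i,\ell)$ with $\Tp(\Lambda,q,k_i+\ell)$ needed to apply the lemma.
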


Finally, we handle the ordered-disjoint-union operator \(\uplus^<\).
For a $k$-tuple $\vec v=(a_1,\ldots,a_k)$ and a set $I\subseteq [k]$,
say, $I=\{i_1,\ldots,i_p\}$ with $i_1<i_2<\ldots<i_p$, we let
$\vec v_I\coloneqq(a_{i_1},\ldots,a_{i_p})$.

\begin{lemma}\label{lem:du}
  Let $\Lambda$ be a label set with $P_1^<,P_2^<\in\Lambda$.
  Let $q, k\in\Nat$ and $K_1\subseteq[k],K_2=[k]\setminus K_1$,
  and let $k_j\coloneqq |K_j|$ for $j=1,2$.

  For every type $\theta\in\Tp(\Lambda,q,k)$ and for $j=1,2$,
  there is a type
  $T_{\uplus,j,K_j}(\theta)\in\Tp(\Lambda\setminus\{P_1^<,P_2^<\},q,k_j)$,
  such that
  for every $\Lambda$-graph $G=G_1\uplus^<G_2$ and every $\vec
  v\in (V(G))^k$ with \(\vec v_{K_1}\in (V(G_1))^{k_1}\)
  and \(\vec v_{K_2}\in (V(G_2))^{k_2}\), we have
  \begin{align*}
    \tp_q^G(\vec v)=\theta \; \iff \ \;
    &\tp_q^{G_1}(\vec{v}_{K_1}) = T_{\uplus,1,K_1}(\theta)\text{ and }\\
    &\tp_q^{G_2}(\vec{v}_{K_2})= T_{\uplus,2,K_2}(\theta).
  \end{align*}
  Furthermore, for $j=1,2$, the mapping
  $T_{\uplus,j,K_j}$ is
  computable.
\end{lemma}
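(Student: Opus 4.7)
The proof follows the classical composition principle for \MSO on disjoint unions, adapted to the ordered variant $\uplus^<$. The key idea is to exploit the distinguishing labels $P_1^<$ and $P_2^<$, which are built into $\uplus^<$ and identify the two parts inside $G$, in order to express ``evaluation in $G_j$'' syntactically as ``relativisation to $P_j^<$ in $G$''.

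For each $\psi(\vec x_{K_j}) \in \MSO(\Lambda \setminus \{P_1^<, P_2^<\}, q, k_j, 0)$, I plan to define the relativisation $\psi^{(j)}$ by structural induction: atomic formulas are left unchanged; $\exists y\,\phi$ is replaced by $\exists y\,(P_j^<(y) \wedge \phi^{(j)})$ and dually for $\forall y\,\phi$; set quantifiers $\exists Y\,\phi$ become $\exists Y\,(\forall y(Y(y) \to P_j^<(y)) \wedge \phi^{(j)})$, dually for $\forall Y\,\phi$. This transformation preserves quantifier rank and commutes with negation, and after padding with trivial conjuncts $x_i = x_i$ for $i \notin K_j$ we may regard $\psi^{(j)}$ as an $\MSO(\Lambda, q, k, 0)$-formula over the full variable tuple $\vec x$. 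A routine induction on $\psi$ then yields, for every $G = G_1 \uplus^< G_2$ and $\vec v \in V(G)^k$ with $\vec v_{K_j} \in V(G_j)^{k_j}$ for $j=1,2$,
\[
  G \models \psi^{(j)}(\vec v) \iff G_j \models \psi(\vec v_{K_j}),
\]
since $P_j^<$ in $G$ picks out exactly $V(G_j)$ and the non-split labels and edges coincide between $G[V(G_j)]$ and $G_j$.

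I then set
\[
  T_{\uplus,j,K_j}(\theta) \deff \bigsetc{\psi \in \MSO(\Lambda \setminus \{P_1^<, P_2^<\}, q, k_j, 0)}{\psi^{(j)} \in \theta},
\]
normalising representatives as needed. Completeness of this set as a $(\Lambda \setminus \{P_1^<,P_2^<\}, q, k_j)$-type follows from completeness of $\theta$ together with $(\neg \psi)^{(j)} \equiv \neg(\psi^{(j)})$, and computability is immediate since both the map $\psi \mapsto \psi^{(j)}$ and the test $\psi^{(j)} \in \theta$ are effective. For the forward direction of the stated equivalence, if $\tp_q^G(\vec v) = \theta$ then the chain
\[
  \psi \in T_{\uplus,j,K_j}(\theta) \iff \psi^{(j)} \in \theta \iff G \models \psi^{(j)}(\vec v) \iff G_j \models \psi(\vec v_{K_j})
\]
gives $T_{\uplus,j,K_j}(\theta) = \tp_q^{G_j}(\vec v_{K_j})$ for both $j$.

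The main obstacle is the backward implication, where I will invoke the Feferman--Vaught composition theorem for \MSO (the individual-variable analogue of Lemma~\ref{lemma:sat-binary-trans}, cf.~\cite{makowsky_fefermanvaught_2004}): every $\phi \in \MSO(\Lambda, q, k, 0)$ is logically equivalent on graphs of the form $G_1 \uplus^< G_2$ with split tuples to a Boolean combination of relativisations of the form $\psi^{(1)}$ and $\chi^{(2)}$ with $\qr(\psi),\qr(\chi) \leq q$. Assuming $\tp_q^{G_j}(\vec v_{K_j}) = T_{\uplus,j,K_j}(\theta)$ for both $j$, the truth value of each such $\phi$ on $(G, \vec v)$ is thereby pinned down purely by which relativised formulas lie in $\theta$; applying the same equivalence at any realisation of $\theta$ as an ordered disjoint union then forces $\phi \in \theta \iff G \models \phi(\vec v)$, so $\tp_q^G(\vec v) = \theta$, completing the proof.
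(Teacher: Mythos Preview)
Your strategy of defining $T_{\uplus,j,K_j}$ via syntactic relativisation to $P_j^<$ and then invoking Feferman--Vaught for the converse is precisely what the paper's one-line proof (citing Feferman--Vaught and Ehrenfeucht--Fra\"{i}ss\'{e} games) is gesturing at, so the overall plan is right. Two concrete repairs are needed, though.

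First, your relativisation of set quantifiers does \emph{not} preserve quantifier rank: sending $\exists Y\,\phi$ to $\exists Y\bigl(\forall y\,(Y(y)\to P_j^<(y))\wedge\phi^{(j)}\bigr)$ injects an extra first-order quantifier, so for instance $\exists Y\,\alpha$ with $\alpha$ quantifier-free has rank~$1$ while its image has rank~$2$; then $\psi^{(j)}$ need not lie in $\MSO(\Lambda,q,k,0)$ and the test ``$\psi^{(j)}\in\theta$'' in your definition of $T_{\uplus,j,K_j}(\theta)$ is ill-posed. The standard fix is to set $(\exists Y\,\phi)^{(j)}\deff\exists Y\,\phi^{(j)}$ with no guard at all: since every individual variable occurring in $\phi^{(j)}$ is already confined to $P_j^<$ (either as a free variable lying in $V(G_j)$ or through a relativised individual quantifier), atoms $Y(z)$ only ever probe $Y\cap V(G_j)$, so the unrestricted set quantifier has the intended semantics and rank is genuinely preserved. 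Second, your backward implication rests on ``any realisation of $\theta$ as an ordered disjoint union'', but the lemma is stated for \emph{all} $\theta\in\Tp(\Lambda,q,k)$, and the paper's types are purely syntactic objects that need not be realisable (realisability is even undecidable, as the paper notes). For a $\theta$ with no such realisation there is no witness to compare against, and nothing in your construction prevents the pair $(T_1(\theta),T_2(\theta))$ from being jointly realisable in some $G_1,G_2$---which would make the right-hand side true while the left is false. The paper's one-liner does not address this corner case either; since Corollary~\ref{cor:du} and the algorithm of Lemma~\ref{lem:realisable} only ever feed realisable types through $T_{\uplus,j,K_j}$, the clean fix is to restrict your claim to realisable $\theta$, or to phrase the converse via the EF-game composition (winning strategies on the components combine to one on the union) rather than via a second realisation of $\theta$.
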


\begin{proof}
  This is a version of the Feferman--Vaught Theorem for \MSO \cite{makowsky_fefermanvaught_2004} that can easily be shown using Ehrenfeucht--Fra\"{i}ss\'{e} games.
\end{proof}

In the case of disjoint union, it is slightly more complicated to
compute the realizable types in $G=G_1\uplus^<G_2$ from the realizable
types in $G_1$ and $G_2$. We need some additional notation.
Let $V$ be a set and $W\subseteq V$. For a tuple $\vec
v=(v_1,\ldots,v_k)\in V^k$, we let $\vec v\cap W\coloneqq\vec v_I$ for
the set $I=\{i\in[k]\mid v_i\in W\}$. For a sequence $\Ca=(\vec
v_1,\ldots,\vec v_m)\in V^{\vec k}$ of tuples, we let
\[
  \Ca\cap W\coloneqq \big(\vec v_1\cap W,\ldots,\vec v_m\cap W\big).
\]
Note that even if all tuples in $\Ca$ have the same length, this is
not necessarily the case for $\Ca\cap W$; some tuples in
$\Ca\cap W$ may even be empty.

In the following corollary, we consider types in $\Tp(\Lambda,q,k,\ell)$
instead of types in $\Tp(\Lambda,q,k)$. Thus, we need to parameterize the mappings $T_{\uplus,j,\ldots}$
by pairs of sets $K_j\subseteq[k],L_j\subseteq[\ell]$: for
$j=1,2$, we obtain a mapping
$T_{\uplus,j,K_j,L_j} \colon \Tp(\Lambda,q,k)\to \Tp(\Lambda\setminus\{P_1^<,P_2^<\}, q, |K_j|, |L_j|)$.

\begin{corollary}\label{cor:du}
   Let $\Lambda$ be a label set with $P_1^<,P_2^<\in\Lambda$,
   let $q, \ell\in\Nat$,
   $\vec k=(k_1,\ldots,k_m)\in\Nat^m$, and let
   $\btheta=(\theta_1,\ldots,\theta_m)\in\Tp(\Lambda,q,\vec
   k,\ell)$. Let $G=G_1\uplus^<G_2$ be a \(\Lambda\)-graph and $\Ca=(\vec v_1,\ldots,\vec v_m)\in
   (V(G))^{\vec k}$. For every $j\in[2]$, $i\in[m]$, let
   $K_{ji}\subseteq[k_i]$ be such that $\vec v_i\cap V(G_j)=(\vec
   v_i)_{K_{ji}}$.

   Then $\btheta$ is realizable over $\Ca$ in $G$ if and
   only if there are $L_1,L_2\subseteq [\ell]$ such that
   $L_2=[\ell]\setminus L_1$ and for $j=1,2$,
   \[
     \btheta_j=\big(T_{\uplus,j,K_{j1},L_j}(\theta_{1}),\ldots,
     T_{\uplus,j,K_{jm},L_j}(\theta_{m})\big)
   \]
   is realizable over $\Ca\cap V(G_j)$ in $G_j$.
\end{corollary}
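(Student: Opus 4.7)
The plan is to reduce the corollary directly to Lemma~\ref{lem:du}, applied once for each $i\in[m]$ to the $(k_i{+}\ell)$-tuple obtained by concatenating $\vec v_i$ with a (proposed or constructed) witness $\vec w\in(V(G))^\ell$. The key observation is that, given $\vec w$, the positions where $\vec w$ lies in $V(G_1)$ vs.\ $V(G_2)$ are independent of $i$ and determined by a single partition $L_1\uplus L_2=[\ell]$, whereas the positions where $\vec v_i$ lies in $V(G_1)$ vs.\ $V(G_2)$ are already given by the $K_{ji}$. Hence Lemma~\ref{lem:du} applied to the pair $\bigl(\vec v_i,\vec w\bigr)$ with the position split $K_{ji}\cup\{k_i+\ell'\mid \ell'\in L_j\}$ produces exactly the type $T_{\uplus,j,K_{ji},L_j}(\theta_i)$ as the projection on side $j$.

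For the forward direction, I would start with a witness $\vec w\in (V(G))^\ell$ for $\btheta$ over $\Ca$ in $G$, define $L_j\deff\{\ell'\in[\ell]\mid w_{\ell'}\in V(G_j)\}$, and set $\vec w^{(j)}\deff \vec w_{L_j}$. For each $i\in[m]$, I apply Lemma~\ref{lem:du} to the tuple $(\vec v_i,\vec w)\in (V(G))^{k_i+\ell}$ with type $\theta_i$ and the position sets described above. The lemma yields
\[
  \tp_q^{G_j}\bigl((\vec v_i)_{K_{ji}},\vec w^{(j)}\bigr)
  \;=\; T_{\uplus,j,K_{ji},L_j}(\theta_i)
  \qquad\text{for } j=1,2.
\]
Ranging over $i$, this shows that $\vec w^{(j)}$ is a common witness for $\btheta_j$ over $\Ca\cap V(G_j)$ in $G_j$.

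For the backward direction, I would take the given partition $L_1\uplus L_2=[\ell]$ together with witnesses $\vec w^{(j)}\in (V(G_j))^{|L_j|}$ realizing $\btheta_j$ over $\Ca\cap V(G_j)$ in $G_j$. I then build $\vec w\in (V(G))^\ell$ by placing the entries of $\vec w^{(j)}$ into the positions of $L_j$ in increasing order, for $j=1,2$. For each $i\in[m]$, the tuple $(\vec v_i,\vec w)$ splits between $V(G_1)$ and $V(G_2)$ exactly according to $K_{ji}$ and $L_j$, so the ``if'' part of Lemma~\ref{lem:du} (applied to type $\theta_i$) gives $\tp_q^G(\vec v_i,\vec w)=\theta_i$. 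Therefore $\vec w$ witnesses that $\btheta$ is realizable over $\Ca$ in $G$.

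The only real obstacle is notational bookkeeping: making sure that the index sets $K_{ji}\cup\{k_i+\ell'\mid\ell'\in L_j\}$ passed to Lemma~\ref{lem:du} line up with the definition of $T_{\uplus,j,K_{ji},L_j}$, and that the reconstruction step in the backward direction places entries into $L_j$-positions in the correct order so that the projection $(\vec v_i,\vec w)\cap V(G_j)$ agrees with $\bigl((\vec v_i)_{K_{ji}},\vec w^{(j)}\bigr)$. Once this indexing is pinned down, both directions are immediate from Lemma~\ref{lem:du}.
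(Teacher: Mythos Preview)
Your proposal is correct and is precisely the intended argument: the paper states the corollary without proof, relying on the remark preceding it that the mappings $T_{\uplus,j,K_j,L_j}$ are obtained from Lemma~\ref{lem:du} by treating a $(\Lambda,q,k_i,\ell)$-type as a $(\Lambda,q,k_i{+}\ell)$-type and splitting the positions into $K_{ji}$ and $L_j$. Your componentwise application of Lemma~\ref{lem:du}, with a single partition $L_1\uplus L_2=[\ell]$ determined by the witness $\vec w$, is exactly this reduction, and the bookkeeping you flag is the only thing to check.
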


\subsection{Computing the Realizable Types}
\label{sub:tract-highdim-algorithm}

For the proof of \cref{theo:compute-consist}, we use the following result
that allows us to compute the realizable types of an expression.

\begin{lemma}
  \label{lem:realisable}
  There is a computable function $f \colon \Nat^4\to\Nat$ and an algorithm that, given
  $c,q,k,\ell,m\in\Nat$, a vector $\vec
  k=(k_1,\ldots,k_m)\in\Nat^m$ with $k_i\le k$ for all $i\in[m]$, a
  $\Lambda$-expression $\Xi$ with $|\Lambda|\le c$, and
  a sequence $\Ca\in (V_\Xi)^{\vec k}$,
  computes the set of all $\btheta \in \Tp(\Lambda,q,\vec k,\ell)$
  that are realizable over $\Ca$ in $G_{\Xi}$ in time
  \[
    \bigO\Big((m+1)^{f(c,q,k,\ell)}\cdot|\Xi|\Big).
  \]
\end{lemma}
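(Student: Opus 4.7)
The plan is to process the expression tree of $\Xi$ bottom-up and compute, for every subexpression $\Xi'$ and every $\ell' \in \{0,\ldots,\ell\}$, the set $\mathcal{R}(\Xi',\ell')$ of all $\btheta \in \Tp(\Lambda_{\Xi'},q,\vec{k}_{\Xi'},\ell')$ that are realizable over the projected sequence $\Ca \cap V_{\Xi'}$ in $G_{\Xi'}$. Here $\vec{k}_{\Xi'}\in\Nat^m$ records, for each training tuple $\vec{v}_i$ in $\Ca$, how many of its components lie in $V_{\Xi'}$. The desired output is then $\mathcal{R}(\Xi,\ell)$ at the root.

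At a base expression describing a single vertex $v$, each projected tuple is either empty or consists entirely of copies of $v$, so the realizable types can be enumerated directly in time depending only on $c,q,k,\ell$. At a unary node with operation $\eta_{P,Q}$, $\rho_{P,Q}$, or $\delta_P$, I use Corollaries~\ref{cor:eta-rho} and \ref{cor:delta} to map each $\btheta' \in \mathcal{R}(\Xi_1,\ell')$ in the child to the unique realizable type at the parent. At an ordered disjoint union $\Xi' = \Xi_1 \uplus^< \Xi_2$, I apply Corollary~\ref{cor:du}: for every choice of $L_1 \subseteq [\ell']$ (with $L_2 = [\ell']\setminus L_1$) and every pair $(\btheta_1,\btheta_2) \in \mathcal{R}(\Xi_1,|L_1|) \times \mathcal{R}(\Xi_2,|L_2|)$, I invert the computable mappings $T_{\uplus,j,K_{ji},L_j}$ componentwise to reconstruct the unique $\btheta \in \Tp(\Lambda_{\Xi'},q,\vec{k}_{\Xi'},\ell')$ whose projections under the partition of $\Ca$ induced by the union are $\btheta_1$ and $\btheta_2$, and add this $\btheta$ to $\mathcal{R}(\Xi',\ell')$. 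By the Feferman–Vaught theorem for $\MSO$, this combined type is uniquely determined by its two projections together with the decomposition, so the inversion is well-defined and takes constant time per pair.

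The crux of the running-time analysis is to show that $|\mathcal{R}(\Xi',\ell')|$ is polynomially bounded in $m$, with exponent depending only on $c,q,k,\ell$. Each intermediate labeled graph $G_{\Xi'}$ has clique-width at most $c$ and uses at most $c+2$ labels (accounting for the auxiliary $P_1^<,P_2^<$ introduced by $\uplus^<$), so Lemma~\ref{lem:gt} provides a uniform VC-dimension bound $d=d(c,q,k,\ell)$ for every $\MSO(\Lambda_{\Xi'},q,k',\ell')$ formula on $G_{\Xi'}$. Lemma~\ref{lem:types-bound} then yields $|\mathcal{R}(\Xi',\ell')| \le (k{+}1)\cdot g(d,m)^t$ with $t$ depending only on $c,q,k,\ell$, which is $\bigO((m+1)^{\alpha})$ for some $\alpha=\alpha(c,q,k,\ell)$. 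The cost at each node is then dominated by the disjoint-union step, which takes time $\bigO((m+1)^{2\alpha}\cdot 2^\ell\cdot m)$; summing over the $\bigO(|\Xi|)$ nodes yields the stated bound.

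The main obstacle is precisely this size bound: without it, a single disjoint-union composition could a priori generate an exponential (in $m$) number of type tuples, and the whole dynamic-programming strategy would break down. The compositionality lemmas themselves are classical Feferman–Vaught-style statements; what makes the algorithm polynomial in $m$ is the nonobvious interplay with the Sauer–Shelah-based bound of Lemma~\ref{lem:types-bound}, which is in turn enabled by the Grohe–Turán VC-dimension bound for $\MSO$ on graphs of bounded clique-width. The remaining bookkeeping—tracking the partitions $K_{1i},K_{2i}$ of each tuple $\vec{v}_i$ induced by $\Ca$ at each node and properly handling the insertion and deletion of the auxiliary labels $P_1^<,P_2^<$ around $\uplus^<$ nodes—is routine and does not affect the asymptotic bound.
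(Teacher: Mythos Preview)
Your proposal is correct and follows essentially the same bottom-up dynamic-programming strategy as the paper: compute, for every subexpression $\Xi'$ and every $\ell'\le\ell$, the set $\CR_{\ell'}(\Xi')$ of realizable sequence types over $\Ca\cap V_{\Xi'}$, using the compositionality lemmas at each node and bounding $|\CR_{\ell'}(\Xi')|$ polynomially in $m$ via Lemmas~\ref{lem:gt} and~\ref{lem:types-bound}. The only notable difference is in the $\uplus^<$ step: you iterate over \emph{pairs} $(\btheta_1,\btheta_2)$ and invoke Feferman--Vaught to recover the unique combined $\btheta$, incurring a cost quadratic in the size bound, whereas the paper iterates only over $\btheta_1$, recovers $\btheta$ from it alone, and then merely tests membership of the induced $\btheta_2$ in $\CR_{|L_2|}(\Xi_2)$, claiming a linear cost. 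Your version is arguably cleaner (the uniqueness of $\btheta$ given the pair is immediate from Feferman--Vaught), and the extra factor is absorbed into $f(c,q,k,\ell)$, so both yield the stated bound.
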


\begin{proof}
  As argued in \cref{sec:prelim}, we may assume that $\Xi$ only contains
  ordered-disjoint-union operators and no plain disjoint-union
  operators.

  For every subexpression $\Xi'$, we let $\Lambda_{\Xi'}$ be the set of
  labels of $\Xi'$, that is, the set of unary relation
  symbols such that $G_{\Xi'}$ is a $\Lambda_{\Xi'}$-graph. Moreover,
  let $\Ca_{\Xi'}\coloneqq\Ca\cap V_{\Xi'}$, and let
  $\vec k_{\Xi'}\subseteq\Nat^m$ such that
  $\Ca_{\Xi'}\in (V_{\Xi'})^{\vec k_{\Xi'}}$.

  We inductively construct, for every subexpression $\Xi'$ of $\Xi$
  and $0\le\ell'\le\ell$, the set $ \CR_{\ell'}(\Xi')$ of all types
  $\btheta\in \Tp(\Lambda_{\Xi'},q,\vec k_{\Xi'},\ell')$ that are
  realizable over $\Ca_{\Xi'}$ in $G_{\Xi'}$.
\begin{description}
    \item[Case 1: $\Xi'$ is a base expression.]
      In this case, for each $\ell' \in [\ell]$, we can construct $\CR_{\ell'}(\Xi')$ by brute force in time
      $f_1(c,q,k,\ell) \cdot m$ for a suitable (computable) function $f_1$.
      Let \((k_1', \dots, k_m') \deff \vec{k}_{\Xi'}\).
      We compute $\theta_i$ by iterating over all formulas $\phi$ with $k_i' + \ell'$ free variables and evaluating $\phi$ on the single vertex graph $G_{\Xi'}$.

    \item[Case 2: $\Xi'=\eta_{P,Q}(\Xi'')$.]
      Let $0\le\ell'\le\ell$, and let $\vec k'=(k_1',\ldots,k_m')\coloneqq\vec k_{\Xi'}=\vec k_{\Xi''}$.
      As we show in \cref{lem:eta-rho,cor:eta-rho},
      there is a computable mapping \(T_{\eta,P,Q} \colon \Tp(\Lambda_{\Xi'}) \to 2^{\Tp(\Lambda_{\Xi''})}\) such that
      $\CR_{\ell'}(\Xi')$ is the set of all $\btheta=(\theta_1,\ldots,\theta_m)\in
      \Tp(\Lambda_{\Xi'}, q, \vec k',\ell')$ such that there is a
      $\btheta'=(\theta_1',\ldots,\theta_m')\in\CR(\Xi'')$ with $\theta_i'\in
      T_{\eta,P,Q}(\theta_i)$ for all $i\in[m]$.
      Moreover, for every realizable $\theta'\in \Tp(\Lambda_{\Xi'})$,
      we guarantee that there is at most one type $\theta\in \Tp(\Lambda_{\Xi''})$ such that $\theta'\in T_{\eta,P,Q}(\theta)$.
      To compute the set $\CR_{\ell'}(\Xi')$, we
      step through all
      $\btheta'\in\CR(\Xi'')$. For each such $\btheta'=(\theta_1',\ldots,\theta_m')$,
      for all $i\in[m]$,
      we compute the unique $\theta_i\in\Tp(\Lambda_{\Xi'}, q, k_i', \ell')$ such that
      $\theta_i'\in T_{\eta,P,Q}(\theta_i)$. If for some $i\in[m]$, no such
      $\theta_i$ exists, we move on to the next $\btheta'$. Otherwise,
      we add $\btheta=(\theta_1,\ldots,\theta_m)$ to $\CR(\Xi')$.

    \item[Case 3: $\Xi'=\rho_{P,Q}(\Xi'')$.]
      Analogous to Case~2, again using \cref{lem:eta-rho,cor:eta-rho}.

    \item[Case 4: $\Xi'=\delta_{P}(\Xi'')$.]
      Analogous to Case~2, using \cref{lem:delta,cor:delta}.

    \item[Case 5: $\Xi'=\Xi_1\uplus^<\Xi_2$.]
      Let $\Lambda'\coloneqq\Lambda_{\Xi'}$, $V'\coloneqq V_{\Xi'}$, $\vec k'=(k_1',\ldots,k_m')\coloneqq\vec
      k_{\Xi'}$, and  $\Ca' \deff (\vec v_1',\ldots,\vec v_m') \deff \Ca_{\Xi'}=\Ca\cap V'$.
      For $j=1,2$, let $\Lambda_j\coloneqq\Lambda_{\Xi_j}$,
      $V_j\coloneqq V_{\Xi_j}$, $\vec
      k_j \deff (k_{j1},\ldots,k_{jm})\coloneqq\vec k_{\Xi_j}$,
      and for all $i\in[m]$, let
      $K_{ji}\subseteq[k_{ji}]$ such that $\vec v_i'\cap V_j=(\vec v_i)_{K_{ji}}$. Let $0\le
      \ell'\le\ell$.

      For
      all $L_1,L_2\subseteq[\ell']$ such that $L_2=[\ell']\setminus
      L_1$, we let $\CR_{L_1,L_2}$ be the set of all
      $\btheta=(\theta_1,\ldots,\theta_m)\in\Tp(\Lambda',q,\vec
      k',\ell')$ such that for $j=1,2$, we have
      \[
        \qquad\quad
        \btheta_j \deff \big(T_{\uplus,j,L_j,K_{j1}}(\theta_{1}),\ldots,
        T_{\uplus,j,L_j,K_{jm}}(\theta_{m})\big)\in\CR_{|L_j|}(\Xi_j),
      \]
      where \(T_{\uplus,j,L_j,K_{ji}} \colon \Tp(\Lambda') \to \Tp(\Lambda_j)\) for \(i \in [m]\) is a computable mapping
      that we give in \cref{lem:du}.
      Then, as we show in \cref{cor:du},
      \[
        \CR_{\ell'}(\Xi')=\bigcup_{\substack{L_1\subseteq[\ell']\\L_2=[\ell']\setminus
            L_1}}\CR_{L_1,L_2}.
      \]
      To compute $\CR_{L_1,L_2}$, we iterate over all
      $\btheta_1=(\theta_{11},\ldots,\theta_{1m})\in\CR_{|L_1|}(\Xi_1)$. For
      all $i\in[m]$ we compute the unique $\theta_i\in\Tp(\Lambda',q,k_i',\ell')$ such that
      $T_{\uplus,1,L_1,K_{1i}}(\theta_i)=\theta_{1i}$. If, for some $i\in[m]$, no such
      $\theta_i$ exists, then we move on to the next $\btheta_1$. Otherwise,
      we compute
      \[
        \btheta_2=\big(T_{\uplus,2,L_2,K_{21}}(\theta_{1}),\ldots,
        T_{\uplus,2,L_2,K_{2m}}(\theta_{m})\big)
      \]
      and check if $\btheta_2\in\CR_{|L_2|}(\Xi_2)$. If it is, we add
      $\btheta$ to $\CR_{L_1,L_2}$. Otherwise, we move on to the next $\btheta_1$.
  \end{description}
  This completes the description of our algorithm. To analyze the
  running time, let
  \[
    r\coloneqq\max_{\Xi'}|\CR_{\Xi'}|,
  \]
  where $\Xi'$ ranges over all subexpressions of $\Xi$. By
  \cref{lem:gt,lem:types-bound},
  there is a computable function
  $f_2 \colon \Nat^4\to\Nat$ such that
  \[
    r\le (m+1)^{f_2(c,q,k,\ell)}.
  \]
  The running time of each step of the constructions can be bounded by
  $f_3(c,q,k,\ell)\cdot r$ for a suitable computable function $f_3$. We need to
  make $|\Xi|$ steps. Thus,
  overall, we obtain the desired running time.
\end{proof}

Finally, we can finish the proof of \cref{theo:compute-consist}
by checking whether the realizable types match with the given formula \(\phi\).

\begin{proof}[Proof of \cref{theo:compute-consist}]
    We assume that the input graph \(G\) is given as a
    $c$-expression. To check whether
    \((G, \Ca, \sigma)\)
    is $\phi$-consistent, we compute the set $\CR$ of
    all $\btheta \in \Tp(\Lambda,q,\Bar{k},\ell)$
    that are realizable over \(\Ca\) in \(G\),
    using \cref{lem:realisable}.
    Then, for each \(\btheta = (\theta_1, \ldots, \theta_m ) \in \CR\)  we check if \(\phi \in \theta_i \Longleftrightarrow \sigma(i) = +\).
    If we find such a \(\btheta\), then \((G, \Ca, \sigma)\) is \(\phi\)-consistent; otherwise it is not.
\end{proof}
 \section{Hardness of Checking Consistency in Higher Dimensions}
\label{sec:high-dim-hardness}

The following result shows,
under the assumption \(\FPT \neq \Wone\),
that \cref{theo:compute-consist}
can not be improved to an fpt-result.

\begin{restatable}{theorem}{highDimHardness}
  \label{thm:high-dim-hardness}
  There is a \(q \in \NN\) such that the following parameterized problem
  is \(\Wone\)-hard.
\end{restatable}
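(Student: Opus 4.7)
My plan is to establish $\Wone$-hardness by an fpt-reduction from $p$-\problemFont{Clique}, the classical $\Wone$-complete parameterized clique problem~\cite{flum_parameterized_2006}. The reduction works with the fixed quantifier rank $q = 0$.

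Given an instance $(H, k_C)$ of $p$-\problemFont{Clique}, I would construct an instance of the consistency problem from \cref{theo:compute-consist} as follows. The background graph $G$ is the edgeless graph on vertex set $V(H)$ with no unary labels, so $\cw(G) \in \bigO(1)$ and $|\Lambda| = 0$, and a constant-width expression describing $G$ is produced by iterating disjoint unions of base graphs. The dimension is $k = 2$, the parameter length is $\ell = k_C$, and the formula is the quantifier-free
\[
  \phi(x_1, x_2; y_1, \ldots, y_{k_C}) \deff \bigvee_{\substack{p, p' \in [k_C] \\ p \neq p'}} (x_1 = y_p \land x_2 = y_{p'}).
\]
The training sequence $\Ca$ consists entirely of negative examples: include $((u, v), -)$ for every ordered pair $(u, v) \in V(H)^2$ with either $u = v$ or $(u, v) \notin E(H)$.

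The next step is to verify that a parameter tuple $\vec{w} = (w_1, \ldots, w_{k_C})$ is a $\phi$-witness if and only if $\{w_1, \ldots, w_{k_C}\}$ induces a $k_C$-clique in $H$. The self-pair examples $((v, v), -)$ force the $w_p$ to be pairwise distinct, since otherwise $w_p = w_{p'} = v$ for some $p \neq p'$ would satisfy $\phi((v,v); \vec{w})$. The remaining non-edge examples $((u,v), -)$ then rule out any ordered pair of positions that carries a non-edge of $H$. Conversely, enumerating the vertices of any $k_C$-clique in $H$ as $\vec{w}$ yields a $\phi$-witness.

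Finally, I would verify the parameter bounds: the reduction runs in time polynomial in $|V(H)| + k_C$ and produces an instance with $\cw(G) + |\Lambda| + q + k + \ell \in \bigO(k_C)$, which is bounded by a function of the input parameter $k_C$. This makes it a valid fpt-reduction, establishing $\Wone$-hardness. The main obstacle is that one must keep the quantifier rank fixed at $q = 0$ while still enforcing both pairwise distinctness of the parameters $w_1, \ldots, w_{k_C}$ and the clique condition; without distinctness, a single vertex repeated $k_C$ times would trivially be a $\phi$-witness. Including the self-pair negative examples together with the non-edge examples encodes both constraints purely at the propositional level, so no quantifiers are required.
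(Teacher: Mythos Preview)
Your proposal is correct and takes a genuinely different route from the paper. The paper reduces from $\problemFont{WSat(2\text{-}CNF)}$: it builds a forest in which vertex degrees ($3$ versus $2$) distinguish positive from negated literals, uses only \emph{positive} examples, and needs a formula of quantifier rank roughly~$4$ to detect these degrees. Your reduction from $p$-\problemFont{Clique} is more elementary: the background graph is edgeless (hence carries no information), all structure is pushed into the training sequence of \emph{negative} examples, and the formula is quantifier-free. This actually yields the sharper statement that hardness already holds for $q=0$, $|\Lambda|=0$, and clique-width~$1$. The paper's construction, in turn, shows hardness even when $\sigma$ is constant, a minor orthogonal strengthening. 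Both are clean fpt-reductions; yours trades a slightly larger formula (size $O(\ell^2)$ rather than $O(\ell)$) for the complete absence of quantifiers and of graph structure. One small wording fix: you refer to ``the consistency problem from \cref{theo:compute-consist}'', but you mean the problem stated in \cref{thm:high-dim-hardness}.
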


  \problemDefParaNoName{graph \(G\) of clique-width at most \(2\),
    sequence \(\Ca = (\vec{a}_1, \dots, \vec{a}_m) \in \bigl((V(G))^2\bigr)^m\),
    function \(\sigma \colon [m] \to \set{+1, -1}\),
    formula \(\phi(\vec{x}, \vec{y}) \in \MSO(\Lambda, q, 2, \ell, 0)\)}
  {\(\ell\)}
  {decide if \((G, \Ca, \sigma)\) is \(\phi\)-consistent.}

\begin{proof}
  We prove this by a reduction from the \(\Wone\)-complete
  \emph{weighted satisfiability problem}
  for Boolean formulas in \(2\)-conjunctive normal form
  \cite{flum_parameterized_2006}.
  The \emph{weight} of an assignment to a set of Boolean variables
  is the number of variables set to \(1\).

  \problemDefPara{\(\problemFont{WSat(2-CNF)}\)}
  {Boolean formula \(\Phi\) in \(2\)-CNF}
  {\(\ell\)}
  {decide if \(\Phi\) has a satisfying assignment of weight \(\ell\).}

  Given a \(2\)-CNF formula \(\Phi = \Land_{i=1}^m (L_{i,1} \lor L_{i,2})\)
  in the variables \(\set{X_1, \dots, X_n}\) and \(\ell \in \NN\),
  we construct an instance of the consistency problem from \cref{thm:high-dim-hardness} as follows.

  We let \(G\) be the graph with vertex set
  \[V(G) \deff \setc{X_i, \neg X_i, Y_{i,1}, Y_{i,2}, Z_i}{i \in [n]}\]
  and edge set
  \[E(G) \deff \setc{(X_i, \neg X_i), (X_i, Y_{i,1}), (X_i, Y_{i,2}), (\neg X_i, Z_i)}{i \in [n]}.\]
  The graph \(G\) is a forest of clique-width at most \(2\),
  where the \(X_i\) have degree \(3\),
  the \(\neg X_i\) have degree \(2\),
  and all other nodes have degree \(1\).

  We set \(\Ca \deff (\vec{a}_1, \dots, \vec{a}_m) \in \bigl((V(G))^2\bigr)^m\)
  with \(\vec{a}_i = (L_{i,1}, L_{i,2})\),
  \(\sigma(i) \deff +1\) for all \(i \in [m]\),
  and
  \begin{align*}
    \phi(x_1, x_2,& y_1, \dots, y_\ell) \deff
    \Land_{i=1}^\ell \psi_{pos}(y_i) \land \Land_{i \neq j} y_i \neq y_j\\
    &\land \Lor_{i=1}^2\Biggl(\Lor_{j=1}^\ell x_i = y_j \lor \Bigl(\psi_{neg}(x_i)
  \land \Land_{j=1}^\ell \neg E(y_j, x_i)\Bigr)\Biggr)
  \end{align*}
  for \(\psi_{pos}(x) \deff \deg_{=3}(x)\) and
  \(\psi_{neg}(x) \deff \deg_{=2}(x)\) with
  \(\deg_{=k}(x) \deff \deg_{\geq k}(x) \land \neg \deg_{\geq k+1}(x)\)
  and \[\deg_{\geq k}(x) \deff \exists y_1 \cdots \exists y_k \Bigl(\Land_{i \neq j} y_i \neq y_j
  \land \Land_{i=1}^k E(x,y_i)\Bigl).\]

  For every tuple \(\vec{b} \in \set{X_1, \dots, X_n}^\ell\),
  we define an assignment \(\beta_{\vec{b}} \colon \set{X_1, \dots, X_n} \to \set{0,1}\)
  by \(\beta_{\vec{b}}(X_i) = 1\) if \(b_j = X_i\) for some \(j \in [\ell]\)
  and \(\beta_{\vec{b}}(X_i) = 0\) otherwise.
  Note that if the \(b_j\) are mutually distinct,
  then the weight of this assignment is exactly \(\ell\).
  Moreover, for all \(\vec{b} \in (V(G))^\ell\) and all \(i \in [m]\), we have
  \begin{align*}
    G \models \phi(\vec{a}_i, \vec{b}) \iff
    \ & \vec{b} \in \set{X_1, \dots, X_n}^\ell
    \text{ and the entries of } \vec{b} \text{ are}\\
    &
    \text{mutually distinct and } \beta_{\vec{b}} \text{ satisfies } L_{i,1} \lor L_{i,2}.
  \end{align*}
  Thus, \((G, \Ca, \sigma)\) is \(\phi\)-consistent if and only if
  \(\Phi\) has a satisfying assignment of weight \(\ell\).
\end{proof}
 \section{Conclusion}
\label{sec:conclusion}

Just like model checking and the associated counting and enumeration problems,
the learning problem we study here is a natural algorithmic problem for logics on finite
structures. All these problems are related, but each has its own challenges
requiring different techniques. Where model checking and enumeration
are motivated by automated verification and database systems,
we view our work as part of a descriptive complexity theory of machine learning~\cite{van_bergerem_thesis_2023}.

The first problem we studied is \uMSOLearn,
where the instances to classify consist of single vertices,
and we extended the previous fixed-parameter tractability results
for strings and trees
\cite{grohe_learning_2017-1,grienenberger_learning_2019}
to (labeled) graphs of bounded clique-width.
Moreover, on general graphs, we showed that the problem is hard for the complexity class para-NP.

For \(\MSO\)-learning problems in higher dimensions,
we presented two different approaches that yield tractability results
on graphs of bounded clique-width.
For the agnostic PAC-learning problem \PacMSOLearn,
we described a fixed-parameter tractable learning algorithm.
Furthermore, in the consistent-learning setting for higher dimensions,
we gave an algorithm that solves the learning problem
and is fixed-parameter tractable in the size of the input graph.
However, the algorithm is not fixed-parameter tractable in the size of the training sequence,
and we showed that this is optimal.

In the learning problems considered so far,
hypotheses are built using \MSO formulas and tuples of vertices as parameters.
We think that the algorithms presented in this paper for the \(1\)-dimensional case could also be extended
to hypothesis classes that allow tuples of sets as parameters.
Finally, utilizing the full power of \MSO,
one could also consider a learning problem where,
instead of classifying tuples of vertices,
we are interested in classifying sets of vertices.
That is, for a graph \(G\),
we are given labeled subsets of \(V(G)\) and want to find a hypothesis
\(h \colon 2^{V(G)} \to \set{+,-}\) that is consistent with the given examples.
It is easy to see that the techniques used in our hardness result
also apply to this modified problem, proving that it is para-NP-hard.
However, it remains open whether our tractability results
could also be lifted to this version of the problem.
 
\bibliography{main}

\end{document}